\pdfoutput=1
\documentclass[11pt,a4paper]{article}
\usepackage[margin=1in]{geometry}

\newcommand{\arxiv}[1]{#1}
\newcommand{\conference}[1]{}

\makeatletter
\arxiv{
  
}
\makeatother

\usepackage{cite}

\usepackage[utf8]{inputenc}
\usepackage{graphicx}
\graphicspath{{figures/}}
\usepackage{amsmath,amssymb,amsfonts}
\usepackage{mathtools}
\usepackage{amsthm}
\usepackage{bm}
\usepackage{mathrsfs}
\usepackage{textcomp}
\usepackage{xcolor, colortbl}
\usepackage{comment}
\usepackage{pifont}
\usepackage{enumitem}
\usepackage{fontawesome5}

\allowdisplaybreaks

\let\oldhat\hat
\renewcommand{\hat}[1]{%
  \ifx#1f
    \oldhat{x}%
  \else
    \oldhat{#1}%
  \fi
}

\newcommand{\cmmnt}[1]{}

\usepackage[colorlinks=true, linkcolor=blue, citecolor=blue, urlcolor=blue]{hyperref}
\usepackage{cleveref}

\crefname{section}{Appendix}{appendices}
\Crefname{section}{Appendix}{Appendices}

\usepackage[noend]{algpseudocode}
\usepackage{algorithm}
\usepackage{algorithmicx}

\usepackage{tikz}
\usetikzlibrary{positioning}
\usepackage{pgfplots}
\usepgfplotslibrary{fillbetween}
\pgfplotsset{compat=1.18}
\usetikzlibrary{calc, arrows.meta, shapes, decorations.pathreplacing, decorations.pathmorphing}
\usepackage{subcaption}

\newtheorem{theorem}{Theorem}

\newtheorem{lemma}{Lemma}

\newtheorem{assumption}{Assumption}
\newtheorem{proposition}{Proposition}

\makeatletter
\AtBeginDocument{%
  \def\url#1{}%
  \def\online#1{}%
  \def\IEEEurl#1#2{}%
  \def\IEEEurl@#1#2{}%
}
\makeatother

\makeatletter
\AtBeginDocument{
  \providecommand{\bstctlcite}[1]{}
  
  \def\IEEEurl#1#2{}

}
\def\online#1{}
\makeatother

\title{Conformal Prediction Regions for Continuous-Time Trajectories under Random Sampling}
\author{
  Joaquin Alvarez\footnote{Department of Computer Science, University of Southern California}, 
  Matteo Sesia\footnote{Departments of Data Sciences and Operations, and of Computer Science, University of Southern California}, 
  Jyotirmoy V. Deshmukh\footnotemark[1], and 
  Lars Lindemann\footnote{Automatic Control Laboratory, ETH Zürich}
}
\date{}
\begin{document}
\maketitle
\begingroup
\renewcommand{\thefootnote}{}
\footnotetext{This work was supported by the National Science Foundation under the Grant IIS-SLES-2417075 and by a USC-Capital One CREDIF award.}
\addtocounter{footnote}{-1}
\endgroup

\begin{abstract}
Uncertainty quantification for continuous-time trajectories is a prerequisite in many safety-critical engineering domains. However, a major challenge in data-driven uncertainty quantification is that calibration trajectories are sampled only at discrete, often sparse, and random intervals. Standard conformal prediction methods typically fail to provide guarantees in between  sampling times. In this work, we introduce a new technique to obtain valid conformal prediction regions for continuous-time trajectories that are sampled at discrete and possibly random times. To accomplish this goal, we make three contributions: (1) we provide an algorithm that leverages regularity properties of the underlying trajectories to obtain valid prediction regions in between samples, (2) we provide methods that estimate valid bounds on the aforementioned regularity properties from an additional high-frequency calibration dataset, and (3) we introduce and compare several algorithms to deal with random sampling times. Finally, we present experiments where we validate that our methods achieve valid coverage across the entire continuous trajectory.
\end{abstract}

\section{Introduction}

Ensuring that essential safety properties hold over a continuous-time domain is a fundamental requirement across various engineering domains. In practice, however, trajectories are often observed only at discrete, irregular intervals. This introduces a fundamental risk: a system may appear safe at times when observations are made, but violate safety in between  consecutive observations. For example, networked control systems exchange information digitally and may suffer from packet loss \cite{hespanha2007survey, packet_loss, packet_loss_bernoulli}, aviation safety margins can degrade due to atmospheric blocking or GPS spoofing \cite{schmolter2024space,Tsardanidis_2025}, and medical decision support systems frequently operate on inconsistent patient data \arxiv{\cite{rubanova2019latent, tassopoulou2025uncertaintycalibrated}}\conference{\cite{tassopoulou2025uncertaintycalibrated}}. Another example is ordinary differential equation solvers, which compute solutions at discrete time instances using adaptive step sizes \cite{gustafsson1988pi, zhao2021performance}.

Our work is grounded in conformal prediction (CP), a statistical method for data-driven uncertainty quantification \cite{angelopoulos2025theoreticalfoundationsconformalprediction}. CP was used to obtain discrete-time prediction regions for trajectories under deterministic sampling \cite{NEURIPS2021_312f1ba2, cleaveland2024conformalpredictionregionstime, sun2024copulaconformalpredictionmultistep,hashemi2024statistical,pmlr-v230-galvao-lopes24a}. These methods address the fundamental challenge that trajectories are dependent across time, thereby violating exchangeability assumptions typically needed to apply CP. However, this line of work is generally limited to discrete time domains. The authors in \cite{11007767,zhang2024safetycriticalcontrolofflineonlineneural} provide continuous time prediction regions using regularity properties of the underlying trajectories, but assume knowledge of regularity constants and do not permit random sampling. Recently, \cite{tassopoulou2025uncertaintycalibrated} introduced a method that extends \cite{NEURIPS2021_312f1ba2, cleaveland2024conformalpredictionregionstime, sun2024copulaconformalpredictionmultistep,hashemi2024statistical,pmlr-v230-galvao-lopes24a} for  randomly-sampled trajectories. However, \cite{tassopoulou2025uncertaintycalibrated} requires a pre-specified discrete-time grid and does not provide continuous-time prediction regions.

Motivated by these limitations, we provide a CP technique for obtaining continuous-time prediction regions under random sampling. Our method is fully data-driven and provides statistical guarantees without making any model assumptions. With this work, we address three fundamental challenges. Firstly, we extend statistical trajectory guarantees to continuous time domains, despite observing trajectories only at discrete, possibly sparse random time
points, eliminating the blind spots typically left by standard methods. Secondly, we introduce a calibration method to infer trajectory regularity from finite samples via downsampling high-frequency trajectories and leveraging a stochastic dominance assumption, enabling our method to be fully data-driven. Finally, we introduce different approaches to obtain valid prediction regions when sampling times are random.

\section{Related Work}
The aforementioned works from \cite{NEURIPS2021_312f1ba2, cleaveland2024conformalpredictionregionstime, sun2024copulaconformalpredictionmultistep,hashemi2024statistical,pmlr-v230-galvao-lopes24a} provide discrete-time prediction regions in batch settings, therefore requiring trajectory calibration data. In  online settings, the violation of exchangeability inherent to trajectory data is addressed by adaptive conformal inference \cite{NEURIPS2021_gibbs, angelopoulos2023conformal, pmlr-v162-zaffran22a}. Adaptive conformal inference builds prediction regions from a single trajectory by calibrating over past observation windows. While adaptive conformal inference guarantees asymptotic coverage, it does not provide simultaneous coverage and has not been studied for continuous-time trajectories \cite{NEURIPS2021_312f1ba2, cleaveland2024conformalpredictionregionstime, sun2024copulaconformalpredictionmultistep,hashemi2024statistical,pmlr-v230-galvao-lopes24a}. 

Recent work has integrated conformal prediction into safety-critical control  of continuous-time systems, see e.g., \cite{Hsu_2025, wei2025conformalcontractionrobustnonlinear}. However, a common assumption  is the availability of continuous-time calibration trajectories. We view our work as complementary as we provide continuous-time prediction regions without making this assumption. 

In addition, reachability analysis of stochastic systems has been studied using dynamic programming approaches \cite{esfahani2016stochastic,abate2008probabilistic}. While dynamic programming methods are exact, they are typically computationally expensive and require system knowledge, here knowledge of the underlying randomness. Other methods decoupled stochastic from deterministic parts of the trajectories, but typically make distributional assumptions, such as being sub-Gaussian, see e.g.,  \cite{jafarpourchen,liu2025safetyverificationnonlinearstochastic}, but also require knowledge of disturbance characteristics.

Beyond these works, Gaussian processes were used to  model inter-sample behavior, explicitly capturing uncertainty in between observations \cite{roberts2013gaussian, ensinger2024exact}. However, Gaussian processes impose strong structural assumptions and require distributional priors. Functional data analysis can deal with irregularly sampled trajectories \cite{gervini_irregular}  by introducing model parameters that smooth the trajectories in between observations, as well as for applying CP to continuous-time processes using projections into basis functions \cite{lei2013conformalpredictionapproachexplore}.

\section{Problem Formulation}\label{problem_formulation}

We consider continuous-time trajectories $X: [0, T^*] \to \mathcal{X} \subset \mathbb{R}^d$, where $X_t$ denotes the state of $X$ at time $t$. We assume that $X$ is observed at a set of discrete sampling times, denoted by $\boldsymbol{T} \coloneqq \{t_1, \dots, t_N\} \subset [0, T^*]$, where $N$ denotes the number of sampling times  which may itself be a random variable. We model the randomness as $(X, \boldsymbol{T}) \sim P_{X, \boldsymbol{T}}$.\arxiv{\footnote{While $T$ is standard notation for the prediction horizon, we write $T^*$ to avoid confusion with the set of sampling times $\boldsymbol{T}$.}}

To distinguish the continuous-time trajectory from our actual observations, we define the discretely sampled trajectory as $\boldsymbol{X} \coloneqq (X_{t_1}, X_{t_2}, \dots, X_{t_N})$. We can then compactly represent an observable data pair as $\boldsymbol{Z} \coloneqq (\boldsymbol{X}, \boldsymbol{T})\sim P_{\boldsymbol{Z}}$, where   the distribution $P_{\boldsymbol{Z}}$ is induced from $P_{X,\boldsymbol{T}}$.

Suppose that we have $m$ independent and identically distributed (i.i.d.) pairs $\{(X^{(i)}, \boldsymbol{T}^{(i)})\}_{i=1}^{m} \sim P_{X,\boldsymbol{T}}$. From these, we form the observable calibration dataset $\mathcal{D}_{\mathrm{cal}} := \{\boldsymbol{Z}^{(i)}\}_{i=1}^{m}$, where  $\boldsymbol{Z}^{(i)}:= (\boldsymbol{X}^{(i)}, \boldsymbol{T}^{(i)})$ with $\boldsymbol{X}^{(i)}:=(X_{t_1^{(i)}}, \dots, X_{t_{N_i}^{(i)}})$ and $\boldsymbol{T}^{(i)} := \{t_1^{(i)}, \dots, t_{N_i}^{(i)}\}$. Our objective is to use $\mathcal{D}_{\mathrm{cal}}$ to construct continuous-time prediction regions for a test trajectory $(X^{(m+1)}, \boldsymbol{T}^{(m+1)})\sim P_{X,\boldsymbol{T}}$.

Prior work, such as \cite{NEURIPS2021_312f1ba2, cleaveland2024conformalpredictionregionstime, sun2024copulaconformalpredictionmultistep,hashemi2024statistical,pmlr-v230-galvao-lopes24a}, constructs discrete-time prediction regions for fixed and deterministic sampling times $\boldsymbol{T}=\{t_1,\dots, t_N\}$. These works obtain prediction regions $\mathcal{C}_{t_k}$ such that, for any user-specified error rate $\delta\in (0,1)$,
\begin{align}\label{discrete_objective}
\mathbb{P}_{P_{X,\boldsymbol{T}}^{m+1}}\big(\boldsymbol{X}^{(m+1)}_{t_k}\in \mathcal{C}_{t_k} \text{ for all } k=1,\dots, N \big)\geq 1-\delta.
\end{align}
Our formulation pursues a stronger objective: building prediction regions over the  continuous-time domain $[0, T^*]$, and that even when the sampling times are random. Concretely, we seek to construct prediction regions $\mathcal{C}_t$  that satisfies
\begin{equation}\label{objective}
\mathbb{P}_{P_{X,\boldsymbol{T}}^{m+1}}\big(X^{(m+1)}_{t}\in \mathcal{C}_{t}\text{ for all } t\in [0,T^*] \big)\geq 1-\delta.
\end{equation}

\section{Prediction Regions under Deterministic Sampling and Known Lipschitz Constant}\label{problem_solution_section}
We begin with a simplified setting that we generalize later in Sections \ref{lipschitz_estimation_section} and \ref{random_sampling_times_section}. Firstly, we assume that the sampling times are deterministic and known, i.e., that $\boldsymbol{T}^{(i)}=\boldsymbol{T}$ for all $i\in[m+1]$ for a pre-specified $\boldsymbol{T}=\{t_1,\dots, t_N\}$.  Secondly, we assume to know a Lipschitz constant $L>0$ that describes the regularity properties of $X$ such that
\begin{equation}\label{assumption}
\mathbb{P}_{P_{X}}\Big(\sup_{0\leq t<s\leq T^*}\frac{|| X^{(m+1)}_{s}-  X^{(m+1)}_{t}||}{s-t} \leq L  \Big)=1. 
\end{equation}

\arxiv{Equation \eqref{assumption} characterizes the class of 
continuous-time trajectories for which our framework is suitable, thus excluding some standard continuous-time  
processes, such as Brownian motion. However, it is naturally 
satisfied by physical dynamical systems with bounded velocities, or dynamics as those governed by random or neural ODEs \cite{chen2018neural}.}


We proceed in two steps: (i) we obtain discrete-time prediction regions in \eqref{discrete_objective}, primarily following \cite{cleaveland2024conformalpredictionregionstime}, and (ii) we forward and backward propagate \eqref{discrete_objective} via $L$ to obtain \eqref{objective}.

\emph{Step 1. } Suppose that we are given a trajectory predictor $\hat{x}:[0,T^*]\to\mathcal{X}$, which could also be a function of the initial state $X_0$, and a normalizing function $\hat{\sigma}:[0,T^*]\to\mathbb{R}_{++}$, e.g., $\hat{x}_t$ could be a neural network predictor. Setting $\hat{\sigma}_t:=1$ provides no normalization, while optimal choice can be found in \cite{cleaveland2024conformalpredictionregionstime}. We also define the nonconformity score:
\begin{equation}\label{nonconformity}
S^{\mathrm{lf}}_{i}=s({X}^{(i)}, \boldsymbol{T}^{(i)})\coloneqq \max_{t\in \boldsymbol{T}^{(i)}}\Big\{ \frac{||{X}^{(i)}_{t}-\hat{x}_t||}{\hat{\sigma}_{t}}\Big\},
\end{equation} 
$\text{for each }i \in [m+1]$ where $[m+1]:=\{1,\dots, m+1\}$. The superscript $\mathrm{lf}$ stands for ``low frequency''. Given that $\boldsymbol{Z}^{(1)}, \hdots,\boldsymbol{Z}^{(m+1)}$ are i.i.d., it follows that $S^{\mathrm{lf}}_1,\dots, S^{\mathrm{lf}}_{m+1}$ are also i.i.d..   Following the conformal prediction paradigm, we compute  $\hat{q}_{1-\delta}\coloneqq\text{Quantile}(\{S^{\mathrm{lf}}_i\}_{i=1}^m;(1-\delta)(1+\frac{1}{m}))$, where $\text{Quantile}(\{S^{\mathrm{lf}}_i\}_{i\in [m]};(1-\delta)(1+\frac{1}{m}))$ denotes the $\lceil (m+1)(1-\delta) \rceil$-th smallest nonconformity score among $\{S^{\mathrm{lf}}_1,\dots, S^{\mathrm{lf}}_m\}$. By an immediate application of split CP (Theorem 3.2 from \cite{angelopoulos2025theoreticalfoundationsconformalprediction}) we have $\mathbb{P}_{P_{X,\boldsymbol{T}}^{m+1}}\Big(S^{\mathrm{lf}}_{m+1} \leq \hat{q}_{1-\delta}\Big)\geq1-\delta$. We can rewrite this statement as simultaneous prediction regions at the sampling times of the test trajectory:
\begin{equation}\label{epslions}
\mathbb{P}_{P_{X,\boldsymbol{T}}^{m+1}}\Big(||\mathbf{X}_{t}^{(m+1)}-\hat{f}_t||\leq \hat{r}_t \text{ for all }t\in \boldsymbol{T}^{(m+1)}\Big)\geq1-\delta,\end{equation}
where $\hat{r}_t\coloneqq\hat{\sigma}_t\hat{q}_{1-\delta}$ and  $\mathbb{P}_{P_{X,\boldsymbol{T}}^{m+1}}(\cdot)$ captures randomness over test and calibration data. Note that \eqref{epslions} provides  prediction regions $\mathcal{C}_t:=B_{\hat{r}_t}\big(\hat{f}_t\big)$ that satisfy \eqref{discrete_objective}, where $B_{\hat{r}_t}\big(\hat{f}_t\big)$ is a norm ball centered at the prediction $\hat{f}_t$ with radius $\hat{r}_t$.


\emph{Step 2. } We next propagate the prediction regions $\mathcal{C}_t$ forward and backward in time using the Lipschitz constant $L$. If the test trajectory is such that  $X^{(m+1)}_{t_k}\in\mathcal{C}_{t_k}$ at sampling time $t_k\in \boldsymbol{T}^{(m+1)}$ then, for any time $t\in (t_k,t_{k+1})$, we have
\begin{align}\label{lipschitz_propagation}
    ||X^{(m+1)}_{t}-\hat{f}_{t_k}|| 
    &= ||X^{(m+1)}_{t}-{X}^{(m+1)}_{{t}_{{k}}} +{X}^{(m+1)}_{{t}_{{k}}}-\hat{f}_{t_k}||\nonumber\\
    &\leq ||X^{(m+1)}_{t}-{X}^{(m+1)}_{{t}_{{k}}}||+||{X}^{(m+1)}_{{t}_{{k}}}-\hat{f}_{t_k}|| \nonumber\\
    &\leq L(t-{t}_{{k}})+\hat{r}_{t_{k}}=:R_{k}^{\mathrm{fwd}}(t).
\end{align}

\arxiv{
\begin{figure}[H]
\centering
\begin{tikzpicture}[scale=0.8, >=Latex]
\def\tk{1.5}        
\def\tkNext{5.5}    
\def\valK{2.0}      
\def\valNext{2.5}   

\def\rK{0.4}        
\def\rNext{0.9}     
\def\L{0.6}         


\def\tTop{4.33}
\def\yTop{4.1} 

\def\tBot{3.5}
\def\yBot{0.4} 

\fill[gray!15] 
    (\tk, \valK+\rK) -- (\tTop, \yTop) -- (\tkNext, \valNext+\rNext) -- 
    (\tkNext, \valNext-\rNext) -- (\tBot, \yBot) -- (\tk, \valK-\rK) -- cycle;

\draw[blue!30, dashed, thin] (\tk, \valK+\rK) -- (6.0, \valK+\rK + 2.7);
\draw[blue!30, dashed, thin] (\tk, \valK-\rK) -- (6.0, \valK-\rK - 2.7);

\draw[green!30, dashed, thin] (1.0, \valNext+\rNext + 2.7) -- (\tkNext, \valNext+\rNext);
\draw[green!30, dashed, thin] (1.0, \valNext-\rNext - 2.7) -- (\tkNext, \valNext-\rNext);

\draw[thick, blue!80!black] (\tk, \valK+\rK) -- (\tTop, \yTop);
\draw[thick, blue!80!black] (\tk, \valK-\rK) -- (\tBot, \yBot);

\draw[thick, green!50!black] (\tTop, \yTop) -- (\tkNext, \valNext+\rNext);
\draw[thick, green!50!black] (\tBot, \yBot) -- (\tkNext, \valNext-\rNext);

\def\tFwd{2.5}
\draw[<->, blue!80!black, thin] (\tFwd, \valK) -- (\tFwd, 3.0) 
    node[midway, right, font=\scriptsize] {$R^{\text{fwd}}_{k}(t)$};
\draw[dotted, blue!50] (\tFwd, \valK) -- (\tk, \valK);

\def\tBack{5.0}
\draw[<->, green!50!black, thin] (\tBack, \valNext) -- (\tBack, 3.7)
    node[midway, left, font=\scriptsize] {$R^{\text{back}}_{k+1}(t)$};
\draw[dotted, green!50] (\tBack, \valNext) -- (\tkNext, \valNext);
\draw[->] (0.5, 0) -- (6.5, 0) node[right] {$t$};
\draw[->] (0.5, 0) -- (0.5, 4.5) node[above] {$X_t$};

\draw (\tk, 0.1) -- (\tk, -0.1) node[below] {$t_k$};
\draw (\tkNext, 0.1) -- (\tkNext, -0.1) node[below] {$t_{k+1}$};
\draw[ultra thick, blue] (\tk, \valK-\rK) -- (\tk, \valK+\rK);
\fill[blue] (\tk, \valK) circle (2pt) node[left, font=\small] {$\hat{f}_{{t}_{k}}$};

\draw[ultra thick, green!50!black] (\tkNext, \valNext-\rNext) -- (\tkNext, \valNext+\rNext);
\fill[green!50!black] (\tkNext, \valNext) circle (2pt) node[right, font=\small] {$\hat{f}_{{t}_{k+1}}$};

\draw[red, thick] plot [smooth, tension=0.7] coordinates {
    (1.5, 1.8)      
    (2.0, 1.51)     
    (2.4, 1.5)      
    (3.0, 1.1)      
    (3.5, 0.94)     
    (3.9, 1.0)      
    (4.3, 0.97)     
    (4.8, 1.3)      
    (5.2, 1.55)      
    (5.5, 1.68)     
};
\end{tikzpicture}
\caption{Prediction region built with  the functions $R_{k}^{\mathrm{fwd}}(t)$, and  $R_{k+1}^{\mathrm{back}}(t)$.}
\label{tikz_fig}
\end{figure}
}
\conference{\begin{wrapfigure}{r}{0.22\textwidth}
\centering
\vspace{-0.65cm}
\hspace{-0.55cm}\begin{tikzpicture}[scale=0.55, >=Latex]
\def\tk{1.5}        
\def\tkNext{5.5}    
\def\valK{2.0}      
\def\valNext{2.5}   

\def\rK{0.4}        
\def\rNext{0.9}     
\def\L{0.6}         


\def\tTop{4.33}
\def\yTop{4.1} 

\def\tBot{3.5}
\def\yBot{0.4} 

\fill[gray!15] 
    (\tk, \valK+\rK) -- (\tTop, \yTop) -- (\tkNext, \valNext+\rNext) -- 
    (\tkNext, \valNext-\rNext) -- (\tBot, \yBot) -- (\tk, \valK-\rK) -- cycle;

\draw[blue!30, dashed, thin] (\tk, \valK+\rK) -- (6.0, \valK+\rK + 2.7);
\draw[blue!30, dashed, thin] (\tk, \valK-\rK) -- (6.0, \valK-\rK - 2.7);

\draw[green!30, dashed, thin] (1.0, \valNext+\rNext + 2.7) -- (\tkNext, \valNext+\rNext);
\draw[green!30, dashed, thin] (1.0, \valNext-\rNext - 2.7) -- (\tkNext, \valNext-\rNext);

\draw[thick, blue!80!black] (\tk, \valK+\rK) -- (\tTop, \yTop);
\draw[thick, blue!80!black] (\tk, \valK-\rK) -- (\tBot, \yBot);

\draw[thick, green!50!black] (\tTop, \yTop) -- (\tkNext, \valNext+\rNext);
\draw[thick, green!50!black] (\tBot, \yBot) -- (\tkNext, \valNext-\rNext);

\def\tFwd{2.5}
\draw[<->, blue!80!black, thin] (\tFwd, \valK) -- (\tFwd, 3.0) 
    node[midway, right, font=\scriptsize] {$R^{\text{fwd}}_{k}(t)$};
\draw[dotted, blue!50] (\tFwd, \valK) -- (\tk, \valK);

\def\tBack{5.0}
\draw[<->, green!50!black, thin] (\tBack, \valNext) -- (\tBack, 3.7)
    node[midway, left, font=\scriptsize] {$R^{\text{back}}_{k+1}(t)$};
\draw[dotted, green!50] (\tBack, \valNext) -- (\tkNext, \valNext);
\draw[->] (0.5, 0) -- (6.5, 0) node[right] {$t$};
\draw[->] (0.5, 0) -- (0.5, 4.5) node[above] {$X_t$};

\draw (\tk, 0.1) -- (\tk, -0.1) node[below] {$t_k$};
\draw (\tkNext, 0.1) -- (\tkNext, -0.1) node[below] {$t_{k+1}$};
\draw[ultra thick, blue] (\tk, \valK-\rK) -- (\tk, \valK+\rK);
\fill[blue] (\tk, \valK) circle (2pt) node[left, font=\small] {$\hat{f}_{{t}_{k}}$};

\draw[ultra thick, green!50!black] (\tkNext, \valNext-\rNext) -- (\tkNext, \valNext+\rNext);
\fill[green!50!black] (\tkNext, \valNext) circle (2pt) node[right, font=\small] {$\hat{f}_{{t}_{k+1}}$};

\draw[red, thick] plot [smooth, tension=0.7] coordinates {
    (1.5, 1.8)      
    (2.0, 1.51)     
    (2.4, 1.5)      
    (3.0, 1.1)      
    (3.5, 0.94)     
    (3.9, 1.0)      
    (4.3, 0.97)     
    (4.8, 1.3)      
    (5.2, 1.55)      
    (5.5, 1.68)     
};
\end{tikzpicture}
\caption{$R_{k}^{\mathrm{fwd}}(t)$, $R_{k+1}^{\mathrm{back}}(t)$.}
\label{tikz_fig}
\vspace{-10pt}
\end{wrapfigure}}
We can apply an analogous argument \emph{backwards} to obtain $R_{k+1}^{\mathrm{back}}(t)\coloneqq \hat{r}_{t_{k+1}}+L(t_{k+1}-t)$. Fig.\ref{tikz_fig} illustrates $R_{k}^{\mathrm{fwd}}(t)$ (blue cone) and $R_{k+1}^{\mathrm{back}}(t)$ (green cone). Importantly, their intersection will define our continuous-time prediction region in \eqref{objective}. Algorithm \ref{high_dimensional_predictions_alg} formalizes this intuition and takes care of some special cases. First, we compute the conformal radii $\hat{r}_{t_k}$ via the empirical quantile $\hat{q}$ (Lines 1–4). \cmmnt{Line 1 incorporates the exact initial state. if unavailable, we can simply initialize the construction of the region with $k=1$ and do a backward propagation: $\mathcal{C}_t\gets B_{R^{\text{back}}_1(t)}, t\in [0,t_1]$} Second, the algorithm performs a ``tightening pass'', replacing discrete-time prediction regions at time steps $t_{k}$ by the forward pass $R_{k+1}^{\mathrm{fwd}}(t_k)$ if it gives tighter prediction regions (Line 6). Finally, we construct the intersection between forward and backward cones $R_{k+1}^{\mathrm{fwd}}(t_k)$ and $R_{k+1}^{\mathrm{back}}(t_k)$ (Lines 10–12). We refer to the case studies for more illustrations. 

\begin{algorithm}
\small
\caption{Lipschitz-based prediction region}\label{high_dimensional_predictions_alg}
\hspace*{\algorithmicindent} \textbf{Inputs:} Calibration dataset $\mathcal{D}_{\text{cal}} \coloneqq \{(\boldsymbol{X}^{(i)},\boldsymbol{T}^{(i)})\}_{i=1}^m$, test times ${\boldsymbol{T}}^{(m+1)} := \{{t}_j\}_{j=1}^{{N}}$, error budget $\delta \in (0,1)$, $\hat{f}$, $\hat{\sigma}$, $L$, known initial value $X^{(m+1)}_0$.
\hspace*{\algorithmicindent}\\
\textbf{Output:} $\mathcal{C}$ prediction region for the test trajectory.
\begin{algorithmic}[1]
\State $t_0\gets0;\hat{f}_0\gets X^{(m+1)}_0; r_0\gets 0$
\State $S^{\mathrm{lf}}_i\gets \max_{t\in \boldsymbol{T}^{(i)}}\Big\{ \frac{||\boldsymbol{X}^{(i)}_{t}-\hat{x}_t||}{\hat{\sigma}_t}\Big\}$\Comment{for each $i\in [m]$}
\State$\hat{q}\gets \text{Quantile}(S^{\mathrm{lf}}_1,\dots, S^{\mathrm{lf}}_m;(1-\delta)(1+1/m))$
\State $ \hat{r}_{t_{k}}\gets\hat{\sigma}_{t_k}\hat{q} \text{ for } k=1,\dots, N$

\For{$k \text{ in } 0,\dots ,{N}-1$}\Comment{Check for tightening radius}
\If{$\hat{r}_{t_k}+L(t_{k+1}-t_k)\leq \hat{r}_{t_{k+1}}$}
\State $\hat{r}_{t_{k+1}}\gets \hat{r}_{t_k}+L(t_{k+1}-t_k)$
\State $\hat{f}_{t_{k+1}}\gets \hat{f}_{t_{k}}$
\EndIf
\EndFor
\State \text{Forward and backward radius for each $j\in\{0,\dots, {N}\}$}:
\Statex \hspace{\algorithmicindent} $R_j^{\text{fwd}}(t) \gets \hat{r}_{t_{j}} + L(t - {t}_j)$  
\Statex \hspace{\algorithmicindent}$R_j^{\text{back}}(t) \gets  \hat{r}_{t_{j}}+ L( {t}_{j}-t)$

\For{$k \text{ in } 0,\dots ,{N}-1$}
\Comment{Define interpolated prediction regions between sampling times}
\State  $\mathcal{C}_t\gets B_{R_k^{\text{fwd}}(t)}(\hat{x}_{{t}_k})\cap B_{R_{k+1}^{\text{back}}(t)}(\hat{x}_{{t}_k}), t\in (t_k,t_{k+1}]$
\EndFor
\State $\mathcal{C}_t\gets B_{R^{\text{fwd}}_{N}(t)}(\hat{x}_{t_N})$\quad \text{ for all }$t\in(t_{N},T^*]$
\State \Return $\mathcal{C}$
\end{algorithmic}
\end{algorithm}

To prove that Algorithm \ref{high_dimensional_predictions_alg}  achieves \eqref{objective}, define the events $\mathcal{E}_{\mathrm{disc}}:= \{ {X}_{t}^{(m+1)}\in B_{\hat{r}_t}(\hat{f}_t)\text{ for all }t\in {\boldsymbol{T}}^{(m+1)}\}$ (denoting discrete-time prediction regions), $\mathcal{E}_{\mathrm{lip}}:=\{||X^{(m+1)}_{t+s}-  X^{(m+1)}_{t}|| \leq s{L}  \text{ for all }s>0 \text{ and all }t \}$ (denoting Lipschitz bounds),  and  $\mathcal{E}_{\mathcal{C}}:= \{X_{t}^{(m+1)}\in \mathcal{C}_{t}\text{ for all }t\in [0,T^*]\}$ (denoting continuous-time prediction regions).

\begin{lemma}\label{lemma_algorithm_correctness}Let $\mathcal{C}$ denote the output of Algorithm \ref{high_dimensional_predictions_alg}. Then the previously defined events satisfy $\mathcal{E}_{\mathrm{disc}}\cap\mathcal{E}_{\mathrm{lip}}\subseteq\mathcal{E}_{\mathcal{C}}.$
\end{lemma}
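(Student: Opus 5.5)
We argue pointwise: fix an outcome in $\mathcal{E}_{\mathrm{disc}}\cap\mathcal{E}_{\mathrm{lip}}$ and show $X^{(m+1)}_t\in\mathcal{C}_t$ for every $t\in[0,T^*]$. Write $X:=X^{(m+1)}$. Because Algorithm~\ref{high_dimensional_predictions_alg} overwrites centers and radii in the tightening pass (Lines 5--9), we distinguish the original quantities $\hat{r}^{\,0}_{t_k}=\hat\sigma_{t_k}\hat q$ and $\hat f^{\,0}_{t_k}$ from the final ones $\hat r_{t_k}$ and $\hat f_{t_k}$ used in Lines 10--13. We read the center of each ball in Line 12 as the (possibly updated) $\hat f$ at the corresponding cone apex: $\hat f_{t_k}$ for the forward cone and $\hat f_{t_{k+1}}$ for the backward cone.

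\emph{Step 1 (invariant at sampling times).} We show by induction on $k=0,\dots,N$ that $\|X_{t_k}-\hat f_{t_k}\|\le \hat r_{t_k}$ for the final values.

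For $k=0$, Line 1 gives $\hat f_0=X_0$ and $r_0=0$, so the claim is trivial.

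For the inductive step, suppose the claim holds at $t_k$, and consider the update of index $k+1$ in the loop.
\begin{itemize}
\item If the test in Line 6 fails, the quantities at $t_{k+1}$ keep their original values. The claim then follows from $\mathcal{E}_{\mathrm{disc}}$, since $t_{k+1}\in\boldsymbol{T}^{(m+1)}$.
\item If the test succeeds, the new center is $\hat f_{t_k}$ and the new radius is $\hat r_{t_k}+L(t_{k+1}-t_k)$. The forward computation \eqref{lipschitz_propagation}, applied with the already-updated values at $t_k$ and using $\mathcal{E}_{\mathrm{lip}}$, gives $\|X_{t_{k+1}}-\hat f_{t_k}\|\le \hat r_{t_k}+L(t_{k+1}-t_k)$. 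This is exactly the updated ball.
\end{itemize}
The loop processes indices in increasing order, and the quantities at $t_{k+1}$ are modified only at iteration $k$. Hence the values used at iteration $k$ are already final, and the induction is consistent.

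\emph{Step 2 (between samples).} Fix $k\in\{0,\dots,N-1\}$ and $t\in(t_k,t_{k+1}]$. By the triangle inequality, $\mathcal{E}_{\mathrm{lip}}$, and Step 1 at $t_k$,
\[
\|X_t-\hat f_{t_k}\|\le \|X_t-X_{t_k}\|+\|X_{t_k}-\hat f_{t_k}\|\le L(t-t_k)+\hat r_{t_k}=R^{\mathrm{fwd}}_k(t).
\]
Symmetrically, using $\|X_{t_{k+1}}-X_t\|\le L(t_{k+1}-t)$ and Step 1 at $t_{k+1}$,
\[
\|X_t-\hat f_{t_{k+1}}\|\le R^{\mathrm{back}}_{k+1}(t).
\]
So $X_t$ lies in both balls and hence in their intersection $\mathcal{C}_t$.

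For $t\in(t_N,T^*]$ only the forward bound is needed, with Step 1 at $t_N$. Finally, $t=0$ is covered by $\hat f_0=X_0$; for definiteness we take $\mathcal{C}_0=\{X_0\}$.

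\emph{Main obstacle.} The inequalities themselves are routine; the delicate part is the bookkeeping in Step 1. Coverage must be verified with respect to the \emph{modified} radii and centers, and $\mathcal{E}_{\mathrm{disc}}$ only certifies the original ones. This is why the induction must carry the invariant forward through the tightening pass rather than invoke $\mathcal{E}_{\mathrm{disc}}$ directly. A second point to handle is the centering in Line 12, which must be interpreted as above for the backward bound to hold.
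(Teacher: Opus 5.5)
Your proof is correct and follows the paper's route. First you show that after the tightening pass each sampling-time ball still contains $X^{(m+1)}_{t_k}$; then you intersect the forward cone from $t_k$ with the backward cone from $t_{k+1}$, centered at $\hat{x}_{t_{k+1}}$, which is also how the paper's proof reads Line 12. The only difference is that your induction justifies the sampling-time invariant for the overwritten radii and centers, whereas the paper simply asserts it (``after applying the tightening pass, we have\dots'').
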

\begin{proof}Let $t\in [0,T^*]$. Recall  $\boldsymbol{T}=\{t_1,t_2, \dots ,t_N\}$, inducing a partition of $[0,T^*]$ into disjoint sets: $[0,t_1)$, $[t_N,T^*]$, or $[t_k,t_{k+1})$ for $k\in[N-1]$. Without loss of generality, let $t\in [t_k,t_{k+1})$ for some $k\in [N-1]$ (other cases are analogous). On the event $\mathcal{E}_{\mathrm{disc}}\cap\mathcal{E}_{\mathrm{lip}}$, after applying the tightening pass, we have $X^{(m+1)}_{t_{k}}\in B_{\hat{r}_{t_k}}(\hat{x}_{t_k})$ and $X^{(m+1)}_{t_{k+1}}\in B_{\hat{r}_{t_{k+1}}}(\hat{x}_{t_{k+1}})$. Using $R_{k}^{\text{fwd}}(t)$ and $R_{k+1}^{\text{back}}(t)$ we conclude that $X^{(m+1)}_{t}\in B_{R_k^{\text{fwd}}(t)}(\hat{x}_{{t}_k})\cap B_{R_{k+1}^{\text{back}}(t)}(\hat{x}_{{t}_{k+1}})$, i.e., $X^{(m+1)}_{t}\in \mathcal{C}_t$.
\end{proof}

We are ready to show that Algorithm \ref{high_dimensional_predictions_alg} achieves \eqref{objective}.

\begin{theorem}\label{theorem_base}
    Suppose that $\mathbb{P}_{P_{X,\boldsymbol{T}}^{m+1}}\big(\mathcal{E}_{\mathrm{disc}}\big)\geq 1-\delta$ holds for $\delta>0$ and (\ref{assumption}) holds for a known value ${L}>0$. Then, the prediction regions $\mathcal{C}_t$ from Algorithm \ref{high_dimensional_predictions_alg} satisfy (\ref{objective}).
\end{theorem}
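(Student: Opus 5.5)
The plan is to combine Lemma~\ref{lemma_algorithm_correctness} with a union bound, or more precisely with the fact that $\mathcal{E}_{\mathrm{lip}}$ is an almost-sure event. By Lemma~\ref{lemma_algorithm_correctness} we have $\mathcal{E}_{\mathrm{disc}}\cap\mathcal{E}_{\mathrm{lip}}\subseteq\mathcal{E}_{\mathcal{C}}$. Monotonicity of probability under $\mathbb{P}_{P_{X,\boldsymbol{T}}^{m+1}}$ then gives $\mathbb{P}(\mathcal{E}_{\mathcal{C}})\geq \mathbb{P}(\mathcal{E}_{\mathrm{disc}}\cap\mathcal{E}_{\mathrm{lip}})$. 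Since $\mathcal{E}_{\mathcal{C}}$ is exactly the event appearing in \eqref{objective}, it suffices to lower bound this intersection by $1-\delta$.

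The key step is to show that $\mathbb{P}(\mathcal{E}_{\mathrm{lip}})=1$ under the joint law over the calibration and test data. Assumption \eqref{assumption} states that, with probability one under $P_X$, the supremum of the difference quotient $\|X^{(m+1)}_s-X^{(m+1)}_t\|/(s-t)$ over $0\le t<s\le T^*$ is at most $L$. On this event, for every $t$ and every $s'>0$ with $t+s'\le T^*$, we get $\|X^{(m+1)}_{t+s'}-X^{(m+1)}_t\|\le Ls'$, which is precisely $\mathcal{E}_{\mathrm{lip}}$; I will implicitly restrict the quantifiers in $\mathcal{E}_{\mathrm{lip}}$ to $[0,T^*]$. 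The event $\mathcal{E}_{\mathrm{lip}}$ depends only on the test trajectory $X^{(m+1)}$, whose marginal under the product measure $P_{X,\boldsymbol{T}}^{m+1}$ is $P_X$. Hence its probability under the product measure is also $1$.

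The conclusion then follows from the elementary bound $\mathbb{P}(A\cap B)\ge \mathbb{P}(A)+\mathbb{P}(B)-1$. With $A=\mathcal{E}_{\mathrm{disc}}$ and $B=\mathcal{E}_{\mathrm{lip}}$ this yields $\mathbb{P}(\mathcal{E}_{\mathrm{disc}}\cap\mathcal{E}_{\mathrm{lip}})\ge (1-\delta)+1-1=1-\delta$, using the hypothesis $\mathbb{P}(\mathcal{E}_{\mathrm{disc}})\geq1-\delta$. Chaining this with the first step gives \eqref{objective}.

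I do not expect a real obstacle, because Lemma~\ref{lemma_algorithm_correctness} does the substantive work. The only delicate points are measure-theoretic bookkeeping. First, one must confirm that the almost-sure statement in \eqref{assumption}, written under $P_X$, transfers to the product measure; this holds because the event is measurable with respect to the test coordinate alone. Second, one must make sure that the event $\mathcal{E}_{\mathcal{C}}$, which involves an uncountable family of times, is handled correctly. It need not be shown measurable, since it contains the measurable event $\mathcal{E}_{\mathrm{disc}}\cap\mathcal{E}_{\mathrm{lip}}$, so its inner probability is at least $1-\delta$. Alternatively, measurability follows from continuity of $X$ on $\mathcal{E}_{\mathrm{lip}}$, which lets the supremum over times be reduced to a countable set.
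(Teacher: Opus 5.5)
Your proposal is correct and follows the paper's proof step for step: Lemma~\ref{lemma_algorithm_correctness} with monotonicity, then the bound $\mathbb{P}(A\cap B)\ge \mathbb{P}(A)+\mathbb{P}(B)-1$, then marginalizing the almost-sure Lipschitz event \eqref{assumption} from the product measure down to $P_X$, and finally the discrete-coverage hypothesis. Your remarks on the measurability of $\mathcal{E}_{\mathcal{C}}$ add care that the paper omits, but they do not change the argument.
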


\begin{proof}
Recall the events $\mathcal{E}_{\mathrm{disc}}$ and $\mathcal{E}_{\mathrm{lip}}$. Then we have 
$\mathbb{P}_{P_{X,\boldsymbol{T}}^{m+1}}\big( X_{t}^{(m+1)} \in \mathcal{C}_{t}, \forall t \in [0,T^*] \big) 
\overset{(a)}{\geq} \mathbb{P}_{P_{X,\boldsymbol{T}}^{m+1}}(\mathcal{E}_{\mathrm{disc}} \cap \mathcal{E}_{\mathrm{lip}}) 
\overset{(b)}{\geq} \mathbb{P}_{P_{X,\boldsymbol{T}}^{m+1}}(\mathcal{E}_{\mathrm{disc}}) + \mathbb{P}_{P_{X,\boldsymbol{T}}^{m+1}}(\mathcal{E}_{\mathrm{lip}}) - 1 
\overset{(c)}{=} \mathbb{P}_{P_{\boldsymbol{Z}}^{m+1}}(\mathcal{E}_{\mathrm{disc}}) + \mathbb{P}_{P_X}(\mathcal{E}_{\mathrm{lip}}) - 1 
\overset{(d)}{\geq} 1-\delta$, 
where we used Lemma \ref{lemma_algorithm_correctness} in $(a)$, $(b)$ follows by a union bound, in $(c)$ we marginalize using that $P_{\boldsymbol{Z}}$ is induced by $P_{X, \boldsymbol{T}}$, and $(d)$ follows by coverage in the discrete points and our assumption on (\ref{assumption}) for $L>0$.
\end{proof}

\section{Prediction Regions with Estimated Trajectory Regularity Properties}\label{lipschitz_estimation_section}

We now relax the assumption of knowing a constant $L>0$ that satisfies \eqref{assumption}. Therefore, we assume that every trajectory $X^{(i)}\sim P_X$ is such that $L_{i}^{\mathrm{cont}}:=\sup\limits_{t,s\in[0,T^*], s\neq t} \frac{||{X}^{(i)}_t- {X}^{(i)}_s||}{|s-t|}$ exists. For instance, random ODEs \cite{hodgkinson2021stochastic}, as used in our experiments, provide almost surely continuously differentiable trajectories corresponding to solutions to initial value problems $\dot{X}^{(i)}=f(X^{(i)},t, \theta^{(i)})$ where $X^{(i)}_0\sim P_{X_0}$ and $\theta^{(i)}\sim P_{\theta}$ is a random parameter. In such a case, we can take $L_{i}^{\mathrm{cont}}=\max_{t\in[0,T^*]} ||\dot{X}^{(i)}_t ||$. To set the stage, we begin in an oracle setting for which, given an error rate $\alpha\in (0,1)$, we obtain an estimate $\hat{L}$ such that
\begin{equation}\label{lipschitz_estimation}
\mathbb{P}_{P_{X,\boldsymbol{T}}^{m+1}}\Big({L_{m+1}^{\mathrm{cont}} }\leq\hat{L}\Big)\geq 1-\alpha.
\end{equation}

\arxiv{\subsection{Oracle Setting}}
\conference{\textbf{Oracle Setting.} }Define the (generally unknown) maximum time derivative $L^{\mathrm{cont}}_i := \max_{t \in [0, T^*]} ||\dot{X}^{(i)}_t  ||$  for each $i\in [m+1]$. Using conformal prediction  $
\hat{L}_{\mathrm{oracle}} := \text{Quantile}\left(\{L^{\mathrm{cont}}_i\}_{i\in [m]}; (1-\alpha)(1 + 1/m)\right)$ immediately gives us $\mathbb{P}_{P^{m+1}_{X,\boldsymbol{T}}}(L^{\mathrm{cont}}_{m+1} \le \hat{L}_{\mathrm{oracle}} ) \ge 1-\alpha$. Hence, if we were to use $\hat{L}_{\mathrm{oracle}}$ instead of $L$ in Algorithm \ref{high_dimensional_predictions_alg}, we can get similar guarantees as in Theorem  \ref{theorem_base}.
\begin{theorem}\label{oracle_lipschitz_theorem} Suppose that $\mathbb{P}_{P_{X,\boldsymbol{T}}^{m+1}}\big(\mathcal{E}_{\mathrm{disc}}\big)\geq 1-\delta$ for $\delta>0$. Then, the prediction regions $\mathcal{C}_t$ from Algorithm \ref{high_dimensional_predictions_alg} using  $\hat{L}_{\mathrm{oracle}}$ satisfy
\begin{equation}
\mathbb{P}_{P^{m+1}_{X,\boldsymbol{T}}}\Big(X^{(m+1)}_t \in \mathcal{C}_t, \forall t \in [0, T^*] \Big)\geq 1-\alpha-\delta.
\end{equation}
\end{theorem}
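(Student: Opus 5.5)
The plan is to reuse the argument of Theorem \ref{theorem_base}, with the deterministic event that $L$ bounds the Lipschitz constant replaced by a random event that holds with probability at least $1-\alpha$. Define
\[
\hat{\mathcal{E}}_{\mathrm{lip}}:=\{\|X^{(m+1)}_{t+s}-X^{(m+1)}_t\|\le s\hat{L}_{\mathrm{oracle}} \text{ for all } s>0 \text{ and all } t\}.
\]
This is the event $\mathcal{E}_{\mathrm{lip}}$ with $L$ replaced by the data-dependent constant $\hat{L}_{\mathrm{oracle}}$, which is a function of the calibration trajectories $X^{(1)},\dots,X^{(m)}$.

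First, I will check that Lemma \ref{lemma_algorithm_correctness} still holds pointwise with $\hat{L}_{\mathrm{oracle}}$ in place of $L$, i.e.\ $\mathcal{E}_{\mathrm{disc}}\cap\hat{\mathcal{E}}_{\mathrm{lip}}\subseteq\mathcal{E}_{\mathcal{C}}$. The lemma is a deterministic, realization-wise statement. Its proof uses only the triangle inequality \eqref{lipschitz_propagation} and the fact that the constant fed into Algorithm \ref{high_dimensional_predictions_alg} bounds the increments of the test trajectory. Fixing any realization of the calibration and test data, the constant is a fixed number, and the same argument goes through. The tightening pass and the cone intersections are also built from that same constant, so nothing else changes.

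Second, I will show $\mathbb{P}(\hat{\mathcal{E}}_{\mathrm{lip}})\ge 1-\alpha$. By the mean value inequality for $C^1$ trajectories, $\|X^{(m+1)}_{t+s}-X^{(m+1)}_t\|\le s\,L^{\mathrm{cont}}_{m+1}$. Hence $\{L^{\mathrm{cont}}_{m+1}\le\hat{L}_{\mathrm{oracle}}\}\subseteq\hat{\mathcal{E}}_{\mathrm{lip}}$. The scores $L^{\mathrm{cont}}_1,\dots,L^{\mathrm{cont}}_{m+1}$ are i.i.d.\ because they are measurable functions of the i.i.d.\ pairs $(X^{(i)},\boldsymbol{T}^{(i)})$. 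Split conformal prediction (Theorem 3.2 of \cite{angelopoulos2025theoreticalfoundationsconformalprediction}, as already invoked in the text before the theorem) then gives $\mathbb{P}(L^{\mathrm{cont}}_{m+1}\le\hat{L}_{\mathrm{oracle}})\ge1-\alpha$.

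Finally, I will combine the two steps with a union bound:
\[
\mathbb{P}(\mathcal{E}_{\mathcal{C}})\ge\mathbb{P}(\mathcal{E}_{\mathrm{disc}}\cap\hat{\mathcal{E}}_{\mathrm{lip}})\ge\mathbb{P}(\mathcal{E}_{\mathrm{disc}})+\mathbb{P}(\hat{\mathcal{E}}_{\mathrm{lip}})-1\ge1-\delta-\alpha.
\]
All probabilities are taken under the joint law $P^{m+1}_{X,\boldsymbol{T}}$. This matters because $\mathcal{E}_{\mathrm{disc}}$ and $\hat{\mathcal{E}}_{\mathrm{lip}}$ both depend on the calibration data and the test trajectory, and they are generally dependent on each other. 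The Bonferroni-type bound needs no independence, so this dependence is harmless.

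The main subtlety is the first step. Unlike in Theorem \ref{theorem_base}, the constant used in the algorithm is random and built from the same calibration sample that determines $\hat{q}$. So one must insist that the inclusion is proved realization-wise rather than through an almost-sure assumption. A minor additional point is that the $\max$ defining $L^{\mathrm{cont}}_i$ exists and bounds the difference quotients. This holds for the continuously differentiable trajectories considered here, by compactness of $[0,T^*]$.
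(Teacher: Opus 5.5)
Your proposal is correct and follows essentially the same route as the paper. Both proofs use a realization-wise version of Lemma~\ref{lemma_algorithm_correctness} with the data-dependent constant $\hat{L}_{\mathrm{oracle}}$, the split-CP bound $\mathbb{P}(L^{\mathrm{cont}}_{m+1}\le\hat{L}_{\mathrm{oracle}})\ge 1-\alpha$, and a union bound with $\mathcal{E}_{\mathrm{disc}}$; the only cosmetic difference is that you pass through the increment event via the mean value inequality, whereas the paper applies the lemma directly to $\{L^{\mathrm{cont}}_{m+1}\le\hat{L}_{\mathrm{oracle}}\}$.
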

\begin{proof}
Define the event $E_{\mathrm{lip}}: = \{ L_{m+1}^{\mathrm{cont}} \le \hat{L}_{\mathrm{oracle}}  \}$ and recall the event $\mathcal{E}_{\mathrm{disc}}$. Then 
\begin{align*}
&\mathbb{P}_{P^{m+1}_{X,\boldsymbol{T}}}\Big(X^{(m+1)}_t \in \mathcal{C}_t, \forall t \in [0, T^*] \Big) \overset{(a)}{\geq} \mathbb{P}_{P^{m+1}_{X,\boldsymbol{T}}}( E_{\mathrm{lip}} \cap \mathcal{E}_{\mathrm{disc}} ) \\
  &\overset{(b)}{\ge}  \mathbb{P}_{P^{m+1}_{X,\boldsymbol{T}}}( E_{\text{lip}} ) + \mathbb{P}_{P^{m+1}_{\boldsymbol{Z}}}( \mathcal{E}_{\mathrm{disc}} ) - 1 \overset{(c)}{\ge} 1-\alpha-\delta,
\end{align*}
where $(a)$ follows from Lemma \ref{lemma_algorithm_correctness} which holds similarly for $E_{\mathrm{lip}}$ instead of $\mathcal{E}_{\mathrm{lip}}$, $(b)$ is a union bound and marginalization on the second term, and in $(c)$ we use the coverage   guarantees from CP in each of the terms respectively. \end{proof}
In practice, however, we cannot compute  $L^{\mathrm{cont}}_i$. Nonetheless, this setting motivates our approach in which we assume to have high-frequency data that bridges the gap to $L^{\mathrm{cont}}_i$. 

\arxiv{\subsection{Gap Stochastic Dominance}}
\conference{\textbf{Gap Stochastic Dominance.}} Besides having access to trajectories sampled at low frequency, we  assume to have access to trajectories sampled at higher frequency. Concretely, our data is now generated via $(X^{(i)},\boldsymbol{T}^{(i)},\boldsymbol{T}_{\mathrm{full}}^{(i)})\overset{iid}{\sim}P_{X,\boldsymbol{T},\boldsymbol{T}_{\mathrm{\mathrm{full}}}}$ for $i\in [m+1]$. \arxiv{We note that the test trajectory does not need to be sampled at high frequency so that we could also write $(X^{(m+1)},\boldsymbol{T}^{(m+1)}){\sim}P_{X,\boldsymbol{T}}$, i.e., we only require high-frequency trajectories at calibration time but not  necessarily at test time, which is avoided here for simplicity.} This gives us the calibration dataset
\begin{equation}\label{calibration_dataset}
    \mathcal{D}_{\mathrm{cal}}\coloneqq \Big\{\Big(\boldsymbol{Z}^{(i)}, \boldsymbol{Z}^{(i)}_{\mathrm{full}}\Big)\Big\}_{i=1}^{m},
\end{equation}
where $\boldsymbol{Z}^{(i)}=(\boldsymbol{X}^{(i)}, \boldsymbol{T}^{(i)})$, $\boldsymbol{Z}^{(i)}_{\mathrm{full}}=( \boldsymbol{X}^{(i)}_{\mathrm{full}}, \boldsymbol{T}^{(i)}_{\mathrm{full}})$, and  $\boldsymbol{X}^{(i)}_{\mathrm{full}}$ denotes the $i-$th trajectory sampled at higher frequency with $\boldsymbol{T}^{(i)}_{\mathrm{full}}:=\{t^{(i)}_{1,\mathrm{full}}, \dots, t^{(i)}_{M_i,\mathrm{full}}\}$ for which we have $N_i<M_i$. For example, $\boldsymbol{Z}^{(i)}$ could be a downsampled version of $\boldsymbol{Z}^{(i)}_{\mathrm{full}}$, i.e. $\boldsymbol{T}^{(i)}\subseteq \boldsymbol{T}^{(i)}_{\mathrm{full}}$.

Next, we define  the maximum slope over the high-frequency and the low frequency trajectories $L^{\mathrm{full}}_i := \max\limits_{j\in \{1,\dots, M_i-1\}} ||X^{(i)}_{t_{j+1,\mathrm{full}}^{(i)}} - X^{(i)}_{t_{j,\mathrm{full}}^{(i)}}||/({t_{j+1,\mathrm{full}}^{(i)} -t_{j,\mathrm{full}}^{(i)}})$ and $L^{\mathrm{lf}}_i := \max_{j\in \{1,\dots, N_i-1\}} ||X^{(i)}_{t_{j+1}^{(i)}} - X^{(i)}_{t_j^{(i)}}||/({t_{j+1}^{(i)} -t_j^{(i)}})$, respectively. Note that $L^{\mathrm{full}}_i\leq L^{\mathrm{cont}}_i$  since $\boldsymbol{T}^{(i)}_{\mathrm{full}}\subset[0,T^*]$. Furthermore, define the hidden and the observable gaps $\Delta^{\mathrm{hidden}}_i := L^{\mathrm{cont}}_i - L^{\mathrm{full}}_i$ and $\Delta^{\mathrm{obs}}_i := L^{\mathrm{full}}_i - L^{\mathrm{lf}}_i$. The idea  is to use the observable gap $\Delta^{\mathrm{obs}}_i$ as a proxy of the hidden gap $\Delta^{\mathrm{hidden}}_i$, for which we make the next assumption.
\begin{assumption}\label{stochastic_dominance_assumption} For all $w\in \mathbb{R}$ and $i\in[m+1]$, it holds that $$\mathbb{P}_{\boldsymbol{Z},\boldsymbol{Z}_{\mathrm{full}}}(\Delta_i^{\mathrm{obs}} \leq w) \leq \mathbb{P}_{X, \boldsymbol{Z}_{\mathrm{full}}}(\Delta_i^{\mathrm{hidden}} \leq w).$$
\end{assumption}
 Assumption \ref{stochastic_dominance_assumption} is a stochastic dominance assumption which has previously appeared in the literature \cite{einbinder2024label}. Here, Assumption \ref{stochastic_dominance_assumption} postulates that the hidden gap is stochastically dominated by the observed gap, i.e.,  $\Delta^{\mathrm{hidden}}_i$ is more likely to take smaller values than $\Delta^{\mathrm{obs}}_i$. Intuitively, dense sampling accurately captures the maximum slope of the trajectory, yielding a small hidden gap $\Delta^{\mathrm{hidden}}_i$. Conversely, sparse sampling frequently misses sharp trajectory variations, causing a larger underestimation of the slope and a wider observable gap $\Delta^{\mathrm{obs}}_i$, see  Fig. \ref{dominance_figure_tikz}. Thus, it is reasonable to assume $\Delta^{\mathrm{hidden}}_i$ is stochastically dominated by $\Delta^{\mathrm{obs}}_i$.
\arxiv{In Appendix \ref{theoretical_results_dominance_section} we provide  justification supporting  \Cref{stochastic_dominance_assumption} by analyzing its satisfaction in terms of the sampling frequencies.}
 \conference{\textcolor{red}{
 In the Appendix of the extended version of our paper \cite{alvarez2026continuous_conformal} we provide  justification supporting  \Cref{stochastic_dominance_assumption} by analyzing its satisfaction in terms of the sampling frequencies}}
 \conference{
\begin{figure}[H]
\centering
\begin{tikzpicture}[scale=0.5, >=Latex]
\def\xStart{1.5}    
\def\xMid{3.5}      
\def\xEnd{5.0}      
\def\trajectory{0.4 + 2.4*exp(-1.5*(\x-3.2)^2) + 0.1*\x}

\begin{scope}
    \draw[->, gray, thick] (0.5, 0) -- (6.5, 0) node[right, text=black] {$t$};
    \draw[->, gray, thick] (0.5, 0) -- (0.5, 3.8) node[above, text=black] {$X_t$};
    \node[font=\footnotesize\bfseries] at (3.9, 3.8) {Sparse Sampling $\boldsymbol{T}$};

    \draw[cyan!70!black, ultra thick, domain=0.8:6.0, samples=60] plot (\x, {\trajectory});

    \pgfmathsetmacro{\yStart}{0.4 + 2.4*exp(-1.5*(\xStart-3.2)^2) + 0.1*\xStart}
    \pgfmathsetmacro{\yMid}{0.4 + 2.4*exp(-1.5*(\xMid-3.2)^2) + 0.1*\xMid}
    \pgfmathsetmacro{\yEnd}{0.4 + 2.4*exp(-1.5*(\xEnd-3.2)^2) + 0.1*\xEnd}

    \draw[orange!90!black, thick] (\xStart, \yStart) -- (\xMid, \yMid);
    \draw[orange!90!black, thick] (\xMid, \yMid) -- (\xEnd, \yEnd);

    \filldraw[orange!90!black, fill=white, thick] (\xStart, \yStart) circle (3pt) node[below=4pt, text=black] {};
    \filldraw[orange!90!black, fill=white, thick] (\xMid, \yMid) circle (3pt) node[below=4pt, text=black] {};
    \filldraw[orange!90!black, fill=white, thick] (\xEnd, \yEnd) circle (3pt) node[below=4pt, text=black] {};

    \draw[thin, color=gray] (\xMid, \yMid) -- (\xMid, \yStart) -- (\xStart, \yStart);
    \node[right, font=\scriptsize, align=left] at (\xMid, 1.8) {};
\end{scope}

\begin{scope}[shift={(7.5, 0)}]
    \draw[->, gray, thick] (0.5, 0) -- (6.5, 0) node[right, text=black] {$t$};
    \draw[->, gray, thick] (0.5, 0) -- (0.5, 3.8) node[above, text=black] {$X_t$};
    \node[font=\footnotesize\bfseries] at (3.9, 3.8) {Dense Sampling $\boldsymbol{T}_{\mathrm{full}}$};

    \draw[cyan!70!black, ultra thick, domain=0.8:6.0, samples=60] plot (\x, {\trajectory});

    \def\prevX{\xStart}
    \pgfmathsetmacro{\prevY}{0.4 + 2.4*exp(-1.5*(\prevX-3.2)^2) + 0.1*\prevX}
    
    \foreach \step in {1,2,3,4,5,6,7} {
        \pgfmathsetmacro{\currX}{\xStart + \step*0.5}
        \pgfmathsetmacro{\currY}{0.4 + 2.4*exp(-1.5*(\currX-3.2)^2) + 0.1*\currX}
        
        \draw[green!60!black, thick] (\prevX, \prevY) -- (\currX, \currY);
        \filldraw[green!60!black, fill=white, thick] (\prevX, \prevY) circle (2.5pt);
        
        \xdef\prevX{\currX}
        \xdef\prevY{\currY}
    }
    
    \filldraw[green!60!black, fill=white, thick] (\xEnd, \prevY) circle (2.5pt);

    \node[below=4pt] at (\xStart, 0.55) {};
    \node[below=4pt] at (\xEnd, 0.9) {};

    \pgfmathsetmacro{\xDenseA}{2.5}
    \pgfmathsetmacro{\yDenseA}{0.4 + 2.4*exp(-1.5*(\xDenseA-3.2)^2) + 0.1*\xDenseA}
    
    \pgfmathsetmacro{\xDenseB}{3.0}
    \pgfmathsetmacro{\yDenseB}{0.4 + 2.4*exp(-1.5*(\xDenseB-3.2)^2) + 0.1*\xDenseB}

    \draw[thin, color=gray] (\xDenseB, \yDenseB) -- (\xDenseB, \yDenseA) -- (\xDenseA, \yDenseA);
    
\end{scope}
\end{tikzpicture}
\caption{\textbf{Left:} Sparse sampling underestimating the Lipschitz constant. \textbf{Right:} Dense sampling giving accurate estimates.}
\label{dominance_figure_tikz}
\end{figure}
 \cmmnt{Analytically validating Assumption \ref{stochastic_dominance_assumption} is generally challenging. In Appendix \ref{theoretical_results_dominance_section}, we provide conditions on the subsampling frequency under which Assumption \ref{stochastic_dominance_assumption} holds.}}
 \arxiv{\begin{figure}[H]
\centering
\begin{tikzpicture}[scale=0.8, >=Latex]
\def\xStart{1.5}    
\def\xMid{3.5}      
\def\xEnd{5.0}      
\def\trajectory{0.4 + 2.4*exp(-1.5*(\x-3.2)^2) + 0.1*\x}

\begin{scope}
    \draw[->, gray!80, thick] (0.5, 0) -- (6.5, 0) node[right, text=black] {$t$};
    \draw[->, gray!80, thick] (0.5, 0) -- (0.5, 3.8) node[above, text=black] {$X_t$};
    \node[font=\footnotesize\bfseries] at (3.9, 3.8) {Sparse Sampling $\boldsymbol{T}$};

    \draw[gray!70!black, ultra thick, domain=0.8:6.0, samples=60] plot (\x, {\trajectory});

    \pgfmathsetmacro{\yStart}{0.4 + 2.4*exp(-1.5*(\xStart-3.2)^2) + 0.1*\xStart}
    \pgfmathsetmacro{\yMid}{0.4 + 2.4*exp(-1.5*(\xMid-3.2)^2) + 0.1*\xMid}
    \pgfmathsetmacro{\yEnd}{0.4 + 2.4*exp(-1.5*(\xEnd-3.2)^2) + 0.1*\xEnd}

    \draw[orange!85!black, thick] (\xStart, \yStart) -- (\xMid, \yMid);
    \draw[orange!85!black, thick] (\xMid, \yMid) -- (\xEnd, \yEnd);

    \filldraw[orange!85!black, fill=orange!20, thick] (\xStart, \yStart) circle (3.5pt);
    \filldraw[orange!85!black, fill=orange!20, thick] (\xMid, \yMid) circle (3.5pt);
    \filldraw[orange!85!black, fill=orange!20, thick] (\xEnd, \yEnd) circle (3.5pt);

    \draw[thin, color=gray!60, dashed] (\xMid, \yMid) -- (\xMid, \yStart) -- (\xStart, \yStart);
\end{scope}

\begin{scope}[shift={(7.5, 0)}]
    \draw[->, gray!80, thick] (0.5, 0) -- (6.5, 0) node[right, text=black] {$t$};
    \draw[->, gray!80, thick] (0.5, 0) -- (0.5, 3.8) node[above, text=black] {$X_t$};
    \node[font=\footnotesize\bfseries] at (3.9, 3.8) {Dense Sampling $\boldsymbol{T}_{\mathrm{full}}$};

    \draw[gray!70!black, ultra thick, domain=0.8:6.0, samples=60] plot (\x, {\trajectory});

    \def\prevX{\xStart}
    \pgfmathsetmacro{\prevY}{0.4 + 2.4*exp(-1.5*(\prevX-3.2)^2) + 0.1*\prevX}
    
    \foreach \step in {1,2,3,4,5,6,7} {
        \pgfmathsetmacro{\currX}{\xStart + \step*0.5}
        \pgfmathsetmacro{\currY}{0.4 + 2.4*exp(-1.5*(\currX-3.2)^2) + 0.1*\currX}
        
        \draw[green!55!black, thick] (\prevX, \prevY) -- (\currX, \currY);
        \filldraw[green!55!black, fill=green!20, thick] (\prevX, \prevY) circle (2.5pt);
        
        \xdef\prevX{\currX}
        \xdef\prevY{\currY}
    }
    
    \filldraw[green!55!black, fill=green!20, thick] (\xEnd, \prevY) circle (2.5pt);

    \pgfmathsetmacro{\xDenseA}{2.5}
    \pgfmathsetmacro{\yDenseA}{0.4 + 2.4*exp(-1.5*(\xDenseA-3.2)^2) + 0.1*\xDenseA}
    \pgfmathsetmacro{\xDenseB}{3.0}
    \pgfmathsetmacro{\yDenseB}{0.4 + 2.4*exp(-1.5*(\xDenseB-3.2)^2) + 0.1*\xDenseB}

    \draw[thin, color=gray!60, dashed] (\xDenseB, \yDenseB) -- (\xDenseB, \yDenseA) -- (\xDenseA, \yDenseA);
\end{scope}
\end{tikzpicture}
\caption{\textbf{Left:} Sparse sampling underestimating the Lipschitz constant. \textbf{Right:} Dense sampling giving accurate estimates.}
\label{dominance_figure_tikz}
\end{figure}
 \cmmnt{Analytically validating Assumption \ref{stochastic_dominance_assumption} is generally challenging. In Appendix \ref{theoretical_results_dominance_section}, we provide conditions on the subsampling frequency under which Assumption \ref{stochastic_dominance_assumption} holds.}}
 In our experiments, we subsample high-frequency trajectories to approximate the unobservable $\Delta_i^{\mathrm{hidden}}$ via a dense sampling-based surrogate $\hat{\Delta}_i^{\mathrm{hidden}}$,  and check if the empirical CDFs satisfy Assumption \ref{stochastic_dominance_assumption}. Generally, we observe that the sub-sampling frequencies are a tuning knob where higher ratios typically result in satisfaction of Assumption \ref{stochastic_dominance_assumption}.\arxiv{ Both empirical evidence and the theoretical results in Appendix \ref{theoretical_results_dominance_section} support this.}
 
Let  $\alpha_1,\alpha_2\in (0,1)$ be error rates. Next, apply conformal prediction to the observable quantities $L^{\mathrm{full}}_i$ and $\Delta^{\mathrm{obs}}_i$:
\begin{equation}\label{quantile_terms}
\begin{aligned}
q_{\mathrm{full}} &:= \text{Quantile}\left(\{L^{\mathrm{full}}_i\}_{i=1}^m; \left(1-\alpha_1\right)(1+\frac{1}{m})\right),\\
q_{\mathrm{obs}} &:= \text{Quantile}\left(\{\Delta^{\mathrm{obs}}_i\}_{i=1}^m; \left(1-\alpha_2\right)(1+\frac{1}{m})\right).
\end{aligned}
\end{equation}
Our estimated regularity parameter is now given by
\begin{equation}\label{predicted_L}
    \hat{L}_{\text{gap}} := q_{\mathrm{full}} + q_{\mathrm{obs}}.
\end{equation}

\begin{lemma}\label{lemma_gap_dominance}
    Let Assumption \ref{stochastic_dominance_assumption} hold and let $\alpha,\alpha_1,\alpha_2\in (0,1)$ be such that $\alpha_1+\alpha_2=\alpha$. Suppose $(X^{(i)},\boldsymbol{T}^{(i)},\boldsymbol{T}_{\mathrm{full}}^{(i)})\overset{iid}{\sim}P_{X,\boldsymbol{T},\boldsymbol{T}_{\mathrm{\mathrm{full}}}}$ for $i\in [m+1]$. Then,  $\hat{L}_{\text{gap}}$ from (\ref{predicted_L}) satisfies
\begin{equation}
\mathbb{P}_{P_{X,\boldsymbol{T}, \boldsymbol{T}_{\mathrm{full}}}^{m+1}}\big({L_{m+1}^{\mathrm{cont}} }\leq\hat{L}_{\text{gap}}\big)\geq 1-\alpha.
\end{equation}
\end{lemma}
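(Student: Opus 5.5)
Since $L^{\mathrm{cont}}_{m+1} = L^{\mathrm{full}}_{m+1} + \Delta^{\mathrm{hidden}}_{m+1}$ by definition of the hidden gap, the proof reduces to a union bound over two events. Define
\[
E_1 := \{L^{\mathrm{full}}_{m+1} \le q_{\mathrm{full}}\}, \qquad E_2 := \{\Delta^{\mathrm{hidden}}_{m+1} \le q_{\mathrm{obs}}\}.
\]
On $E_1\cap E_2$ we have $L^{\mathrm{cont}}_{m+1} \le q_{\mathrm{full}} + q_{\mathrm{obs}} = \hat{L}_{\mathrm{gap}}$. It therefore suffices to show $\mathbb{P}(E_1^c)\le \alpha_1$ and $\mathbb{P}(E_2^c)\le\alpha_2$. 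Then $\mathbb{P}(E_1\cap E_2)\ge 1-\alpha_1-\alpha_2 = 1-\alpha$, in the same style as the proof of Theorem \ref{oracle_lipschitz_theorem}.

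\textbf{Bound on $E_1$.} Each $L^{\mathrm{full}}_i$ is a fixed measurable function of the i.i.d. tuple $(X^{(i)},\boldsymbol{T}^{(i)},\boldsymbol{T}^{(i)}_{\mathrm{full}})$. Hence $L^{\mathrm{full}}_1,\dots,L^{\mathrm{full}}_{m+1}$ are i.i.d., and in particular exchangeable. Split CP (Theorem 3.2 of \cite{angelopoulos2025theoreticalfoundationsconformalprediction}) then gives $\mathbb{P}(L^{\mathrm{full}}_{m+1}\le q_{\mathrm{full}})\ge 1-\alpha_1$ directly.

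\textbf{Bound on $E_2$.} The $\Delta^{\mathrm{hidden}}_{m+1}$ is not exchangeable with the calibration scores $\Delta^{\mathrm{obs}}_i$, so this step needs Assumption \ref{stochastic_dominance_assumption}. I will introduce the unobserved test quantity $\Delta^{\mathrm{obs}}_{m+1}$, computed from the same test tuple. The scores $\Delta^{\mathrm{obs}}_1,\dots,\Delta^{\mathrm{obs}}_{m+1}$ are i.i.d., so split CP gives $\mathbb{P}(\Delta^{\mathrm{obs}}_{m+1}\le q_{\mathrm{obs}})\ge 1-\alpha_2$.

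To transfer this to $\Delta^{\mathrm{hidden}}_{m+1}$, I condition on the calibration data $\mathcal{D}_{\mathrm{cal}}$. Then $q_{\mathrm{obs}}$ is a fixed number $w$, and the test tuple is independent of $\mathcal{D}_{\mathrm{cal}}$. Assumption \ref{stochastic_dominance_assumption} applied at $w$ and $i=m+1$ yields
\[
\mathbb{P}(\Delta^{\mathrm{hidden}}_{m+1}\le w\mid \mathcal{D}_{\mathrm{cal}}) \ge \mathbb{P}(\Delta^{\mathrm{obs}}_{m+1}\le w\mid\mathcal{D}_{\mathrm{cal}}).
\]
Taking expectations over $\mathcal{D}_{\mathrm{cal}}$ gives $\mathbb{P}(E_2)\ge \mathbb{P}(\Delta^{\mathrm{obs}}_{m+1}\le q_{\mathrm{obs}})\ge 1-\alpha_2$.

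The main obstacle is justifying this conditioning step rigorously. It needs two facts. First, $q_{\mathrm{obs}}$ is $\mathcal{D}_{\mathrm{cal}}$-measurable and independent of the test sample. Second, the marginal CDFs in Assumption \ref{stochastic_dominance_assumption} are the conditional ones once $\mathcal{D}_{\mathrm{cal}}$ is fixed, which follows from independence (via Fubini). I would also note that the assumption holds for all $w\in\mathbb{R}$, which covers the possibly infinite quantile case $q_{\mathrm{obs}}=+\infty$ when $\lceil (m+1)(1-\alpha_2)\rceil>m$ trivially.
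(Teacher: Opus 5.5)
Your proposal is correct and follows the paper's proof essentially step for step. The paper also uses the decomposition $L^{\mathrm{cont}}_{m+1}=L^{\mathrm{full}}_{m+1}+\Delta^{\mathrm{hidden}}_{m+1}$, a union bound over $\{L^{\mathrm{full}}_{m+1}\le q_{\mathrm{full}}\}$ and $\{\Delta^{\mathrm{hidden}}_{m+1}\le q_{\mathrm{obs}}\}$, and split CP, and it transfers the guarantee from $\Delta^{\mathrm{obs}}_{m+1}$ to $\Delta^{\mathrm{hidden}}_{m+1}$ by treating $q_{\mathrm{obs}}$ as an independent threshold and applying Assumption~\ref{stochastic_dominance_assumption} via the tower rule (its footnote), exactly as your conditioning argument does, with your independence and $q_{\mathrm{obs}}=+\infty$ remarks only making implicit details explicit.
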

\begin{proof}
    First, note that $
L^{\mathrm{cont}}_{m+1} = L^{\mathrm{full}}_{m+1} + \Delta^{\mathrm{hidden}}_{m+1}.$
Furthermore, we note that \conference{$\{L^{\mathrm{full}}_{m+1} \le q_{\mathrm{full}}\} \cap \{\Delta^{\mathrm{hidden}}_{m+1} \le q_{\mathrm{obs}}\}
\subseteq\{L^{\mathrm{full}}_{m+1} + \Delta^{\mathrm{hidden}}_{m+1} \leq q_{\mathrm{full}} + q_{\mathrm{obs}}\}.$}\arxiv{ $$\{L^{\mathrm{full}}_{m+1} \le q_{\mathrm{full}}\} \cap \{\Delta^{\mathrm{hidden}}_{m+1} \le q_{\mathrm{obs}}\}
\subseteq\{L^{\mathrm{full}}_{m+1} + \Delta^{\mathrm{hidden}}_{m+1} \leq q_{\mathrm{full}} + q_{\mathrm{obs}}\}.$$}
Therefore, we have  
\conference{\begin{align*}
&\mathbb{P}_{P_{X,\boldsymbol{T}, \boldsymbol{T}_{\mathrm{full}}}^{m+1}}\big(L^{\mathrm{cont}}_{m+1} \le \hat{L}_{\text{gap}}\big) \\
&\overset{(a)}{\geq} \mathbb{P}_{P_{X, \boldsymbol{Z},\boldsymbol{Z}_{\mathrm{full}}}^{m+1}}\big(\{L^{\mathrm{full}}_{m+1} \le q_{\mathrm{full}}\}\cap \{\Delta^{\mathrm{hidden}}_{m+1} \le q_{\mathrm{obs}}\} \big) \\
&\overset{(b)}{\ge} 1 - \Big[ \mathbb{P}_{P_{\boldsymbol{Z}_{\mathrm{full}}}^{m+1}}(L^{\mathrm{full}}_{m+1} > q_{\mathrm{full}}) \\
&\qquad \qquad + \mathbb{P}_{P_{X,\boldsymbol{Z},\boldsymbol{Z}_{\mathrm{full}}}^{m+1}}(\Delta^{\mathrm{hidden}}_{m+1} > q_{\mathrm{obs}}) \Big] \\
&\overset{(c)}{\geq} 1 - \Big[ \mathbb{P}_{P_{\boldsymbol{Z}_{\mathrm{full}}}^{m+1}}(L^{\mathrm{full}}_{m+1} > q_{\mathrm{full}}) \\
&\qquad \qquad + \mathbb{P}_{P_{\boldsymbol{Z},\boldsymbol{Z}_{\mathrm{full}}}^{m+1}}(\Delta^{\mathrm{obs}}_{m+1} > q_{\mathrm{obs}}) \Big] \overset{(d)}{\geq}  1 - \alpha,
\end{align*}}
\arxiv{
\begin{align*}
&\mathbb{P}_{P_{X,\boldsymbol{T}, \boldsymbol{T}_{\mathrm{full}}}^{m+1}}\big(L^{\mathrm{cont}}_{m+1} \le \hat{L}_{\text{gap}}\big) \\
&\overset{(a)}{\geq} \mathbb{P}_{P_{X, \boldsymbol{Z},\boldsymbol{Z}_{\mathrm{full}}}^{m+1}}\big(\{L^{\mathrm{full}}_{m+1} \le q_{\mathrm{full}}\}\cap \{\Delta^{\mathrm{hidden}}_{m+1} \le q_{\mathrm{obs}}\} \big) \\
&\overset{(b)}{\ge} 1 - \Big[ \mathbb{P}_{P_{\boldsymbol{Z}_{\mathrm{full}}}^{m+1}}(L^{\mathrm{full}}_{m+1} > q_{\mathrm{full}}) + \mathbb{P}_{P_{X,\boldsymbol{Z},\boldsymbol{Z}_{\mathrm{full}}}^{m+1}}(\Delta^{\mathrm{hidden}}_{m+1} > q_{\mathrm{obs}}) \Big] \\
&\overset{(c)}{\geq} 1 - \Big[ \mathbb{P}_{P_{\boldsymbol{Z}_{\mathrm{full}}}^{m+1}}(L^{\mathrm{full}}_{m+1} > q_{\mathrm{full}})+ \mathbb{P}_{P_{\boldsymbol{Z},\boldsymbol{Z}_{\mathrm{full}}}^{m+1}}(\Delta^{\mathrm{obs}}_{m+1} > q_{\mathrm{obs}}) \Big]\\
&\overset{(d)}{\geq}  1 - \alpha,
\end{align*}}
where we use monotonicity of the probability measure in $(a)$, in $(b)$ we use a union bound, in $(c)$ we use Assumption \ref{stochastic_dominance_assumption} and preservation of dominance under independent random thresholds\footnote{If $X\text{ and } Y$ are random variables whose CDF satisfy $F_X(z) \le F_Y(z)$ for all $z \in \mathbb{R}$ and $Q$ is independent of both, then by the tower rule: $\mathbb{P}_{X,Y,Q}(X\leq Q)= \mathbb{E}_{Q}[\mathbb{P}(X\leq Q|Q)]= \mathbb{E}_{Q}[F_{X}(Q)]\leq \mathbb{E}_{Q}[F_{Y}(Q)]=\mathbb{E}_{Q}[\mathbb{P}(Y\leq Q|Q)]=\mathbb{P}_{X,Y,Q}(Y\leq Q)$.}, and $(d)$ follows from conformal prediction.
\end{proof}
\conference{We can now use $\hat{L}_{\text{gap}}$ instead of $L$ in Algorithm \ref{high_dimensional_predictions_alg}. Due to  Lemma \ref{lemma_gap_dominance}, we get the same guarantees as in Theorem \ref{oracle_lipschitz_theorem}. \textcolor{red}{We refer the details to  \cite{alvarez2026continuous_conformal} due to space constraints.}}
\arxiv{
Once we have a valid regularity parameter for the test trajectory, we can use it as an input to Algorithm \ref{high_dimensional_predictions_alg} to get a continuous-time prediction region based on discrete-time prediction regions.}
\arxiv{
\begin{theorem}\label{Lipschitz_estimation_theorem} Let the assumptions of Lemma \ref{lemma_gap_dominance} hold, with fixed sampling times. Then, using $\hat{L}_{\text{gap}}$ from (\ref{predicted_L}) as an input to obtain the  prediction regions $\mathcal{C}_t$ from Algorithm \ref{high_dimensional_predictions_alg} we get
\begin{equation}
\mathbb{P}_{P^{m+1}_{X,\boldsymbol{T}, \boldsymbol{T}_{\mathrm{full}}}}\Big(X^{(m+1)}_t \in \mathcal{C}_t, \forall t \in [0, T^*] \Big)\geq 1-\alpha-\delta.
\end{equation}
\end{theorem}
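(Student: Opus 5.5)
The argument follows the proof of Theorem \ref{oracle_lipschitz_theorem}, with $\hat{L}_{\mathrm{oracle}}$ replaced by $\hat{L}_{\text{gap}}$ and the conformal guarantee for the oracle replaced by Lemma \ref{lemma_gap_dominance}. Define the event $E_{\mathrm{gap}} := \{L^{\mathrm{cont}}_{m+1} \le \hat{L}_{\text{gap}}\}$, and let $\mathcal{E}_{\mathrm{disc}}$ be as before, now computed with the fixed sampling times $\boldsymbol{T}$. The plan has three steps. First, show $E_{\mathrm{gap}}\cap\mathcal{E}_{\mathrm{disc}}\subseteq\mathcal{E}_{\mathcal{C}}$ when Algorithm \ref{high_dimensional_predictions_alg} is run with $L=\hat{L}_{\text{gap}}$. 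Second, apply a union bound. Third, bound each term.

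\emph{Step 1 (deterministic inclusion).} On $E_{\mathrm{gap}}$, the definition of $L^{\mathrm{cont}}_{m+1}$ as the supremum of difference quotients gives $\|X^{(m+1)}_s - X^{(m+1)}_t\|\le \hat{L}_{\text{gap}}|s-t|$ for all $s,t\in[0,T^*]$. This is exactly $\mathcal{E}_{\mathrm{lip}}$ with $L$ replaced by the realized value of $\hat{L}_{\text{gap}}$. The proof of Lemma \ref{lemma_algorithm_correctness} is pathwise: it only uses that the constant fed into the algorithm bounds the trajectory's increments. So it applies verbatim once we condition on the realization of $\hat{L}_{\text{gap}}$, even though this constant is now random and depends on the calibration data. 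This gives $E_{\mathrm{gap}}\cap\mathcal{E}_{\mathrm{disc}}\subseteq\mathcal{E}_{\mathcal{C}}$.

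\emph{Step 2 (union bound and marginalization).} Monotonicity and a union bound give
\begin{align*}
\mathbb{P}_{P^{m+1}_{X,\boldsymbol{T},\boldsymbol{T}_{\mathrm{full}}}}(\mathcal{E}_{\mathcal{C}})
\ge \mathbb{P}_{P^{m+1}_{X,\boldsymbol{T},\boldsymbol{T}_{\mathrm{full}}}}(E_{\mathrm{gap}}) + \mathbb{P}_{P^{m+1}_{X,\boldsymbol{T},\boldsymbol{T}_{\mathrm{full}}}}(\mathcal{E}_{\mathrm{disc}}) - 1 .
\end{align*}
The event $\mathcal{E}_{\mathrm{disc}}$ depends only on the low-frequency observations $\boldsymbol{Z}^{(1)},\dots,\boldsymbol{Z}^{(m+1)}$. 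Marginalizing out $\boldsymbol{T}_{\mathrm{full}}$ therefore reduces its probability to $\mathbb{P}_{P^{m+1}_{\boldsymbol{Z}}}(\mathcal{E}_{\mathrm{disc}})$.

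\emph{Step 3 (bounding each term).} The $\boldsymbol{Z}^{(i)}$ are i.i.d. because the triples $(X^{(i)},\boldsymbol{T}^{(i)},\boldsymbol{T}^{(i)}_{\mathrm{full}})$ are. With fixed sampling times, the split CP argument of Section \ref{problem_solution_section} then gives $\mathbb{P}(\mathcal{E}_{\mathrm{disc}})\ge 1-\delta$. Lemma \ref{lemma_gap_dominance} gives $\mathbb{P}(E_{\mathrm{gap}})\ge 1-\alpha$. Adding the two bounds yields $1-\alpha-\delta$.

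The only delicate point is Step 1. One must check that Lemma \ref{lemma_algorithm_correctness} is a statement about a single realization, so it is unaffected by $\hat{L}_{\text{gap}}$ being random and correlated with $\hat{q}$ through the shared calibration data. Because the union bound in Step 2 does not require independence between $E_{\mathrm{gap}}$ and $\mathcal{E}_{\mathrm{disc}}$, this dependence causes no further difficulty.
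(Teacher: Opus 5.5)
Your proposal is correct and follows the paper's proof essentially step for step. The paper also sets $\tilde{\mathcal{E}}_{\mathrm{lip}}=\{L^{\mathrm{cont}}_{m+1}\le\hat{L}_{\text{gap}}\}$ and applies Lemma \ref{lemma_algorithm_correctness} pathwise to get $\tilde{\mathcal{E}}_{\mathrm{lip}}\cap\mathcal{E}_{\mathrm{disc}}\subseteq\mathcal{E}_{\mathcal{C}}$, then concludes with a union bound, marginalization, split CP, and Lemma \ref{lemma_gap_dominance}. Your remark that the shared calibration data coupling $\hat{L}_{\text{gap}}$ and $\hat{q}$ is harmless is correct and left implicit in the paper: $L$ does not enter $S^{\mathrm{lf}}$, and the union bound needs no independence.
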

\begin{proof}
We rely on the same ideas of Lemma \ref{lemma_algorithm_correctness} and Theorem \ref{theorem_base}. Taking $\tilde{\mathcal{E}}_{\mathrm{lip}}=\{L^{\mathrm{cont}}_{m+1} \le \hat{L}_{\text{gap}}\}$ in Lemma \ref{lemma_algorithm_correctness}, we also have that $\tilde{\mathcal{E}}_{\mathrm{lip}}\cap\mathcal{E}_{\mathrm{disc}}\subseteq \mathcal{E}_{\mathcal{C}}$, where here $\mathcal{C}$ is the output prediction region of Algorithm \ref{high_dimensional_predictions_alg}, and $\mathcal{E}_{\mathrm{disc}}$ denotes the discrete time prediction regions for the test trajectory in the sparse sampling times $\boldsymbol{T}^{(m+1)}$ (or alternatively, this naturally  also accommodates the setting where the discrete-time prediction regions are in the high-frequency times $\boldsymbol{T}_{\mathrm{full}}^{(m+1)}$). Hence, 
\begin{align*}
&\mathbb{P}_{P^{m+1}_{X,\boldsymbol{T},\boldsymbol{T}_{\mathrm{full}}}}\Big(X^{(m+1)}_t \in \mathcal{C}_t, \forall t \in [0, T^*] \Big)\\
&\overset{(e)}{\geq} \mathbb{P}_{P^{m+1}_{X,\boldsymbol{T}, \boldsymbol{T}_{\mathrm{full}}}}( \tilde{\mathcal{E}}_{\mathrm{lip}} \cap \mathcal{E}_{\mathrm{disc}} ) \\
  &\overset{(f)}{\ge}  \mathbb{P}_{P^{m+1}_{X,\boldsymbol{T}, \boldsymbol{T}_{\mathrm{full}}}}( \tilde{\mathcal{E}}_{\text{lip}} ) + \mathbb{P}_{P^{m+1}_{\boldsymbol{Z}}}( \mathcal{E}_{\mathrm{disc}} ) - 1\\
&\overset{(g)}{\ge} 1-\alpha-\delta,
\end{align*}
where $(e)$ follows from Lemma \ref{lemma_algorithm_correctness} which holds similarly for $\tilde{\mathcal{E}}_{\mathrm{lip}}$ instead of $\mathcal{E}_{\mathrm{lip}}$, $(f)$ is a union bound and marginalization on the second term, and in $(g)$ we use the coverage guarantees from CP, using the discrete time prediction regions and Lemma \ref{lemma_gap_dominance}. \end{proof}
}
\arxiv{
Next, we present an alternative stochastic dominance assumption to obtain a valid estimated regularity parameter.}
\arxiv{\subsection{Sum Stochastic Dominance}}

\conference{\textbf{Sum Stochastic Dominance.}} Instead of making a stochastic dominance assumption on the gaps, we now make a stochastic dominance assumption on $L_i^{\mathrm{full}}$ and the gaps. Recognizing that $L_i^{\mathrm{full}} \leq L_i^{\mathrm{cont}}$, we use $\Delta_i^{\mathrm{obs}}$ so that $L_i^{\mathrm{full}} + \Delta_i^{\mathrm{obs}}$ stochastically upper-bounds  $L_i^{\mathrm{cont}} = L_i^{\mathrm{full}} + \Delta_i^{\mathrm{hidden}}$.
\begin{assumption}\label{sum_stochastic_dominance_assumption}For all $w\in \mathbb{R}$ and $i\in[m+1]$, it holds that
\begin{equation}\label{alternative_sum}
\mathbb{P}_{\boldsymbol{Z},\boldsymbol{Z}_{\mathrm{full}}}( L_i^{\mathrm{full}}+\Delta_i^{\mathrm{obs}}\leq w) \le \mathbb{P}_{X, \boldsymbol{Z}_{\mathrm{full}}}(L_i^{\mathrm{full}}+\Delta_i^{\mathrm{hidden}}\leq  w).  
\end{equation}
\end{assumption}

Our estimated regularity parameter is in this case given by \begin{equation}\label{predicted_L_2}
    \hat{L}_{\mathrm{sum}}\coloneqq \text{Quantile}\big(\{L_{i}^{\mathrm{full}}+\Delta_{i}^{\mathrm{obs}}\}_{i=1}^{m};(1-\alpha)(1+1/m)\big)
\end{equation}
\begin{lemma} Let  Assumption \ref{sum_stochastic_dominance_assumption} hold and let $\alpha,\in (0,1)$. Then,  $\hat{L}_{\mathrm{sum}}$ from (\ref{predicted_L_2}) satisfies
\begin{equation}
\mathbb{P}_{P_{X,\boldsymbol{T}, \boldsymbol{T}_{\mathrm{full}}}^{m+1}}\Big({L_{m+1}^{\mathrm{cont}} }\leq\hat{L}_{\mathrm{sum}}\Big)\geq 1-\alpha.
\end{equation}
\end{lemma}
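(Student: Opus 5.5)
The argument mirrors the proof of Lemma \ref{lemma_gap_dominance}, but without the union bound, since only one quantile is involved. Write $V_i := L_i^{\mathrm{full}}+\Delta_i^{\mathrm{obs}}$ for $i\in[m+1]$. This is an observable score computed from $(\boldsymbol{Z}^{(i)},\boldsymbol{Z}^{(i)}_{\mathrm{full}})$. Also write $U_{m+1} := L_{m+1}^{\mathrm{full}}+\Delta_{m+1}^{\mathrm{hidden}}$. By definition of the hidden gap, $U_{m+1}=L^{\mathrm{cont}}_{m+1}$ identically, so the target event is exactly $\{U_{m+1}\le \hat{L}_{\mathrm{sum}}\}$.

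\emph{Step 1 (conformal coverage for the observable score).} Because the tuples $(X^{(i)},\boldsymbol{T}^{(i)},\boldsymbol{T}^{(i)}_{\mathrm{full}})$ are i.i.d. for $i\in[m+1]$, the scores $V_1,\dots,V_{m+1}$ are i.i.d., hence exchangeable. By split CP (Theorem 3.2 of \cite{angelopoulos2025theoreticalfoundationsconformalprediction}), $\hat{L}_{\mathrm{sum}}$ from (\ref{predicted_L_2}) satisfies $\mathbb{P}(V_{m+1}\le \hat{L}_{\mathrm{sum}})\ge 1-\alpha$.

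\emph{Step 2 (transfer via dominance).} Next we pass from $V_{m+1}$ to $U_{m+1}$ using Assumption \ref{sum_stochastic_dominance_assumption}, which gives $F_{V_{m+1}}(w)\le F_{U_{m+1}}(w)$ for all $w$. The threshold $\hat{L}_{\mathrm{sum}}$ depends only on the calibration indices $i\le m$, so it is independent of the test tuple and therefore of both $V_{m+1}$ and $U_{m+1}$. Applying the tower-rule argument from the footnote in the proof of Lemma \ref{lemma_gap_dominance} with $Q=\hat{L}_{\mathrm{sum}}$ gives
\begin{equation*}
\mathbb{P}(U_{m+1}\le \hat{L}_{\mathrm{sum}})=\mathbb{E}_Q[F_{U_{m+1}}(Q)]\ge \mathbb{E}_Q[F_{V_{m+1}}(Q)]=\mathbb{P}(V_{m+1}\le \hat{L}_{\mathrm{sum}})\ge 1-\alpha .
\end{equation*}
Combining this with $U_{m+1}=L^{\mathrm{cont}}_{m+1}$ yields the claim.

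The only delicate point is Step 2. The footnote's argument needs the threshold to be independent of the test pair, and it needs the comparison to be made under the joint product measure $P^{m+1}_{X,\boldsymbol{T},\boldsymbol{T}_{\mathrm{full}}}$, whose marginals over the test coordinates are the laws appearing in Assumption \ref{sum_stochastic_dominance_assumption}. Both conditions hold here: the i.i.d. structure gives the independence, and marginalizing the product measure over the test coordinates recovers those laws. So conditioning on the calibration data and integrating is legitimate.
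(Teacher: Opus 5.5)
Your proposal is correct and is essentially the argument the paper has in mind when it says the proof is ``similar to Lemma \ref{lemma_gap_dominance}.'' You use the identity $L^{\mathrm{cont}}_{m+1}=L^{\mathrm{full}}_{m+1}+\Delta^{\mathrm{hidden}}_{m+1}$, transfer from the observable score to the hidden one via Assumption \ref{sum_stochastic_dominance_assumption} and the independent-threshold footnote, and conclude with split CP; the union bound and budget split simply disappear because only one quantile is involved.
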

\begin{proof}
Similar to  Lemma \ref{lemma_gap_dominance}, hence omitted. 
\end{proof}
\arxiv{
From this Lemma it is possible to unify the valid predicted Lispchitz constant with Algorithm  \ref{high_dimensional_predictions_alg}, as we showed in Theorem \ref{Lipschitz_estimation_theorem}, we omit the details to avoid repetitive arguments.

In our experiments and in our discussion in Section \ref{discussion_section} we provide a comparison between using Assumption \ref{stochastic_dominance_assumption} and Assumption \ref{sum_stochastic_dominance_assumption}, and their underlying tradeoffs in terms of reliability and efficiency.
}

\section{Prediction Regions under Random Sampling }\label{random_sampling_times_section}
In practice, the test sampling times $\boldsymbol{T}^{(m+1)}$ may be unknown \emph{a priori}, which does not allow us to compute prediction regions as in (\ref{epslions}).  In this section, we build prediction regions without  knowledge of $\boldsymbol{T}^{(m+1)}$. We provide four methods that operate under different assumptions. 
\arxiv{
As a summary and  roadmap of this section, Table \ref{table_methods}  offers a guide to determine the best suited framework for a user's application, with a high-level schematic comparison of these methods in terms of their reliance on a Lipschitz constant, usage of synthetic sampling times, and their underlying sampling distribution requirements.
\begin{table}[H]
\centering
\caption{Methods for random sampling times.}
\label{table_methods}
\resizebox{\textwidth}{!}{%
\begin{tabular}{c|c|c|c}
\hline
Method & Requires Lipschitz constant & Reliance on imputed sampling times & Sampling distribution \\ \hline
Exact Test Sample Imputation    & \ding{51} & \ding{51} & $P_{X}P_{\boldsymbol{T}}$           \\
Calibrated Test Sample Imputation    & \ding{51} & \ding{51} & $P_{X,\boldsymbol{T},\tilde{\boldsymbol{T}} }$ \\
Stochastic Dominance  & \ding{55} & \ding{55} & $P_{X,\boldsymbol{T},\boldsymbol{T}_{\mathrm{full}}}$ \\
Prediction Envelope    & \ding{51} & \ding{55} & $P_{X,\boldsymbol{T}}$           \\ \hline
\end{tabular}%
}
\end{table}

From Table \ref{table_methods} we highlight that each of the methods has nuances and different underlying assumptions that we will present in this section. For example, the Exact Test Sample Imputation requires access to samples from $P_{\boldsymbol{T}}$ to obtain synthetic sampling times and assumes that the trajectories are independent of the sampling times, which allows us to write the joint  sampling distribution $P_{X,\boldsymbol{T}}$ as $P_{X}P_{\boldsymbol{T}}$. As a complement to this table, 
in Fig. \ref{decision_flowchart} we present a decision chart to guide the reader with the most suitable method according to the use case and the properties of the sampling regime and the trajectories. 
\begin{figure}[H]
\centering
\resizebox{0.75\textwidth}{!}{
\begin{tikzpicture}[
    >=Stealth,
    node distance=0.8cm and 1.2cm,
    qbox/.style={
        draw, 
        rectangle, 
        thick, 
        fill=gray!15, 
        align=center, 
        font=\small, 
        minimum height=0.9cm, 
        minimum width=3.0cm,
        inner sep=4pt
    },
    methodbox/.style={
        draw, 
        rectangle, 
        thick, 
        align=center, 
        font=\small\bfseries, 
        minimum height=0.9cm, 
        minimum width=3.2cm,
        inner sep=4pt
    },
    branch/.style={
        font=\footnotesize\bfseries, 
        text=gray!80!black,
        inner sep=2pt
    }
]
    \node[qbox] (q1) {Trajectories satisfy a\\ \textbf{Lipschitz bound}?};

    \node[qbox, above right=0.8cm and 1.1cm of q1] (q_stoch) {Is \textbf{stochastic dominance}\\ satisfied?};

    \node[qbox, below right=0.8cm and 1.1cm of q1] (q_extra) {Are \textbf{synthetic}\\ sampling times available?};

    \node[methodbox, fill=blue!40, right=1.3cm of q_stoch] (m_stoch) {
        Stochastic Dominance\\
        \mdseries\footnotesize(Section \ref{stochastic_dominance_section})
    };

    \node[methodbox, fill=orange!40, right=1.3cm of q_extra] (m_envelope) {
        Prediction Envelope\\
        \mdseries\footnotesize(Section \ref{prediction_envelope_section})
    };

    \node[qbox, below=1.0cm of q_extra] (q_impute_type) {Access to \textbf{samples from $P_{\boldsymbol{T}}$}\\ with $X \perp \!\!\! \perp \boldsymbol{T}$?};

    \node[methodbox, fill=purple!30, right=1.3cm of q_impute_type] (m_test_impute) {Exact 
        Test Sample Imputation\\
        \mdseries\footnotesize(Section \ref{test_sample_imputation_section})
    };

    \node[methodbox, fill=green!35, below=0.8cm of m_test_impute] (m_calib_impute) {
        Calibrated Test Sample Imputation\\
        \mdseries\footnotesize(Section \ref{calibrated_imputations_section})
    };


    \draw[->, thick] (q1.north) |- (q_stoch.west) 
        node[branch, pos=0.25, left] {No};

    \draw[->, thick] (q1.south) |- (q_extra.west) 
        node[branch, pos=0.25, left] {Yes};

    \draw[->, thick] (q_stoch.east) -- (m_stoch.west) 
        node[branch, pos=0.45, above] {Yes};

    \draw[->, thick] (q_stoch.north) -- ++(0,0.5) 
        node[branch, pos=0.3, left] {No}
        node[above, font=\tiny\itshape, text=gray!80!black, align=center] {Unsuitable};

    \draw[->, thick] (q_extra.east) -- (m_envelope.west) 
        node[branch, pos=0.45, above] {No};

    \draw[->, thick] (q_extra.south) -- (q_impute_type.north) 
        node[branch, pos=0.45, left] {Yes};

    \draw[->, thick] (q_impute_type.east) -- (m_test_impute.west) 
        node[branch, pos=0.45, above] {Yes};

    \draw[->, thick] (q_impute_type.south) |- (m_calib_impute.west) 
        node[branch, pos=0.3, left] {No};
\end{tikzpicture}
}
\caption{Decision flowchart for choosing a continuous-time prediction region method based on assumptions and sampling availability.}
\label{decision_flowchart}
\end{figure}}

\arxiv{\subsection{Exact Test Sample Imputation}\label{test_sample_imputation_section}}
\conference{\textbf{Exact Test Sample Imputation.}} While the specific realization of $\boldsymbol{T}^{(m+1)}$ is unknown at calibration, suppose we can sample from the generating distribution $P_{\boldsymbol{T}}$. For example, this fits settings where accurate models of the sampling distribution are available, such as Bernoulli processes \cite{packet_loss, packet_loss_bernoulli}. Under an independence assumption, we can impute these times to build valid conformal regions. 
\begin{proposition}\label{theorem_independence}
Let $\tilde{\boldsymbol{T}}^{(m+1)}\sim {P}_{\boldsymbol{T}}$ be a synthetic set of sampling times. Suppose that the trajectory and the sampling times are independent, that is,  $P_{X,\boldsymbol{T}}=P_{X}P_{\boldsymbol{T}}$. Then
\begin{equation*}
\mathbb{P}_{P_{X,\boldsymbol{T}}^{m+1}}\Big(||X^{(m+1)}_{\tilde{t}}-\hat{x}_{\tilde{t}}||\leq \hat{q}_{1-\delta}\hat{\sigma}_{\tilde{t}}, \forall \tilde{t}\in\tilde{\boldsymbol{T}}^{(m+1)} \Big)\geq 1-\delta,
\end{equation*}
where $\hat{q}_{1-\delta}\coloneqq\text{Quantile}(S^{\mathrm{lf}}_1,\dots, S^{\mathrm{lf}}_m;(1-\delta)(1+\frac{1}{m}))$.
\end{proposition}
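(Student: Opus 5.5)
The plan is to reduce the statement to the exchangeability argument of split conformal prediction by showing that the imputed test pair has the same law as the calibration pairs. Define the imputed test score
\[
\tilde S_{m+1} := s\big(X^{(m+1)},\tilde{\boldsymbol{T}}^{(m+1)}\big)=\max_{\tilde t\in \tilde{\boldsymbol{T}}^{(m+1)}} \frac{\|X^{(m+1)}_{\tilde t}-\hat{x}_{\tilde t}\|}{\hat{\sigma}_{\tilde t}} .
\]
Since $\hat{\sigma}_{\tilde t}>0$, the event in the proposition is exactly $\{\tilde S_{m+1}\le \hat{q}_{1-\delta}\}$. It therefore suffices to prove $\mathbb{P}(\tilde S_{m+1}\le \hat{q}_{1-\delta})\ge 1-\delta$.

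The key step is distributional. The synthetic times $\tilde{\boldsymbol{T}}^{(m+1)}$ are drawn from $P_{\boldsymbol{T}}$ independently of $X^{(m+1)}$ and of the calibration data. Hence the pair $(X^{(m+1)},\tilde{\boldsymbol{T}}^{(m+1)})$ has law $P_X P_{\boldsymbol{T}}$. By the independence assumption this equals $P_{X,\boldsymbol{T}}$, the law of each calibration pair $(X^{(i)},\boldsymbol{T}^{(i)})$. The calibration pairs are i.i.d. and independent of the test quantities, so
\[
(X^{(1)},\boldsymbol{T}^{(1)}),\dots,(X^{(m)},\boldsymbol{T}^{(m)}),(X^{(m+1)},\tilde{\boldsymbol{T}}^{(m+1)})
\]
are i.i.d. from $P_{X,\boldsymbol{T}}$. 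Applying the fixed, data-independent map $s$ (where $\hat{x},\hat{\sigma}$ are pre-trained), the scores $S^{\mathrm{lf}}_1,\dots,S^{\mathrm{lf}}_m,\tilde S_{m+1}$ are i.i.d., hence exchangeable.

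Once exchangeability holds, split CP (Theorem 3.2 of \cite{angelopoulos2025theoreticalfoundationsconformalprediction}, already used to obtain \eqref{epslions}) gives $\mathbb{P}(\tilde S_{m+1}\le \hat{q}_{1-\delta})\ge 1-\delta$ with $\hat{q}_{1-\delta}$ the $\lceil (m+1)(1-\delta)\rceil$-th smallest calibration score. Rewriting this event as simultaneous coverage at the imputed times completes the argument.

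The main obstacle is careful bookkeeping of the probability space. The outer measure is written as $P^{m+1}_{X,\boldsymbol{T}}$, but the event involves $\tilde{\boldsymbol{T}}^{(m+1)}$ and not the true $\boldsymbol{T}^{(m+1)}$. I would make explicit that the measure is taken over the calibration data, $X^{(m+1)}$, and the independent draw $\tilde{\boldsymbol{T}}^{(m+1)}$, and that $\boldsymbol{T}^{(m+1)}$ plays no role. The independence hypothesis $P_{X,\boldsymbol{T}}=P_XP_{\boldsymbol{T}}$ is exactly what makes the product law of $(X^{(m+1)},\tilde{\boldsymbol{T}}^{(m+1)})$ coincide with $P_{X,\boldsymbol{T}}$. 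Without it, pairing $X^{(m+1)}$ with an independent draw of times would break exchangeability with the calibration scores.
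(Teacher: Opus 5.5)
Your proposal is correct and follows the paper's proof essentially step for step: define the imputed score $\tilde S_{m+1}=s(X^{(m+1)},\tilde{\boldsymbol{T}}^{(m+1)})$, use $P_{X,\boldsymbol{T}}=P_XP_{\boldsymbol{T}}$ to conclude that $(X^{(m+1)},\tilde{\boldsymbol{T}}^{(m+1)})$ is i.i.d.\ with the calibration pairs, and apply split CP. The paper leaves implicit that $\tilde{\boldsymbol{T}}^{(m+1)}$ is drawn independently of $X^{(m+1)}$ and the calibration data, and that the outer probability also ranges over this draw; your proposal states both explicitly.
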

\begin{proof}
Let $\tilde{S}^{\mathrm{lf}}_{m+1}:=s({X}^{(m+1)}, \tilde{\boldsymbol{T}}^{(m+1)})$.
Given that independence holds, $({{X}}^{(m+1)}, \tilde{\boldsymbol{T}}^{(m+1)})$ is i.i.d. with the calibration data. Hence, $\tilde{S}^{\mathrm{lf}}_{m+1}$ is i.i.d. with the scores of the calibration data $\{S^{\mathrm{lf}}_{i}\}_{i\in [m]}$. Therefore, we can apply split CP with the imputation times: $\mathbb{P}_{P_{X,\boldsymbol{T}}^{m+1}}\Big( ||X^{(m+1)}_{\tilde{t}}-\hat{x}_{\tilde{t}}||\leq \hat{q}_{1-\delta}\hat{\sigma}_{\tilde{t}} \forall {\tilde{t}}\in\tilde{\boldsymbol{T}}^{(m+1)} \Big)=\mathbb{P}_{P_{X,\boldsymbol{T}}^{m+1}}\Big( \tilde{S}^{\mathrm{lf}}_{m+1}\leq \hat{q}_{1-\delta} \Big)\geq 1-\delta$.
\end{proof}
Proposition \ref{theorem_independence} generalizes the fixed sampling times setting, where the imputation trivially reduces to the deterministic times. Under these conditions, we can apply an argument analogous to Theorem \ref{theorem_base} to construct valid continuous-time prediction regions. In practice, this simply requires using the synthetic times $\tilde{\boldsymbol{T}}^{(m+1)}$ as input in Algorithm \ref{high_dimensional_predictions_alg} in place of the unobserved $\boldsymbol{T}^{(m+1)}$.
\arxiv{
Note that this method is non-conservative, since it leverages nonconformity scores that do not rely on the Lipschitz constant for obtaining discrete-time prediction regions. However, we highlight that the assumptions of this method are strong:  an independence between sampling times and the trajectory, and access to samples of $P_{\boldsymbol{T}}$. For this reason, we refer to this method as an oracle in our experiments. Next, we propose a method that copes with these two limitations, while still relying on imputations.}

\arxiv{\subsection{Calibrated Test Sample Imputation}\label{calibrated_imputations_section}}

\conference{\textbf{Calibrated Test Sample Imputation.}} Consider $(X^{(i)}, \boldsymbol{T}^{(i)},\tilde{\boldsymbol{T}}^{(i)})\overset{iid}{\sim}P_{X,\boldsymbol{T},\tilde{\boldsymbol{T}}}$ for all $ i \in  [m+1]$.
Here, $P_{\tilde{\boldsymbol{T}}}$ denotes a known distribution from which we can impute test sampling times. We can think of  $\tilde{\boldsymbol{T}}^{(i)}$ as a model of ${\boldsymbol{T}}^{(i)}$. To correct for a possible mismatch between $\tilde{\boldsymbol{T}}^{(i)}$  and ${\boldsymbol{T}}^{(i)}$, define the nonconformity scores:
$$S_i^* = \max_{\tilde{t} \in \tilde{\boldsymbol{T}}^{(i)}} \Bigg\{ \frac{\min\limits_{t \in \boldsymbol{T}^{(i)}} \big( || X^{(i)}_t - \hat{x}_{\tilde{t}}|| + L|\tilde{t} - t| \big)}{\hat{\sigma}_{\tilde{t}}} \Bigg\}.$$

\begin{theorem}Suppose that
(\ref{assumption}) holds for a known value ${L}>0$. Let $\hat{q}:=\text{Quantile}(\{S_j^*\}_{j=1}^{m};(1-\delta)(1+1/m))$. Then
    $$\mathbb{P}_{P_{X,\boldsymbol{T}, \tilde{\boldsymbol{T}}}^{m+1}}\Big(X^{(m+1)}_{\tilde{t}} \in B_{\hat{q}\hat{\sigma}_{\tilde{t}}}(\hat{f}_{\tilde{t}}) ,  \quad \forall \tilde{t} \in \tilde{\boldsymbol{T}}^{(m+1)}\Big) \geq 1 - \delta.$$
\end{theorem}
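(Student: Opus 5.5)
The argument splits into a conformal step and a deterministic Lipschitz step. First, since the triples $(X^{(i)}, \boldsymbol{T}^{(i)},\tilde{\boldsymbol{T}}^{(i)})$, $i\in[m+1]$, are i.i.d. from $P_{X,\boldsymbol{T},\tilde{\boldsymbol{T}}}$, the scores $S_1^*,\dots,S_{m+1}^*$ are i.i.d. (hence exchangeable). Here $S^*_{m+1}$ is the same function applied to the test triple; it is not observable, since it involves $\boldsymbol{T}^{(m+1)}$ and $X^{(m+1)}$, but it still has a well-defined distribution. The standard split CP argument (Theorem 3.2 of \cite{angelopoulos2025theoreticalfoundationsconformalprediction}), as used for \eqref{epslions} and Proposition \ref{theorem_independence}, then gives
\[
\mathbb{P}_{P_{X,\boldsymbol{T},\tilde{\boldsymbol{T}}}^{m+1}}\big(S^*_{m+1}\le \hat q\big)\ge 1-\delta .
\]

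Second, I would prove a deterministic inclusion analogous to Lemma \ref{lemma_algorithm_correctness}. On the event $\mathcal{E}_{\mathrm{lip}}$, we have $\{S^*_{m+1}\le\hat q\}\subseteq\{X^{(m+1)}_{\tilde t}\in B_{\hat q\hat\sigma_{\tilde t}}(\hat f_{\tilde t})\ \forall \tilde t\in\tilde{\boldsymbol{T}}^{(m+1)}\}$. To see this, fix $\tilde t\in\tilde{\boldsymbol{T}}^{(m+1)}$ and any $t\in\boldsymbol{T}^{(m+1)}$. The triangle inequality and \eqref{assumption} give
\[
\|X^{(m+1)}_{\tilde t}-\hat f_{\tilde t}\|\le \|X^{(m+1)}_{\tilde t}-X^{(m+1)}_{t}\|+\|X^{(m+1)}_t-\hat f_{\tilde t}\|\le L|\tilde t-t|+\|X^{(m+1)}_t-\hat f_{\tilde t}\| .
\]
Since this holds for every $t$, we may take the minimum over $t\in\boldsymbol{T}^{(m+1)}$, which is a finite set, so the minimum is attained. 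Dividing by $\hat\sigma_{\tilde t}>0$ and bounding by the maximum over $\tilde t$ yields $\|X^{(m+1)}_{\tilde t}-\hat f_{\tilde t}\|/\hat\sigma_{\tilde t}\le S^*_{m+1}\le\hat q$ for every $\tilde t$. This is exactly the claimed event.

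Finally, I combine the two steps. Since \eqref{assumption} holds with probability one, $\mathbb{P}(\mathcal{E}_{\mathrm{lip}})=1$. Then
\[
\mathbb{P}(\text{target})\ge \mathbb{P}(\{S^*_{m+1}\le\hat q\}\cap\mathcal{E}_{\mathrm{lip}})\ge \mathbb{P}(S^*_{m+1}\le\hat q)+\mathbb{P}(\mathcal{E}_{\mathrm{lip}})-1\ge 1-\delta,
\]
where the middle inequality is the union bound, as in Theorem \ref{theorem_base}.

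The main subtlety is the first step: verifying that \eqref{assumption}, stated for the test trajectory under $P_X$, transfers correctly under the joint law with $\boldsymbol{T},\tilde{\boldsymbol{T}}$. Marginalization handles this, since $\mathcal{E}_{\mathrm{lip}}$ depends only on $X^{(m+1)}$. The other point to check is that exchangeability only requires the scores to be a fixed measurable function of i.i.d. triples; it does not require any independence between $X$ and the times. That is precisely why this theorem dispenses with the independence assumption of Proposition \ref{theorem_independence}.
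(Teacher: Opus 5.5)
Your proof is correct and follows essentially the same route as the paper. You apply split CP to the i.i.d.\ scores $S^*_i$ to get $\mathbb{P}(S^*_{m+1}\le\hat q)\ge 1-\delta$, then use the triangle inequality with the Lipschitz bound, minimized over $t\in\boldsymbol{T}^{(m+1)}$, to turn this into coverage at every imputed time; the only difference is that you explicitly intersect with the probability-one Lipschitz event via a union bound, which the paper leaves implicit.
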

\begin{proof}
    By split CP, we have 
$\mathbb{P}_{P_{X,\boldsymbol{T}, \tilde{\boldsymbol{T}}}^{m+1}}(S_{m+1}^* \leq \hat{q}) \geq 1 - \delta.$
We can rewrite the argument inside the probability so that for all $ \tilde{t} \in \tilde{\boldsymbol{T}}^{(m+1)}$, we have
$$\frac{\min\limits_{t \in \boldsymbol{T}^{(m+1)}} \left( \left|\left| X^{(m+1)}_t - \hat{x}_{\tilde{t}} \right|\right| + L|\tilde{t} - t| \right)}{\hat{\sigma}_{\tilde{t}}} \leq \hat{q}.$$
Using the triangle inequality and the constant $L$, we infer that, for any imputation  time $\tilde{t}\in \tilde{\boldsymbol{T}}^{(m+1)}$ and any time $t\in \boldsymbol{T}^{(m+1)}$:
\begin{equation*}
    \begin{split}
        ||X^{(m+1)}_{\tilde{t}} - \hat{x}_{\tilde{t}}||& \leq ||X^{(m+1)}_{\tilde{t}} - X^{(m+1)}_{t}|| + ||X^{(m+1)}_{t} - \hat{x}_{\tilde{t}}||\\
        &\leq L|\tilde{t} - t| + ||X^{(m+1)}_{t} - \hat{x}_{\tilde{t}}||.
    \end{split}
\end{equation*}
Because this holds for all $t\in \boldsymbol{T}^{(m+1)}$, the bound holds for the minimum over $\boldsymbol{T}^{(m+1)}$.
Therefore, with probability $1-\delta$:
\begin{equation*}
    \begin{split}
    ||X^{(m+1)}_{\tilde{t}}- \hat{f}_{\tilde{t}}|| &\leq \min_{t \in \boldsymbol{T}^{(m+1)}} \left( || X^{(m+1)}_{t} - \hat{f}_{\tilde{t}}|| + L|\tilde{t} - t| \right)\\
&\leq \hat{q}\hat{\sigma}_{\tilde{t}}.
    \end{split}
\end{equation*}\end{proof}
\arxiv{
We remark that in an oracle setting where the imputations $\tilde{\boldsymbol{T}}^{(i)}$ perfectly predict the true sampling  times ${\boldsymbol{T}}^{(i)}$ we recover the nonconformity scores from Section \ref{problem_solution_section}. Formally, if $\tilde{\boldsymbol{T}}={\boldsymbol{T}}$ almost surely, then $S_i^*=S^{\mathrm{lf}}_i$ for all $i\in [m+1]$ almost surely. In this case, the discrete-time nonconformity scores would not rely on the the Lipschitz constant, thus producing a tight conformal quantile.}

Once we have valid prediction regions at the imputed times, we can simply propagate the Lipschitz constant using Algorithm \ref{high_dimensional_predictions_alg} to obtain a valid prediction region over the entire continuous domain $[0,T^*]$. This requires knowledge of $L>0$  satisfying ($\ref{assumption}$). \arxiv{We can again estimate $L$ following Section \ref{lipschitz_estimation_section}, but have to be more careful now as $L$ appears also in the nonconformity score $S_i^*$. This will also be the case for prediction envelopes which we present in Section \ref{prediction_envelope_section} and for which we show in  Appendix \ref{exchangeability_correction_section} how to jointly estimate $L$ while dealing with $L$ appearing in the nonconformity score. We can apply the ideas from Appendix \ref{exchangeability_correction_section} similarly for calibrated imputations and omit them here for brevity.}
\conference{\textcolor{red}{Replicating the procedure from the Appendix of \cite{alvarez2026continuous_conformal} we can unify the prediction region of this section with the methods from Section \ref{lipschitz_estimation_section}.}}
\cmmnt{We highlight that the advantage is that we do not make any assumption about independence between the trajectories and the sampling times. However, we pay with the worst-case Lipchitz constant twice: once, absorbed by the quantile, via the nonconformity scores, and then for propagating the Lipschitz constant over the entire continuous domain. Nonetheless, when the sampling times are very dense, we can get non-conservative coverage as we show in our experiments using samples of the empirical distribution of sampling times as imputations.} 

\arxiv{\subsection{Stochastic Dominance Method}\label{stochastic_dominance_section}}

\conference{\textbf{Stochastic Dominance Method.}} Recall that we have assumed access to densely sampled trajectories in the setting of Section \ref{lipschitz_estimation_section}. We can leverage these high-frequency trajectories by using the stochastic dominance assumptions directly on the nonconformity scores instead of the  trajectories.

We define the normalized residual $R_i(t) \coloneqq ||X^{(i)}_t - \hat{f}_t||/\hat{\sigma}_t$. We distinguish between the continuous, dense and sparse versions of the nonconformity score by $S^{\mathrm{cont}}_i \coloneqq \sup\limits_{t \in [0, T^*]} R_i(t)$, $S^{\mathrm{full}}_i \coloneqq \max\limits_{t \in \boldsymbol{T}_{\mathrm{full}}^{(i)}} R_i(t)$, and $S^{\mathrm{lf}}_i \coloneqq \max\limits_{t \in \boldsymbol{T}^{(i)}} R_i(t)$, respectively, where the latter two are observable. 

In an oracle setting where $S^{\mathrm{cont}}$ is observable, the threshold $\hat{Q}^{\mathrm{oracle}}_{1-\delta} \coloneqq \text{Quantile}(S^{\mathrm{cont}}_{1},\dots, S^{\mathrm{cont}}_{m}; (1-\delta)(1+1/m))$ guarantees validity by standard split CP. Thus, the region $\mathcal{C}_t = B_{\hat{Q}^{\mathrm{oracle}}_{1-\delta}\hat{\sigma}_t}(\hat{f}_t)$ satisfies 
\begin{equation}
  \mathbb{P}_{P_{X,\boldsymbol{T}, \boldsymbol{T}_{\mathrm{full}}}^{m+1}}\big(\ ||X^{(m+1)}_t-\hat{f}_t||\leq \hat{Q}^{\mathrm{oracle}}_{1-\delta}\hat{\sigma}_t,   \forall t\in [0,T^*]\big)\geq 1-\delta.  
\end{equation}
This is an oracle, and we can replicate the same ideas and assumptions that we had for the Lipschitz constants, but directly on the nonconformity scores.

Intuitively, if we have
$S^{\mathrm{lf}}_{i}\approx {S}^{\mathrm{cont}}_{i}$ for all $i\in [m+1]$, we can expect that using split CP taking $\hat{Q}^{\mathrm{lf}}_{1-\delta}\coloneqq\text{Quantile}\big(\{S^{\mathrm{lf}}_i\}_{i=1}^{m};(1-\delta)(1+1/m)\big)$, we can obtain a prediction region like that of the oracle. However, notice that $S_{i}^{\mathrm{lf}}\leq S_{i}^{\mathrm{cont}}$ almost surely for every $i\in [m]$, hence $\hat{Q}^{\mathrm{lf}}_{1-\delta}< \hat{Q}^{\mathrm{oracle}}_{1-\delta}$, which means that using $\hat{Q}^{\mathrm{lf}}_{1-\delta}$ is not reliable to obtain a prediction region in $[0,T^*]$. 

Adapting Assumption \ref{stochastic_dominance_assumption} to the context of nonconformity scores, we take  $\boldsymbol{\Delta}^{\textbf{hidden}}_i := S^{\mathrm{cont}}_i - S^{\mathrm{full}}_i$. This is the unseen high-frequency error, whereas we can observe $\boldsymbol{\Delta}^{\textbf{obs}}_i := S^{\mathrm{full}}_i - S^{\mathrm{lf}}_i$. This is an observable lower frequency error. The interpretation and the validation of the nonconformity score gap stochastic dominance assumption is completely analogous to its Lipschitz counterpart. The same is true for the sum stochastic dominance assumption.   We give the explicit argument for the gap stochastic dominance assumption for clarity. We do not expand further for the sum stochastic dominance case, to avoid repetitive ideas.

For a given $\delta\in (0,1)$, let $\delta_1,\delta_2\in (0,1)$ be such that $\delta_1+\delta_2=\delta$. We define the following two empirical quantiles:
\begin{align*}
    \hat{Q}_{\mathrm{full}} &:= \text{Quantile}\big(\{S^{\mathrm{full}}_i\}_{i=1}^m; \left(1-\delta_1\right)(1+\frac{1}{m})\big),\\
    \text{and } \hat{Q}_{\mathrm{obs}} &:= \text{Quantile}\left(\{\boldsymbol{\Delta}^{\textbf{obs}}_i\}_{i=1}^m; \big(1-\delta_2\right)(1+\frac{1}{m})\big).
\end{align*}
Define the estimated gap dominance-based quantile
\begin{equation}\label{eq:q_gap}
    \hat{Q}_{\text{gap}} :=  \hat{Q}_{\mathrm{full}}  +  \hat{Q}_{\mathrm{obs}}.
\end{equation}

\begin{theorem} Let Assumption \ref{stochastic_dominance_assumption} hold with $\boldsymbol{\Delta}^{\textbf{hidden}}_i$ and $\boldsymbol{\Delta}^{\textbf{obs}}_i$ and let $\delta,\delta_1,\delta_2\in (0,1)$ be such that $\delta_1+\delta_2=\delta$. Then, $\hat{Q}_{\text{gap}}$ from (\ref{eq:q_gap}) satisfies
\begin{equation}\label{new_inflation_theorem}
 \mathbb{P}_{P_{X,\boldsymbol{T}, \boldsymbol{T}_{\mathrm{full}}}^{m+1}}\Big(||X^{(m+1)}_t-\hat{x}_t||\leq \hat\sigma_t \hat{Q}_{\text{gap}}, \forall t\in [0,T^*]\Big)\geq 1-\delta.
\end{equation}
\end{theorem}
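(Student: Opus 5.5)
The plan is to mirror the proof of Lemma \ref{lemma_gap_dominance}, with Lipschitz constants replaced by nonconformity scores. First, reduce the continuous-time event in (\ref{new_inflation_theorem}) to a scalar event. Since $R_{m+1}(t)=||X^{(m+1)}_t-\hat{x}_t||/\hat{\sigma}_t$ and $\hat{\sigma}_t>0$, the condition $||X^{(m+1)}_t-\hat{x}_t||\leq \hat\sigma_t\hat{Q}_{\text{gap}}$ for all $t\in[0,T^*]$ holds if and only if $\sup_{t\in[0,T^*]}R_{m+1}(t)\leq \hat{Q}_{\text{gap}}$. That is, the event is exactly $\{S^{\mathrm{cont}}_{m+1}\leq \hat{Q}_{\text{gap}}\}$. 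Because the supremum bound gives the pointwise bound, only the inclusion of the scalar event in the continuous-time event is actually needed.

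Next, decompose $S^{\mathrm{cont}}_{m+1}=S^{\mathrm{full}}_{m+1}+\boldsymbol{\Delta}^{\textbf{hidden}}_{m+1}$. This uses the definition of the hidden gap and the assumption that $S^{\mathrm{cont}}_{m+1}$ is finite. From this decomposition,
$$\{S^{\mathrm{full}}_{m+1}\leq \hat{Q}_{\mathrm{full}}\}\cap\{\boldsymbol{\Delta}^{\textbf{hidden}}_{m+1}\leq \hat{Q}_{\mathrm{obs}}\}\subseteq\{S^{\mathrm{cont}}_{m+1}\leq \hat{Q}_{\text{gap}}\}.$$
Monotonicity and a union bound then give the lower bound
$$1-\mathbb{P}(S^{\mathrm{full}}_{m+1}>\hat{Q}_{\mathrm{full}})-\mathbb{P}(\boldsymbol{\Delta}^{\textbf{hidden}}_{m+1}>\hat{Q}_{\mathrm{obs}}).$$

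The first term is handled by split CP. The triples $(X^{(i)},\boldsymbol{T}^{(i)},\boldsymbol{T}^{(i)}_{\mathrm{full}})$ are i.i.d., so $S^{\mathrm{full}}_1,\dots,S^{\mathrm{full}}_{m+1}$ are i.i.d. This gives $\mathbb{P}(S^{\mathrm{full}}_{m+1}>\hat{Q}_{\mathrm{full}})\leq\delta_1$.

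The second term is the main obstacle. The hidden gap is unobservable, so we cannot calibrate on it directly. Instead, transfer it to the observed gap using Assumption \ref{stochastic_dominance_assumption}, stated for $\boldsymbol{\Delta}^{\textbf{hidden}}_{m+1}$ and $\boldsymbol{\Delta}^{\textbf{obs}}_{m+1}$. We argue as in the footnote of Lemma \ref{lemma_gap_dominance}. The threshold $\hat{Q}_{\mathrm{obs}}$ depends only on the calibration data $i\in[m]$, so it is independent of the test triple. Conditioning on $\hat{Q}_{\mathrm{obs}}$ and applying the tower rule with the CDF inequality $F_{\boldsymbol{\Delta}^{\textbf{obs}}}(w)\leq F_{\boldsymbol{\Delta}^{\textbf{hidden}}}(w)$ for every $w$ yields
$$\mathbb{P}(\boldsymbol{\Delta}^{\textbf{hidden}}_{m+1}>\hat{Q}_{\mathrm{obs}})\leq\mathbb{P}(\boldsymbol{\Delta}^{\textbf{obs}}_{m+1}>\hat{Q}_{\mathrm{obs}}).$$
The observed gaps $\boldsymbol{\Delta}^{\textbf{obs}}_1,\dots,\boldsymbol{\Delta}^{\textbf{obs}}_{m+1}$ are i.i.d., so split CP bounds the right-hand side by $\delta_2$.

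Combining the three steps gives coverage at least $1-\delta_1-\delta_2=1-\delta$, which proves (\ref{new_inflation_theorem}). One subtlety should be checked when writing this up. The two quantiles are computed from the same calibration set, so they are dependent on each other. The union bound does not require independence between the two events, however, so this dependence is harmless.
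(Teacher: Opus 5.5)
Your proposal is correct and takes the same route as the paper. The paper reduces the coverage event to $\{S^{\mathrm{cont}}_{m+1}\leq \hat{Q}_{\text{gap}}\}$ and then invokes the argument of Lemma \ref{lemma_gap_dominance}: decomposition, union bound, split CP on $S^{\mathrm{full}}$, and the tower-rule transfer from $\boldsymbol{\Delta}^{\textbf{hidden}}_{m+1}$ to $\boldsymbol{\Delta}^{\textbf{obs}}_{m+1}$. Your write-up spells out those omitted details faithfully, and your remark that the dependence between $\hat{Q}_{\mathrm{full}}$ and $\hat{Q}_{\mathrm{obs}}$ is harmless under the union bound is also right.
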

\begin{proof}
Note that (\ref{new_inflation_theorem}) is equivalent to $\mathbb{P}_{P_{X,\boldsymbol{T}, \boldsymbol{T}_{\mathrm{full}}}^{m+1}}\big({S_{m+1}^{\mathrm{cont}} }\leq\hat{Q}_{\text{gap}}\big)\geq 1-\delta$, which follows by an
 argument similar to that of Lemma \ref{lemma_gap_dominance}, hence we omit the remaining details.
\end{proof}
\arxiv{The advantage of this prediction region is that there is no need to know the distribution of the sampling times. Furthermore, the argument works even for trajectory-dependent sampling regimes, and does not require a Lipschitz constant.} \arxiv{Note that the method can also be suitable for fixed sampling times, since nothing inherently of this method requires the sampling times to be random.}

We can use completely analogous ideas to obtain a prediction region based on a sum stochastic dominance assumption directly on the nonconformity scores. Given that the ideas are replication of the previous subsections, as mentioned, we omit the details.

\arxiv{\subsection{Prediction Envelope}\label{prediction_envelope_section}}
\conference{\textbf{Prediction Envelope}.} \arxiv{When imputations of sampling times are not available and a stochastic dominance assumption directly on the nonconformity scores is not suitable, we can still get valid continuous-time prediction regions, as we will show next.}
At any time $t$, let $D^{(i)}_t := \min\limits_{t^{(i)}_k\in \boldsymbol{T}^{(i)}} \left( \|X^{(i)}_{t^{(i)}_k} - \hat{f}_t\| + L|t - t^{(i)}_k| \right)$ denote an error bound for the $i-$th trajectory. Using the Lipschitz property (\ref{assumption}), and triangle inequality, it holds that for any time $t\in [0, T^*]$, $||X^{(i)}_{t} - \hat{x}_t||\leq D^{(i)}_t$. 
This motivates the nonconformity scores $\bar{S}_i := \max\limits_{t \in [0, T^*]} \{\frac{D^{(i)}_t}{\hat{\sigma}_t}\}$ from which we  obtain valid prediction regions.

\begin{theorem}\label{envelope_theorem}
Suppose that
(\ref{assumption}) holds for a known value ${L}>0$. Then, the prediction regions $\mathcal{C}_t = \left\{ x \in \mathbb{R}^d : \|x - \hat{f}_t\| \leq \hat{q} \cdot \hat{\sigma}_t \right\}$ with $\hat{q} := \text{Quantile}(\{\bar{S}_i\}_{i=1}^m;( 1-\delta)(1+\frac{1}{m}))$ satisfy (\ref{objective}).\end{theorem}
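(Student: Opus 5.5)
The plan is to reduce the continuous-time statement to a single scalar event about the envelope score of the test trajectory, and then apply split CP to that score. The first step is the deterministic pointwise bound stated just before the theorem. Fix $i\in[m+1]$ and $t\in[0,T^*]$, and suppose the Lipschitz event $\mathcal{E}_{\mathrm{lip}}$ holds for trajectory $i$. Then for every sampling time $t^{(i)}_k\in\boldsymbol{T}^{(i)}$ the triangle inequality gives
\begin{equation*}
\|X^{(i)}_t-\hat{f}_t\|\leq \|X^{(i)}_t-X^{(i)}_{t^{(i)}_k}\|+\|X^{(i)}_{t^{(i)}_k}-\hat{f}_t\|\leq L|t-t^{(i)}_k|+\|X^{(i)}_{t^{(i)}_k}-\hat{f}_t\|.
\end{equation*}
Minimizing over $k$ yields $\|X^{(i)}_t-\hat{f}_t\|\leq D^{(i)}_t$. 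Dividing by $\hat\sigma_t>0$ and taking the supremum over $t$ then gives $\sup_{t}R_i(t)\leq \bar S_i$.

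The second step is to establish that split CP applies. The score $\bar S_i$ is a fixed measurable function of $(\boldsymbol{X}^{(i)},\boldsymbol{T}^{(i)})$, with $\hat f$, $\hat\sigma$ and $L$ fixed in advance and independent of the calibration data. Since the pairs $\boldsymbol{Z}^{(1)},\dots,\boldsymbol{Z}^{(m+1)}$ are i.i.d., the scores $\bar S_1,\dots,\bar S_{m+1}$ are i.i.d., and in particular exchangeable. The test score $\bar S_{m+1}$ is well defined as a random variable even though $\boldsymbol{T}^{(m+1)}$ is not known at calibration time; this is the key point, because the region $\mathcal{C}_t$ uses only $\hat q$ and does not depend on the test sampling times at all. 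Theorem 3.2 of \cite{angelopoulos2025theoreticalfoundationsconformalprediction} then gives $\mathbb{P}(\bar S_{m+1}\leq\hat q)\geq1-\delta$.

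The third step combines the two. On the event $\{\bar S_{m+1}\leq \hat q\}\cap\mathcal{E}_{\mathrm{lip}}$ we have, for all $t$, $\|X^{(m+1)}_t-\hat f_t\|\leq D^{(m+1)}_t\leq \hat\sigma_t\bar S_{m+1}\leq \hat q\hat\sigma_t$, so $X^{(m+1)}_t\in\mathcal{C}_t$. Since (\ref{assumption}) makes $\mathcal{E}_{\mathrm{lip}}$ a probability-one event, the union bound, exactly as in Theorem \ref{theorem_base}, gives probability at least $1-\delta$, which is (\ref{objective}).

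The main technical obstacle is that $\bar S_i$ involves a maximum over the continuum $[0,T^*]$, so it must be shown to be finite and measurable. I would argue that $t\mapsto D^{(i)}_t/\hat\sigma_t$ is a finite minimum of continuous functions divided by a positive continuous function, assuming $\hat f$ and $\hat\sigma$ are continuous. On the compact interval $[0,T^*]$ the maximum is then attained, and it can be computed as a supremum over rational $t$, which makes it measurable. A secondary care point is that $L$ appears in the score, but since $L$ is known and fixed, exchangeability is unaffected.
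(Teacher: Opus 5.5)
Your proposal is correct and follows the paper's proof step for step. You bound $\|X^{(m+1)}_t-\hat{f}_t\|\le D^{(m+1)}_t\le \hat\sigma_t\bar S_{m+1}$ via the triangle inequality and $L$, apply split CP to the i.i.d.\ scores $\bar S_i$, and conclude by monotonicity of probability; your explicit intersection with the probability-one Lipschitz event and your measurability remark are extra care that the paper leaves implicit, not a different route.
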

\begin{proof}
    By definition of $D^{(i)}_t$ and since the constant $L$ is known, we infer that $\|X^{(i)}_t - \hat{f}_t\| \leq D^{(i)}_t$ for any trajectory $i$ and any time $t\in [0,T^*]$. 
    Therefore, we know that $$\frac{\|X^{(i)}_t - \hat{x}_t\|}{\hat{\sigma}_t} \leq \frac{D^{(i)}_t}{\hat{\sigma}_t} \leq \sup_{s} \frac{D^{(i)}_s}{\hat{\sigma}_s} = \bar{S}_i.$$
    If $\bar{S}_{m+1} \leq \hat{q}$, which happens with probability at least $1-\delta$ by split CP, then it holds that, for any $t\in [0,T^*]$, $\frac{\|X^{(m+1)}_t - \hat{f}_t\|}{\hat{\sigma}_t} \leq \hat{q}$. With monotonicity of the probability measure we conclude the proof.
\end{proof}
\arxiv{In Theorem \ref{envelope_theorem} we assumed knowledge of $L>0$. In Appendix \ref{exchangeability_correction_section} 
we extend this method to unify the prediction region of this section with the methods from Section \ref{lipschitz_estimation_section}.}
\conference{In the Appendix of our extended version of the paper \cite{alvarez2026continuous_conformal}
we unify the prediction region of this section with the methods from Section \ref{lipschitz_estimation_section}.}

\section{Experiments}\label{experiments_section}
\arxiv{In this section we present experiments in which we use Euler's method for discretizing ordinary differential equations.   We illustrate the prediction regions obtained with our main algorithm, study the satisfaction of Assumption \ref{stochastic_dominance_assumption} and Assumption \ref{sum_stochastic_dominance_assumption}, and compare the prediction regions under random sampling. Additional experiments are included in Appendix \ref{additional_experiments}, which is structured as follows. In Appendix \ref{dmps_experiments} we provide another comparison of the random sampling times methods using dynamic movement primitives. Finally, we narrow down the experiments to study the stochastic dominance assumptions on the nonconformity scores as well as ablations for the estimation of Lipschitz constants of a system in higher dimensions in Appendix \ref{attractor_experiment} and Appendix \ref{lorenz_experiment}, respectively.}
\conference{We present experiments using Euler's method. Additional experiments where we solve random ODEs with BDF discretization methods, as well as experiments  using dynamic movement primitives are presented in the Appendix in \cite{alvarez2026continuous_conformal}.}

In the reminder, we consider the following dynamical system which represents the movement of an object in one dimension:
\begin{align*}
    \dot{x}&=v\\
    \dot{v}&=a+d,
\end{align*}
where $x$, $v$,  $a$ and $d$ denote position, velocity,  acceleration, and a piecewise continuous stochastic disturbance, respectively. 

Applying Euler's method, we obtain a first order approximation of the continuous-time  system as
\begin{equation*}
\begin{split}
x_{t_{k+1}}&=x_{t_{k}}+v_{t_{k}}(t_{k+1}-t_k)\\
v_{t_{k+1}}&=v_{t_{k}}+a_{t_{k}}(t_{k+1}-t_k) +d_{t_k}{(t_{k+1}-t_k)}.
    \end{split}
\end{equation*}

This approximation provides a discrete-time signal of the underlying continuous-time system.  We set the following parameters:
$a_{t_k}=sin(t_k)$ for acceleration and $d_{t_k}\overset{i.i.d.}{\sim}U(-3,3)$ as a disturbance for times $t_k<5$ and $d_{t_k}\overset{i.i.d.}{\sim}U(-5,5)$ for times $t_k\geq5$. 
In order to guarantee the existence of a constant $L>0$ satisfying (\ref{assumption}) we clip the generated velocities to $L=1/100$, so that $v(t)\in [-L, L]$ for all $t\geq0$, leading to a known value satisfying (\ref{assumption}). We learn point predictors from training data for $\hat{x}$ and $\hat{\sigma}$ by fitting cubic splines to $1,000$ training trajectories.
\subsection{Deterministic Sampling Times}\label{experiments1_deterministic_sampling}
We begin with a fixed-sampling times setup, taking $\boldsymbol{T}=\{1,2,\dots, 9\}$. In  Fig. \ref{running_exmaples} we show in gray two examples of the  prediction regions from Section \ref{problem_solution_section} outputted by Algorithm \ref{high_dimensional_predictions_alg} for this experiments. 

\begin{figure}[H]
\centering\includegraphics[width=0.9\linewidth]{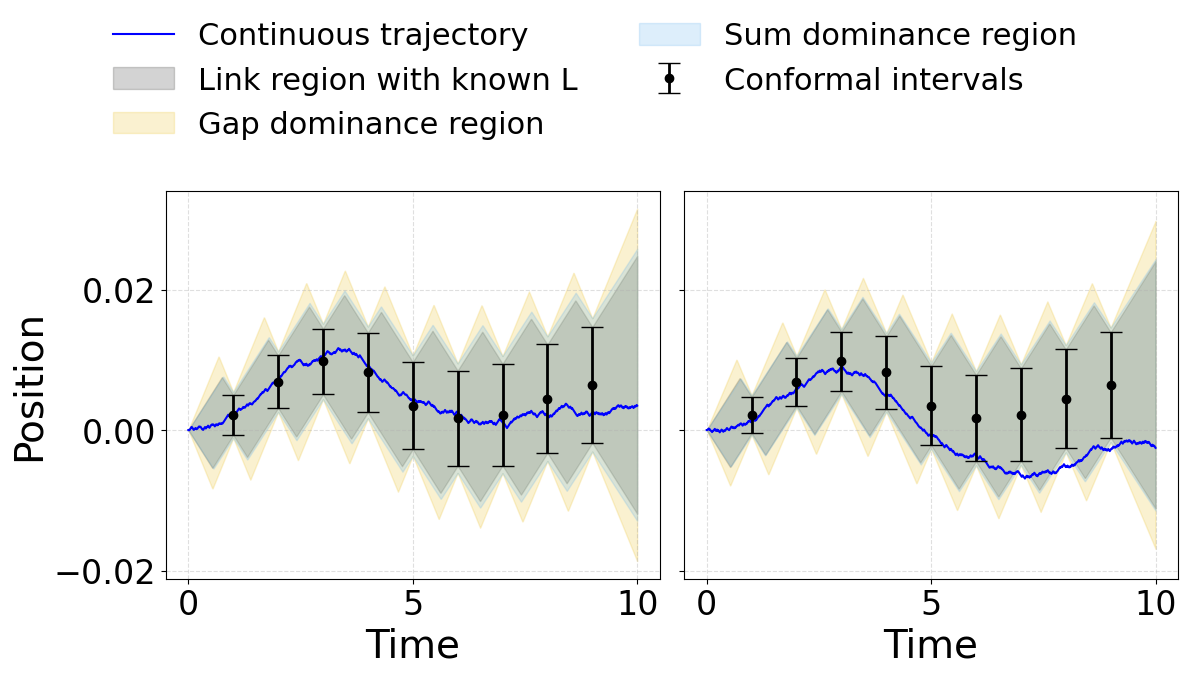}
    \caption{Prediction regions in two instances with fixed sampling times with a target coverage of $1-\delta=0.9$. }
\label{running_exmaples}
\end{figure}
We now validate the two different methods from Section \ref{lipschitz_estimation_section}. Their prediction regions are also shown in Fig. \ref{running_exmaples}. In Fig. \ref{comparison_frequencies_empirical_cdf}, we first present empirical CDFs of the gaps to verify the gap dominance assumption from Assumption \ref{stochastic_dominance_assumption}. Therefore, we use $\boldsymbol{T}_{\mathrm{full}}$ at different frequencies. We  let $h$ denote the distance between consecutive sampling times that partition the interval $[0,10]$ used to construct $\boldsymbol{T}_{\mathrm{full}}=\{t_{0,\mathrm{full}},t_{1,\mathrm{full}}, \dots,t_{K,\mathrm{full}} \}$ with $t_{j,\mathrm{full}} =jh, j=0,1\dots, K$, and $h=T^*/K$. In Fig. \ref{comparison_frequencies_cdf_sum} we present the corresponding empirical CDFs for validation of the sum dominance assumption from Assumption \ref{sum_stochastic_dominance_assumption}.  Shaded regions represent $2$ standard deviation across 10 independent calibration samples with $m=200$ each.
\arxiv{We refer the reader to  \cref{appendix_empirical_validation}  for more details on our empirical validation of the stochastic dominance assumptions.}
\begin{figure}[H]
\includegraphics[width=0.99\linewidth]{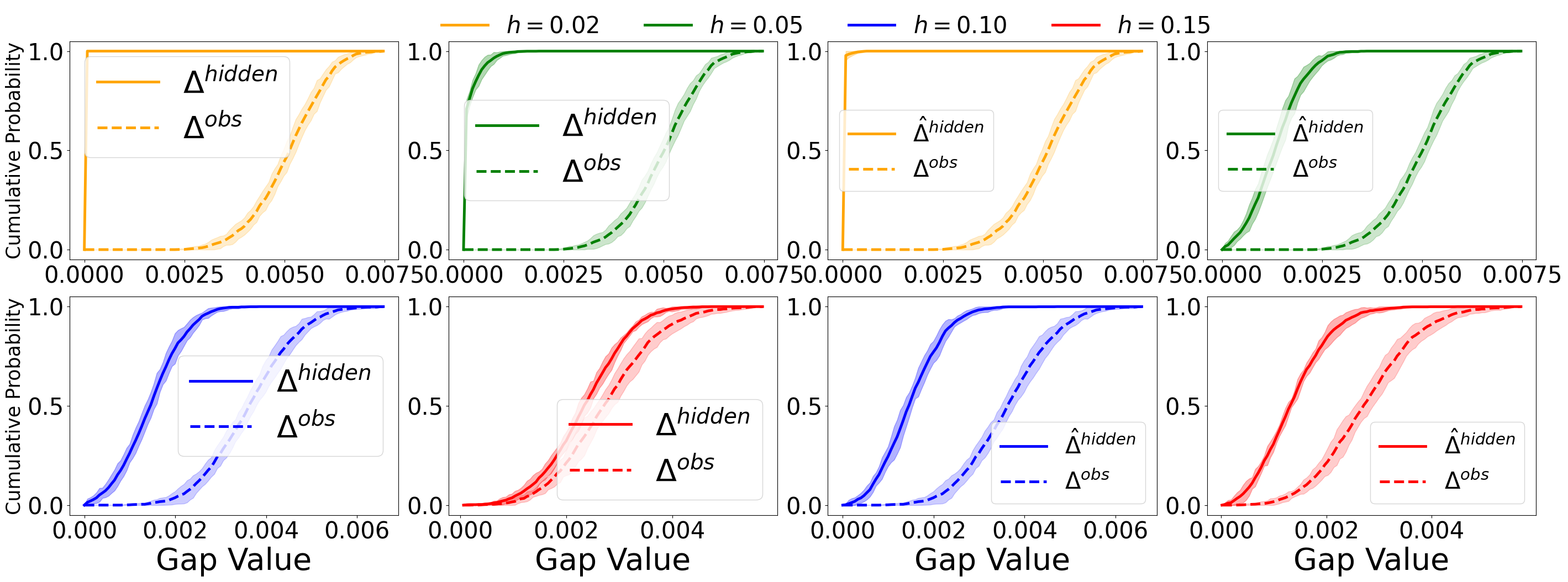}
    \caption{Comparison of gap empirical CDFs for different downsampling frequencies with oracle access to continuous trajectories (left) and validation of gap dominance assumption via downsampling to obtain surrogates $\hat{\Delta}^{\mathrm{hidden}}$ (right).}
\label{comparison_frequencies_empirical_cdf}
\end{figure}
 Comparing the left and right plots in Fig. \ref{comparison_frequencies_empirical_cdf}, we note that our subsampling approach to diagnose Assumption \ref{stochastic_dominance_assumption} is a faithful representation of the gaps that we see on an oracle setting with access to continuous trajectories. We observe a tradeoff: larger values of $h$, say $h=0.15$ provide tighter Lipschitz estimates (as can be seen in Fig.\ref{comparison_gap_sum_dominance}), but can compromise the satisfaction of Assumption \ref{stochastic_dominance_assumption}. \arxiv{In Fig. \ref{comparison_frequencies_cdf_sum} we highlight that the CDF of the continuous Lipschitz constant has a discontinuous jump at $L=1/100$; this happens because all the trajectories achieve the clipped maximum velocity, hence a point mass is formed at $L=1/100$.}

\begin{figure}[H]
\centering
\includegraphics[width=0.99\linewidth]{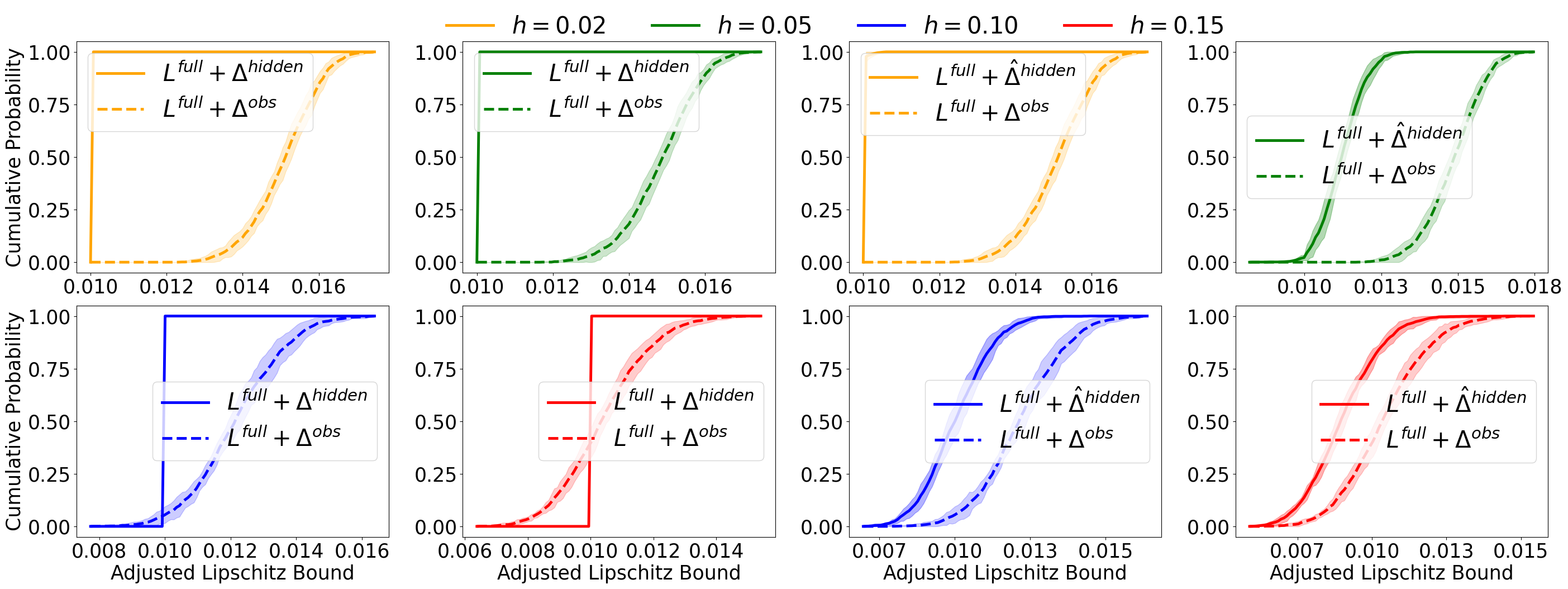}
\caption{CDFs with oracle access to continuous trajectories (left) and validation of sum dominance assumption via subsampling to obtain surrogates $\hat{\Delta}^{\mathrm{hidden}}$ (right).}
\label{comparison_frequencies_cdf_sum}
\end{figure}
To obtain Fig. \ref{comparison_gap_sum_dominance}, we generated 100 independent predictions of the Lipschitz constant of a test trajectory  based on 200 calibration trajectories for each estimate $\hat{L}_{\text{gap}}$ and $\hat{L}_{\text{sum}}$, corresponding to the conformal quantiles (\ref{predicted_L}) and (\ref{predicted_L_2}), respectively. We typically observe $\hat{L}_{\text{sum}} \leq \hat{L}_{\text{gap}}$ because splitting the error budget $\alpha$ makes the gap approach more sensitive to tail behavior.  
\begin{figure}[H]
\centering\includegraphics[width=0.5\linewidth]{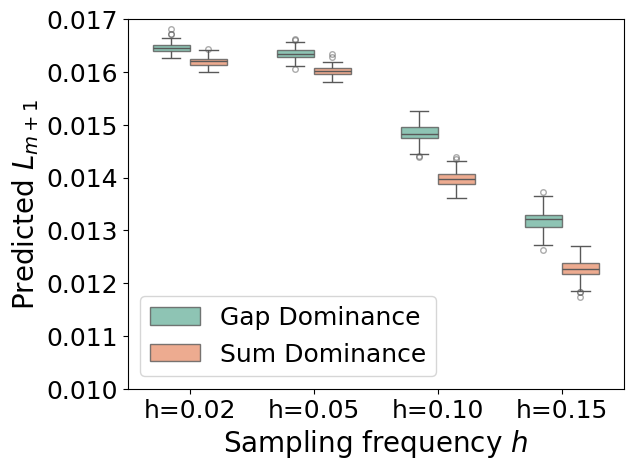}
    \caption{Comparison of the predicted Lipschitz constants}
\label{comparison_gap_sum_dominance}
\end{figure}
With high frequency sampling ($h=0.02$) we prioritize robustness of satisfying the stochastic dominance assumption (recall Fig. \ref{comparison_frequencies_empirical_cdf} and Fig. \ref{comparison_frequencies_cdf_sum}), at the expense of conservatism  in the estimation (see Fig. \ref{comparison_gap_sum_dominance}). Ideally, we want a sampling frequency with the tightest possible estimation such that the stochastic dominance assumption is satisfied.

A key concept in conformal prediction which we use to evaluate our framework is that of coverage, which in our case refers to the proportion of instances in which the prediction region includes the test trajectory. This is evaluated in reference to $1-\delta$, which plays the role as the target coverage. Intuitively, if the realizations of coverage are on average below this target, we consider the prediction region to be unreliable, whereas if the attained average coverage is overly larger than the target, the method is considered conservative. To generate Fig. \ref{coverages_lipschitz_running_example}, we sample 100 calibration datasets of size $m=200$ and evaluate 200 continuous test trajectories for each. We set the target coverage rate to $1-\delta=0.90$. We allocate an error budget of $\delta/2=0.05$ to the discrete-time conformal intervals and $0.05$ to the Lipschitz constant prediction (which is further split in half to compute the gap dominance bounds). Overall, in this case coverage is not sensitive to the  estimated Lipschitz constant, due to the fact that all our methods safely overestimate $L=1/100$, so the final coverage is attributed to the conformal discrete-time prediction regions. The average coverage under estimated Lipchitz constants is $0.95$.
\begin{figure}[H]
\centering\includegraphics[width=0.9\linewidth]{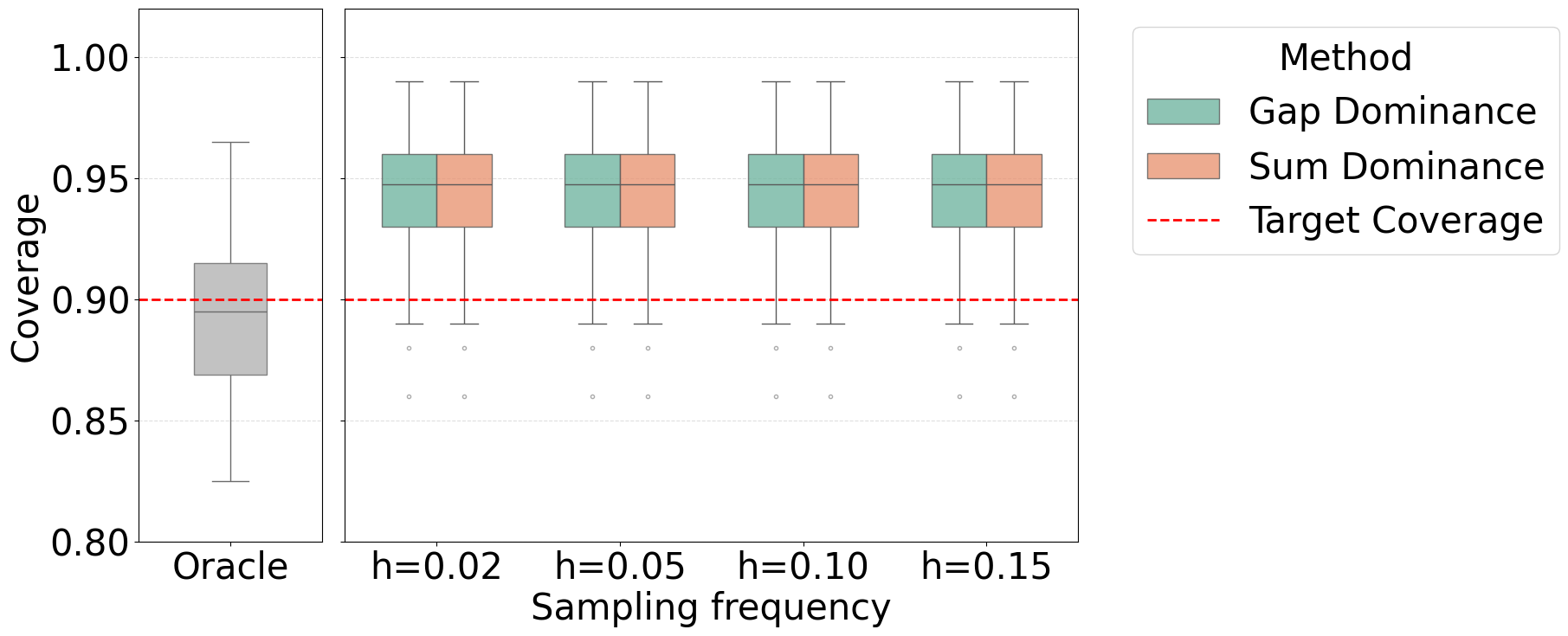}
    \caption{Coverage using fixed sampling times, at different sampling frequencies for Lipschitz estimation. The oracle corresponds to Algorithm \ref{high_dimensional_predictions_alg} with knowledge of $L=0.01$.}
\label{coverages_lipschitz_running_example}
\end{figure}
\subsection{Random Sampling Times}
To obtain random sampling times, a sample from $P_{\boldsymbol{T}}$ is generated as follows: we take $N$
 i.i.d. samples from a uniform distribution in $[0,10]$, independently from the trajectories. By picking different values of $N\in \{50,80,100, 200\}$ our experiments consider setups with different sparsity levels. 

In Fig. \ref{examples_random_sampling_runnign_example} we present instances of the prediction regions we obtain for all methods from Section \ref{random_sampling_times_section} at different sampling sparsities. We see that the prediction regions get tighter for denser sampling times (larger $N$), which is also reflected in the coverage in Fig. \ref{compare_running_example}. For the coverage evaluation in Fig. \ref{compare_running_example},  we generate 100 calibration datasets with $m=200$ and evaluate each with 200 continuous test trajectories. 

\arxiv{Here the oracle corresponds to the Exact Test Sample Imputation due to its strong assumptions.} We take as imputations $\tilde{T}$ draws from a fixed empirical distribution of sampling times $P_{\tilde{T}}$ across each of the sampling sparsity levels\arxiv{, so we have $(X^{(i)}, \boldsymbol{T}^{(i)},\tilde{\boldsymbol{T}}^{(i)})\sim P_{X}P_{\boldsymbol{T}}P_{\tilde{\boldsymbol{T}}}$}. The gap and sum dominance regions are from applying the stochastic dominance on the nonconformity scores. For the methods requiring a Lipschitz constant we use $L=0.01$.
\begin{figure}[H]
    \centering
\includegraphics[width=0.8\linewidth]{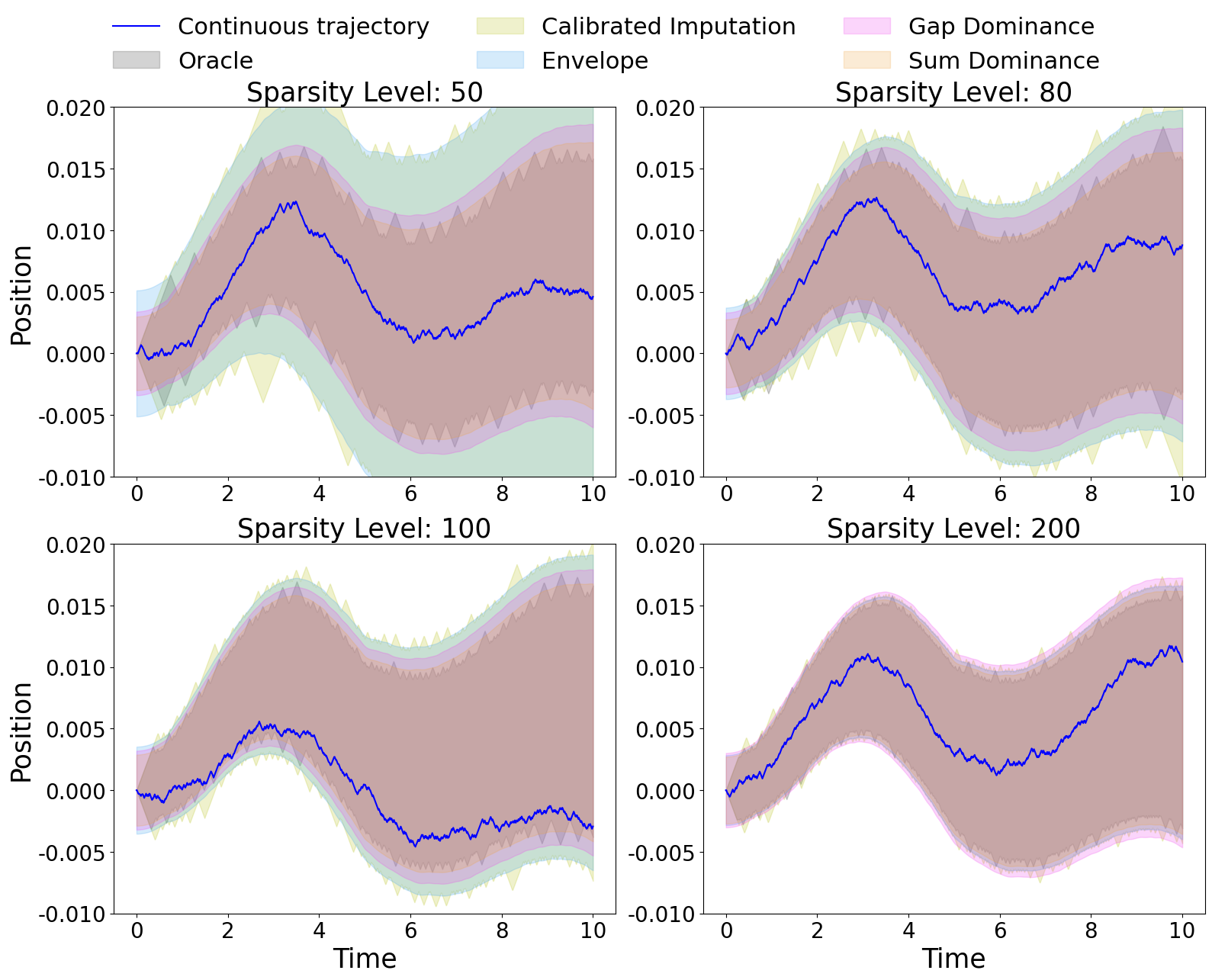}
\caption{Prediction regions by number of uniform random samples.}
\label{examples_random_sampling_runnign_example}
\end{figure}

\begin{figure}[H]
    \centering
\includegraphics[width=0.99\linewidth]{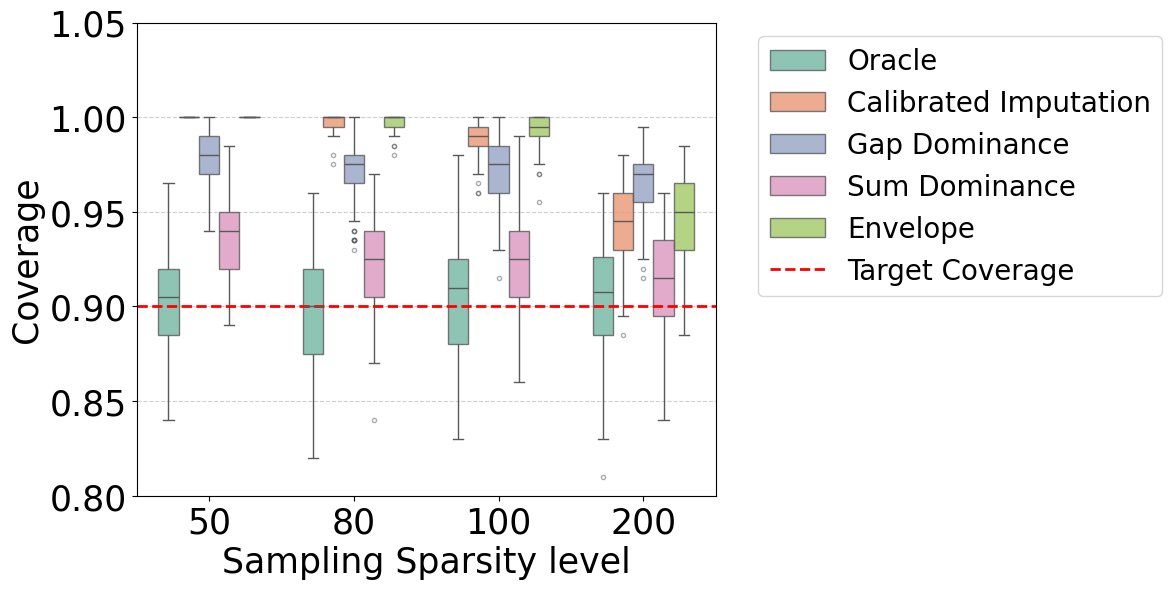}
    \caption{Coverage of methods at different random sampling sparsity levels.}
\label{compare_running_example}
\end{figure}

\conference{\textbf{Acknowledgments. } This work was supported by the National Science Foundation under the Grant IIS-SLES-2417075, by Lockheed Martin, and by a USC-Capital One CREDIF award.}




\arxiv{
\section{Discussion}\label{discussion_section}
At the core of most of our methods is the assumption of a Lipschitz constant for the test trajectory. Our proposed Lipschitz estimation method from Section \ref{lipschitz_estimation_section} requires an additional dataset as part of the calibration, however in many settings this comes with no additional cost. For example, in moderately dense sampling regimes, downsampling the dense trajectories gives the sparse version that we need to leverage either of the stochastic dominance assumption approaches that we introduced. While the geometric interpretation of Assumption \ref{stochastic_dominance_assumption}
is similar to that of Assumption \ref{sum_stochastic_dominance_assumption}, our experiments demonstrate that they differ in a fundamental tradeoff: we find that Assumption \ref{stochastic_dominance_assumption} holds with stronger evidence more often than Assumption \ref{sum_stochastic_dominance_assumption}, but this comes with  conservatism in the obtained prediction regions.

\cmmnt{We introduced a variety of methods for building valid prediction regions under random sampling times. While each method works under different assumptions and settings, having different methods offers more flexibility to obtain continuous-time prediction regions under diverse regimes. The only method that manages to work without a Lipschitz assumption is the stochastic dominance applied directly to the nonconformity scores. This method has the advantage of generally being less conservative, since using a Lipschitz constant relies on worst-case behavior of the trajectories. However the drawback is the reliance on a stochastic dominance assumption, which may be hard to verify in practice. We highlight that the calibrated imputations and the prediction envelope rely on a Lipschitz constant, but calibrated imputations requires the additional predicted sampling times, which in some settings may not be available. However, under accurate imputations, this method approximately yields prediction regions with less reliance on a Lipschitz constant, approximating the oracle nonconformity scores provided by the test sample imputation approach. Furthermore, the calibrated imputations method can be seen as an approach that relaxes the strong assumptions of the test sample imputation method, at the expense of access to synthetic sampling times for calibration.}

For future work, we consider determining sampling conditions under which either of the stochastic dominance assumptions can be satisfied, and one step further in this direction would be to have a principled sampling approach that guarantees a sweet spot in the tradeoff of conservativeness and the satisfaction of a stochastic dominance assumption. The main challenge is the fact that by construction the hidden gap, $\Delta^{\mathrm{hidden}}$, remains unobservable.
}
\arxiv{\section*{{Code Availability}}
Code to reproduce our experiments is available at
\href{https://github.com/jalvarezga/Continuous-Time-Conformal-Prediction}
{https://github.com/jalvarezga/Continuous-Time-Conformal-Prediction}.}

\bibliographystyle{ieeetr}
\bibliography{refs}

\appendix
\section*{{Appendix}}

\section{Results for the Gap Stochastic Dominance Assumption}\label{theoretical_results_dominance_section}
\cmmnt{While the  theoretical properties that we present here do not explicitly give a closed-form, finite-sample subsampling ratio under which the  stochastic dominance assumption is guaranteed to hold, they establish its foundational validity in the limiting behavior of the high frequency sampling. Our results are asymptotic yet consist with classical results from calculus.}
Consider a parametrized version of the dense sampling times in the setting of the gap dominance assumption, so that all the calibration trajectories have as sampling times $\mathbf{T}_{\mathrm{full}}(h)=\{t^{\mathrm{full}}_1,t^{\mathrm{full}}_2,\dots, t^{\mathrm{full}}_{K(h)}\}\subset[0,T^*]$. These can be  unstructured sampling times that partition $[0, T^*]$, with   $0=t^{\mathrm{full}}_1<\dots<t^{\mathrm{full}}_{K(h)}=T^*$  such that $\text{mesh}(\mathbf{T}_{\mathrm{full}}(h)):=\max(t^{\mathrm{full}}_2-t^{\mathrm{full}}_1,\dots,t^{\mathrm{full}}_{K}-t^{\mathrm{full}}_{K-1})\leq h$. The notation $K(h)$ emphasizes that $K\in \mathbb{N}$ is a function of $h$. One particular example is taking equally spaced points of length $h$ that partition $[0,T^*]$, i.e.,  $\mathbf{T}_{\mathrm{full}}(h)=\{t^{\mathrm{full}}_j:j=0,1,\dots, K\}$, where  $t^{\mathrm{full}}_j=jh$ for $j=0,1,\dots, K$, and $h=\frac{T^*}{K}$ like those used to obtain Fig. \ref{dominance_frequencies}. Define the discrete Lipschitz constant over this grid for a generic trajectory as $L_h = \max\limits_{s, t \in \mathbf{T}_{\mathrm{full}}(h), s>t} \frac{\|X_s - X_t\|}{s - t}$ over consecutive $s,t\in \mathbf{T}_{\mathrm{full}}(h)$.  This resembles the definition of $L^{\mathrm{full}}$  from Section \ref{lipschitz_estimation_section} for the specific sampling times $\mathbf{T}_{\mathrm{full}}(h)$. Throughout, we will also  denote by $\Delta_h^{\mathrm{hidden}}=L^{\mathrm{cont}}-L_h$, $\Delta^{\mathrm{obs}}_h = L_h - L^{\mathrm{lf}}$, and  $L^{\mathrm{cont}}$ to be the Lipschitz constant of a generic continuously differentiable trajectory $X\sim P_{X}$, i.e.,  $L^{\mathrm{cont}}=\max\limits_{t\in [0,T^*]}||\dot{X}_t||$. Consistent with the definition of Section \ref{lipschitz_estimation_section}, throughout we take $L^{\mathrm{lf}}= \max\limits_{s, t \in \boldsymbol{T}, s>t} \frac{\|X_s - X_t\|}{s - t}$ over consecutive $s,t\in \boldsymbol{T}$.

\begin{lemma}\label{lemma_frequency_limit}
Under the above definitions,  
$$\lim\limits_{h \to 0^+} \Delta_h^{\mathrm{hidden}} = 0\text{ almost surely}.$$
\end{lemma}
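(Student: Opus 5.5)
We will prove the statement pathwise: fix a realization of the trajectory $X$ in the probability-one event on which $X$ is continuously differentiable on $[0,T^*]$. It then suffices to show that $L_h \to L^{\mathrm{cont}}$ as $h\to 0^+$ for this fixed path. We squeeze $\Delta_h^{\mathrm{hidden}}$ between $0$ and a quantity that vanishes with $h$, using the mean value theorem and uniform continuity of $\dot{X}$.

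\emph{Step 1 (nonnegativity).} For consecutive grid points $t<s$, write $X_s - X_t = \int_t^s \dot{X}_u\,du$. This gives $\|X_s-X_t\| \le (s-t)\max_u\|\dot{X}_u\| = (s-t)L^{\mathrm{cont}}$. Hence $L_h\le L^{\mathrm{cont}}$, so $\Delta_h^{\mathrm{hidden}}\ge 0$ for every $h$.

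\emph{Step 2 (upper bound).} Since $\dot{X}$ is continuous on the compact interval $[0,T^*]$, the maximum $L^{\mathrm{cont}}=\|\dot{X}_{t^\star}\|$ is attained at some $t^\star$. Moreover, $\dot X$ is uniformly continuous, with modulus of continuity $\omega(h)\to 0$. Because $\mathrm{mesh}(\mathbf{T}_{\mathrm{full}}(h))\le h$ and the grid covers $[0,T^*]$ with endpoints $0$ and $T^*$, there are consecutive grid points $t_j\le t^\star\le t_{j+1}$ with $t_{j+1}-t_j\le h$. Then
\begin{equation*}
\Big\|\frac{X_{t_{j+1}}-X_{t_j}}{t_{j+1}-t_j} - \dot{X}_{t^\star}\Big\| = \Big\|\frac{1}{t_{j+1}-t_j}\int_{t_j}^{t_{j+1}} (\dot{X}_u-\dot{X}_{t^\star})\,du\Big\| \le \omega(h).
\end{equation*}
We use the integral form rather than the scalar mean value theorem, which fails for vector-valued $X$ when $d>1$. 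By the reverse triangle inequality, $L_h \ge \|X_{t_{j+1}}-X_{t_j}\|/(t_{j+1}-t_j) \ge L^{\mathrm{cont}}-\omega(h)$.

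\emph{Step 3 (conclusion).} Combining Steps 1 and 2 gives $0\le \Delta_h^{\mathrm{hidden}}\le \omega(h)\to 0$ for each path in the full-measure event. This yields almost sure convergence.

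The main point needing care is Step 2. The difficulty is not the analysis itself but making sure the supremum in the definition of $L^{\mathrm{cont}}$ coincides with $\max_t\|\dot X_t\|$, as the paper asserts for random ODEs. It also requires that the a.s. $C^1$ property is exactly what is assumed. If trajectories were only Lipschitz rather than $C^1$, this argument would fail. In that case I would instead work with the definition of $L^{\mathrm{cont}}$ as a supremum of difference quotients: pick a pair $(t,s)$ whose quotient is within $\varepsilon$ of the supremum, and approximate it by grid quotients. That route, however, requires continuity of difference quotients and weighted averaging over consecutive grid points, and it only yields $\liminf L_h \ge$ something weaker. So I will rely on the $C^1$ setting stated in the appendix preamble.
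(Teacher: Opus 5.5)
Your proof is correct and has the same skeleton as the paper's. The maximum of $\|\dot X_t\|$ is attained at some $t^\star$, $t^\star$ lies in a grid cell of width at most $h$, and uniform continuity of the derivative forces that cell's difference quotient to have norm close to $L^{\mathrm{cont}}$.

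The difference is how you get around the failure of the mean value theorem for vector-valued paths. The paper projects onto $u=\dot X_{t^*}/\|\dot X_{t^*}\|$ and applies the scalar mean value theorem to $g(t)=\langle u,X_t\rangle$. This gives a point $c$ in the cell with $g'(c)>L^{\mathrm{cont}}-\epsilon$, and Cauchy--Schwarz then carries this back to the norm of the difference quotient; the paper also splits off the case $L^{\mathrm{cont}}=0$. You instead write the difference quotient as the average of $\dot X_u$ over the cell and use the reverse triangle inequality.

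Your version buys several things:
\begin{itemize}
\item It works for any norm. The paper's identity $\langle u,\dot X_{t^*}\rangle=\|\dot X_{t^*}\|$ and its use of Cauchy--Schwarz tacitly require the Euclidean norm; otherwise a norming functional is needed in place of $u$.
\item It needs no case split.
\item It gives the explicit rate $0\le\Delta_h^{\mathrm{hidden}}\le\omega(h)$.
\item Your Step~1 makes explicit the lower bound $L_h\le L^{\mathrm{cont}}$. The paper's proof only shows $\Delta_h^{\mathrm{hidden}}<\epsilon$ and takes nonnegativity for granted. With the appendix definition $L^{\mathrm{cont}}=\max_t\|\dot X_t\|$, that bound is exactly your integral estimate, not a triviality.
\end{itemize}
In exchange, the paper's route uses only the scalar mean value theorem and no vector-valued integration.

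Your closing aside undersells the difference-quotient fallback. Take $L^{\mathrm{cont}}$ defined as a supremum of difference quotients, as in Section~\ref{lipschitz_estimation_section}. Then $L_h\le L^{\mathrm{cont}}$ is immediate. For fixed $t<s$, choose grid points $t'_h\to t$ and $s'_h\to s$. The triangle inequality over consecutive grid points gives $\|X_{s'_h}-X_{t'_h}\|\le L_h(s'_h-t'_h)$. Continuity of $X$ then yields $\liminf_{h\to0^+}L_h\ge\|X_s-X_t\|/(s-t)$. Taking the supremum over $(t,s)$ gives $L_h\to L^{\mathrm{cont}}$ for every Lipschitz path, so that route proves a more general statement, not a weaker one.
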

\begin{proof}
Let $X\sim P_X$ be a realization of the continuously differentiable process. By Weierstrass Extreme Value Theorem  there exists $t^*$ such that $L^{\mathrm{cont}} = \|\dot{X}_{t^*}\|$. If $L^{\mathrm{cont}} = 0$, the trajectory is constant in time, trivially satisfying the result. So assume $L^{\mathrm{cont}} > 0$. Let $u := \frac{\dot{X}_{t^*}}{\|\dot{X}_{t^*}\|}$. Define a function $g$ given by $g(t) := \langle u, X_t \rangle$. Given that $X$ is continuously differentiable as a function of time, then $g(t)$ is continuously differentiable. Hence, 
${g}'(t^*) = \left\langle \frac{\dot{X}_{t^*}}{\|\dot{X}_{t^*}\|}, \dot{X}_{t^*} \right\rangle  = \|\dot{X}_{t^*}\| = L^{\mathrm{cont}}$. Since $g'(t)$ is continuous on a closed interval, it must be uniformly continuous. Hence for any arbitrarily small $\epsilon > 0$, there exists $\delta > 0$ such that: $\text{ if } |t - \tau| < \delta, \text{ then } |{g}'(t) - {g}'(\tau)| < \epsilon.$

Choose a grid spacing $h < \delta$. The grid $\mathbf{T}_{\mathrm{full}}(h)$ partitions the interval, so $t^*$ must fall in some sub-interval $[t^{\mathrm{full}}_j, t^{\mathrm{full}}_{j+1}]$ of width  smaller than or equal to $h$. By the Mean Value Theorem, there exists some time $c \in (t^{\mathrm{full}}_{j}, t^{\mathrm{full}}_{j+1})$ such that  $\frac{g(t^{\mathrm{full}}_{j+1}) - g(t^{\mathrm{full}}_{j})}{t^{\mathrm{full}}_{j+1}-t^{\mathrm{full}}_{j}} = {g}'(c)$. Since $c$ and $t^*$ are both in this interval (or on the boundary) of width at most $h < \delta$, we know $|c - t^*| < \delta$.
Therefore, by uniform continuity: $${g}'(c) > {g}'(t^*) - \epsilon = L^{\mathrm{cont}} - \epsilon.$$

On the other hand, we have $g'(c)=\frac{g(t^{\mathrm{full}}_{j+1}) - g(t^{\mathrm{full}}_{j})}{t^{\mathrm{full}}_{j+1}-t^{\mathrm{full}}_{j}} = \left\langle u, \frac{X_{ t^{\mathrm{full}}_{j+1}} - X_{ t^{\mathrm{full}}_{j}}}{t^{\mathrm{full}}_{j+1}-t^{\mathrm{full}}_{j}} \right\rangle$. Thus, by Cauchy-Schwarz Inequality,
\begin{equation*}\begin{aligned}
    \left\langle u, \frac{X_{t^{\mathrm{full}}_{j+1}} - X_{t^{\mathrm{full}}_j}}{ t^{\mathrm{full}}_{j+1}- t^{\mathrm{full}}_{j}} \right\rangle &\leq \|u\| \left\| \frac{X_{ t^{\mathrm{full}}_{j+1}} - X_{ t^{\mathrm{full}}_{j}}}{t^{\mathrm{full}}_{j+1}- t^{\mathrm{full}}_{j}} \right\|\\
    &= \left\| \frac{X_{ t^{\mathrm{full}}_{j+1}} - X_{ t^{\mathrm{full}}_{j}}}{t^{\mathrm{full}}_{j+1}- t^{\mathrm{full}}_{j}} \right\|.
    \end{aligned}
\end{equation*}

Hence, we have that $\left\| \frac{X_{t^{\mathrm{full}}_{j+1}} - X_{t^{\mathrm{full}}_{j}}}{t^{\mathrm{full}}_{j+1}- t^{\mathrm{full}}_{j}} \right\| \geq {g}'(c) > L^{\mathrm{cont}} - \epsilon$.

Given that the discrete Lipschitz estimate $L_h$ is the maximum over all grid consecutive pairs, $$L_h \geq \left\| \frac{X_{t^{\mathrm{full}}_{j+1}} - X_{t^{\mathrm{full}}_{j}}}{t^{\mathrm{full}}_{j+1}- t^{\mathrm{full}}_{j}} \right\| > L^{\mathrm{cont}} - \epsilon.$$
Thus, $\Delta^{\mathrm{hidden}}_h<\epsilon$.\end{proof}
Connecting Lemma \ref{lemma_frequency_limit} with an explicit comparison between the gaps, we obtain the following result.

\begin{proposition}\label{assumption_1_verification}
Assume the process $X \sim P_X$ is continuously differentiable almost surely and non-degenerate in the sense that $\mathbb{P}_{P_{X, \boldsymbol{T}}}\!\left(L^{\mathrm{cont}} - L^{\mathrm{lf}} > 0\right) = 1$. Let $\mathbf{T}_{\mathrm{full}}(\cdot)$ be any deterministic family of sampling time grids parameterized by $h > 0$ satisfying $\mathrm{mesh}(\mathbf{T}_{\mathrm{full}}(h)) \le h$ for all $h>0$. Then,
$$\mathbb{P}_{P_{X, \boldsymbol{T}}}\!\left( \exists h^* > 0 \text{ such that } \forall h \in [0, h^*), \; \Delta^{\mathrm{hidden}}_h \le \Delta^{\mathrm{obs}}_h \right) = 1.$$
\end{proposition}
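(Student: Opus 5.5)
Let $G := L^{\mathrm{cont}} - L^{\mathrm{lf}}$. The plan is to reduce the comparison of the two gaps to one statement about $\Delta^{\mathrm{hidden}}_h$ alone, and then apply Lemma \ref{lemma_frequency_limit} pathwise.

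\textbf{Step 1: reduction.} The key identity is
$$\Delta^{\mathrm{hidden}}_h + \Delta^{\mathrm{obs}}_h = (L^{\mathrm{cont}} - L_h) + (L_h - L^{\mathrm{lf}}) = G,$$
and it holds for every $h$. Hence
$$\Delta^{\mathrm{hidden}}_h \le \Delta^{\mathrm{obs}}_h \iff 2\Delta^{\mathrm{hidden}}_h \le G \iff \Delta^{\mathrm{hidden}}_h \le G/2 .$$
The quantity $G$ does not depend on $h$, since $L^{\mathrm{lf}}$ is built from $\boldsymbol{T}$ only. So it suffices to show that, almost surely, $\Delta^{\mathrm{hidden}}_h \le G/2$ for all sufficiently small $h$.

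\textbf{Step 2: a full-measure event.} Let $\Omega_0$ be the intersection of two events: that $X$ is continuously differentiable, and that $G>0$. Each event has probability one by hypothesis, so $\mathbb{P}(\Omega_0)=1$.

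\textbf{Step 3: pathwise argument.} Fix an outcome $(X,\boldsymbol{T})\in\Omega_0$ and set $\epsilon := G/2 > 0$. I would use the proof of Lemma \ref{lemma_frequency_limit}, not merely its limit statement. That proof is deterministic for a fixed $C^1$ path, and it produces a $\delta>0$ (from uniform continuity of $g'$) such that $\Delta^{\mathrm{hidden}}_h < \epsilon$ for every $h<\delta$. Here the grids $\mathbf{T}_{\mathrm{full}}(h)$ are deterministic with $\mathrm{mesh}(\mathbf{T}_{\mathrm{full}}(h)) \le h$, and $\delta$ depends only on the path. Taking $h^* := \delta$ gives $\Delta^{\mathrm{hidden}}_h < G/2$ for all $h\in(0,h^*)$, and Step 1 then gives $\Delta^{\mathrm{hidden}}_h \le \Delta^{\mathrm{obs}}_h$. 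Since this holds on $\Omega_0$, the stated probability equals one.

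\textbf{Main obstacle.} The delicate point is uniformity in $h$. The statement asks for \emph{all} $h\in[0,h^*)$, not a sequence $h_n\to 0$, and the family of grids is arbitrary. This is why I rely on the explicit $\epsilon$–$\delta$ content of the lemma's proof, which gives a bound valid for every grid of mesh below $\delta$, rather than on the limit alone. A secondary remark: $h=0$ is not a meaningful grid parameter, so the interval should be read as $(0,h^*)$, or as $\Delta^{\mathrm{hidden}}_0 := 0$ by continuity, in which case the inequality holds trivially since $G>0$.
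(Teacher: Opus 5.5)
Your proposal is correct and follows essentially the same route as the paper: on the full-measure event where $X$ is $C^1$ and $G = L^{\mathrm{cont}} - L^{\mathrm{lf}} > 0$, you run the $\epsilon$--$\delta$ argument from the proof of Lemma~\ref{lemma_frequency_limit} with $\epsilon = G/2$, and then use $\Delta^{\mathrm{obs}}_h = G - \Delta^{\mathrm{hidden}}_h$ to get $\Delta^{\mathrm{hidden}}_h < G/2 < \Delta^{\mathrm{obs}}_h$ for all $h < h^*$. One minor point: the lemma's limit is taken over the real parameter $h \to 0^+$, so its statement alone already gives the uniform-in-$h$ neighbourhood you need, and your ``main obstacle'' is less delicate than you suggest.
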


\begin{proof} We will show that the following set relationship holds $\{ L^{\mathrm{cont}} - L^{\mathrm{lf}}>0\}\cap\{\lim\limits_{h \to 0^+} \Delta_h^{\mathrm{hidden}} = 0\}\subseteq \{\exists h^* > 0 \text{ such that } \forall h < h^*, \; \Delta^{\mathrm{hidden}}_h \leq \Delta^{\mathrm{obs}}_h\}$. From there, we will leverage that intersection of almost sure events is another almost sure event. 

Fix a realization $(X, \boldsymbol{T}) \sim P_{X, \boldsymbol{T}}$. Let $a: = L^{\mathrm{cont}} - L^{\mathrm{lf}}> 0$. Note that $\lim_{h \to 0^+} L_h = L^{\mathrm{cont}}$. Therefore, the limits of the gaps are: $\lim\limits_{h \to 0^+} \Delta^{\mathrm{hidden}}_h= 0$, and  $\quad \lim\limits_{h \to 0^+} \Delta^{\mathrm{obs}}_h = \lim\limits_{h \to 0^+} (L_h - L^{\mathrm{lf}}) = L^{\mathrm{cont}} - L^{\mathrm{lf}} = a$.
Following the proof of Lemma \ref{lemma_frequency_limit} taking $\epsilon = \frac{a}{2} > 0$, there exists an $h^* > 0$ such that for all $h \in [0, h^*)$: $\Delta^{\mathrm{hidden}}_h < \epsilon=a/2.$
Simultaneously, for $h < h^*$, the observed gap evaluates to:$$\Delta^{\mathrm{obs}}_h = L_h - L^{\mathrm{lf}} = (L^{\mathrm{cont}} - \Delta^{\mathrm{hidden}}_h) - L^{\mathrm{lf}} = a - \Delta^{\mathrm{hidden}}_h.$$ Since $\Delta^{\mathrm{hidden}}_h < \frac{a}{2}$, we have: $\Delta^{\mathrm{obs}}_h > a - \frac{a}{2} = \frac{a}{2}$. Therefore, for all $h < h^*$, the gaps are strictly ordered: $\Delta^{\mathrm{hidden}}_h < \frac{a}{2} < \Delta^{\mathrm{obs}}_h$. 

We have shown that

$\{ L^{\mathrm{cont}} - L^{\mathrm{lf}}>0\}\cap\{\lim\limits_{h \to 0^+} \Delta_h^{\mathrm{hidden}} = 0\}\subseteq \{\exists h^* > 0 \text{ such that } \forall h < h^*, \; \Delta^{\mathrm{hidden}}_h \leq \Delta^{\mathrm{obs}}_h\}$, with $\mathbb{P}_{P_{X, \boldsymbol{T}}}\left( L^{\mathrm{cont}} - L^{\mathrm{lf}}>0\right)=1$, and 
$\mathbb{P}_{P_{X, \boldsymbol{T}}}\left( \lim\limits_{h \to 0^+} \Delta_h^{\mathrm{hidden}} = 0\right)=1$. 
Therefore, 

$\mathbb{P}_{P_{X, \boldsymbol{T}}}\left( \exists h^* > 0 \text{ such that } \forall h < h^*, \; \Delta^{\mathrm{hidden}}_h \leq \Delta^{\mathrm{obs}}_h \right) = 1$.
\end{proof}

Our non-degeneracy assumption $\mathbb{P}(L^{\mathrm{cont}} - L^{\mathrm{lf}} > 0) = 1$ is satisfied whenever the low-frequency grid $\boldsymbol{T}$ fails to perfectly capture the exact moment when the maximum derivative magnitude is attained. For general continuous-time processes, this is not a restrictive assumption, since the event $\{L^{\mathrm{cont}}=L^{\mathrm{lf}}\}$ will have measure zero, unless we have an adversarial process, such as strictly constant time derivative, where sparse sampling achieves the peak velocity over the entire continuous time domain.\footnote{The particular adversarial case of almost sure constant velocity would not be a problem, since any sampling frequency would recover a Lipschitz constant for the trajectory.}

This result gives us a principled understanding for why under high-frequency sampling regimes, Assumption \ref{stochastic_dominance_assumption} is reasonable.
In particular, since almost sure dominance implies stochastic dominance, we have:

$$\Delta^{\mathrm{hidden}}\leq\Delta^{\mathrm{obs}}\text{ almost surely }\implies \text{Assumption } \ref{stochastic_dominance_assumption}\text{ holds}.$$

To support the intuition of Proposition \ref{assumption_1_verification}, consider Fig. \ref{dominance_frequencies}, where we present the simulation of 200 realizations of the difference of the gaps $\Delta_h^{\mathrm{obs}}-\Delta_h^{\mathrm{hidden}}$ for different resolutions of the high-frequency sampling times $\mathbf{T}_{\mathrm{full}}(h)$.  When the support of $\Delta_h^{\mathrm{obs}}-\Delta_h^{\mathrm{hidden}}$ is non-negative, we get almost sure dominance. The setup to obtain this figure is the same as that of Fig. \ref{comparison_frequencies_empirical_cdf}. The difference is that in the context of Fig. \ref{comparison_frequencies_empirical_cdf} we are evaluating stochastic dominance, whereas in Fig. \ref{dominance_frequencies} we are evaluating almost sure dominance.
\begin{figure}[H]
\centering
\includegraphics[width=0.5\linewidth]{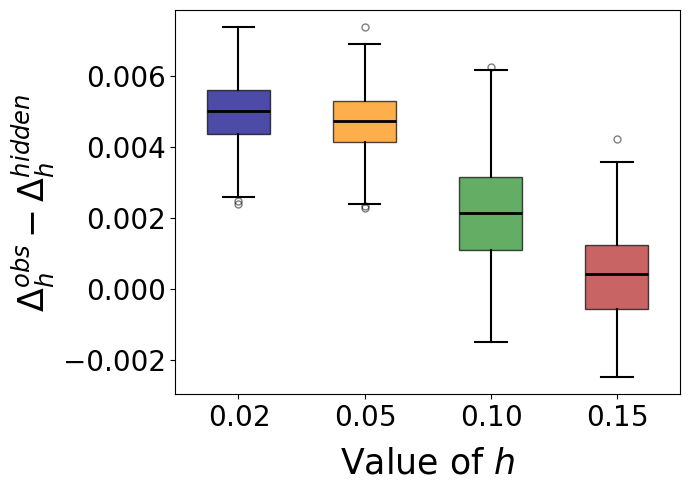}
\caption{Gap differences.}\label{dominance_frequencies}
\end{figure}
For higher frequencies ($h\in\{0.02,0.05\}$) all the observed gap differences are always non-negative, whereas sparser resolutions break this almost sure dominance relationship between the gaps.

\section{Calibrated Envelopes with an Estimated Lipschitz Constant}\label{exchangeability_correction_section}
The enveloped prediction region from Section \ref{prediction_envelope_section} relies on nonconformity scores computed using ${L}>0$. However, if we recycle the exact same calibration data to both estimate $\hat{L}$ and then compute the nonconformity scores, we violate exchangeability between the calibration and test nonconformity scores, since the test nonconformity score would depend on a threshold $\hat{L}$ computed  solely with the calibration dataset. 

Assume $\{(X^{(i)}, \boldsymbol{T}^{(i)}, \boldsymbol{T}^{(i)}_{{\mathrm{full}}})\}_{i=1}^{m+1}\overset{iid}{\sim}P_{X, \boldsymbol{T}, \boldsymbol{T}_{{\mathrm{full}}}}$. To preserve the i.i.d. assumption, we partition the calibration dataset into two disjoint subsets: $\mathcal{D}_{cal} = \mathcal{D}_{L} \cup \mathcal{D}_{S}$. Index-wise without loss of generality the first $m-k$  observations correspond to $\mathcal{D}_{L}$ and the remaining $k$  correspond to  $\mathcal{D}_{S}$. Optimal data splitting is outside the scope of this work, the reader may benefit from ideas from \cite{das2026optimal}.
From here we proceed in two steps. Firstly, we use $\mathcal{D}_{L}$ exclusively to obtain a valid upper bound for $L_{m+1}^{\mathrm{cont}}$ (with either (\ref{predicted_L}) or (\ref{predicted_L_2})), obtaining $\hat{L}$ such that:
\begin{equation}\label{lipschitz_step1}
\mathbb{P}_{P_{X,\boldsymbol{T}, \boldsymbol{T}_{\mathrm{full}}}^{m-k+1}}\Big(L_{m+1}^{\mathrm{cont}} \leq\hat{L}\Big)\geq 1-\alpha.\end{equation}
We then use $\mathcal{D}_{S}$ solely to compute the envelope nonconformity scores based on this fixed $\hat{L}$. For each trajectory $i \in\{m-k+1, \dots, m, m+1\}$, let $\hat{{D}}^{(i)}_t$ be the envelope at time $t$:

\begin{equation}
  \hat{{D}}^{(i)}_t \coloneqq \min_{t^{(i)}_k \in \boldsymbol{T}^{(i)}} \left( \|X^{(i)}_{t^{(i)}_k} - \hat{x}_t\| + \hat{L}|t - t^{(i)}_k| \right),  
\end{equation}
where we use an additional upper hat notation to distinguish with the main text definition based on a known Lipschitz constant. 
We define the corresponding nonconformity score $\hat{{S}}_i$ as: $ \hat{{S}}_i:= \max\limits_{t \in [0, T^*]} \frac{\hat{{D}}^{(i)}_t}{\hat{\sigma}_t}.$ With this setup we are ready to state our result hybridizing Section \ref{lipschitz_estimation_section}
 with Section \ref{random_sampling_times_section}.
 \begin{proposition}
Let $\hat{L} = \hat{L}(\mathcal{D}_L)$ be such that (\ref{lipschitz_step1}) holds, and use it to define $\hat{q}\coloneqq\text{Quantile}(\{\hat{{S}}_i\}_{i=m-k+1}^m;(1+\frac{1}{k})(1-\delta))$. If we construct the prediction region as
$ \mathcal{C}_t = \left\{ x \in \mathbb{R}^d : \|x - \hat{f}_t\| \leq \hat{q} \cdot \hat{\sigma}_t \right\}, t\in [0,T^*],$
then:
$$ \mathbb{P}_{P_{X,\boldsymbol{T}, \boldsymbol{T}_{\mathrm{full}}}^{m+1}}\Big( X^{(m+1)}_t \in \mathcal{C}_t \text{ for all } t \in [0, T^*] \Big) \geq 1 - (\alpha + \delta). $$
\end{proposition}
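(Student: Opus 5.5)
The plan is to intersect two events, as in Theorem \ref{oracle_lipschitz_theorem} and Theorem \ref{Lipschitz_estimation_theorem}. Define $E_{\mathrm{lip}}:=\{L^{\mathrm{cont}}_{m+1}\le \hat{L}\}$, which has probability at least $1-\alpha$ by (\ref{lipschitz_step1}). Define $E_{S}:=\{\hat{S}_{m+1}\le \hat{q}\}$. The argument has three parts: show $E_{\mathrm{lip}}\cap E_S\subseteq \mathcal{E}_{\mathcal{C}}$, show $\mathbb{P}(E_S)\ge 1-\delta$, and finish with a union bound to get $1-(\alpha+\delta)$.

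\emph{Deterministic inclusion.} On $E_{\mathrm{lip}}$, for every $t\in[0,T^*]$ and every $t_k\in\boldsymbol{T}^{(m+1)}$, the triangle inequality gives
$$\|X^{(m+1)}_t-\hat{f}_t\|\le \|X^{(m+1)}_{t_k}-\hat{f}_t\|+L^{\mathrm{cont}}_{m+1}|t-t_k|\le \|X^{(m+1)}_{t_k}-\hat{f}_t\|+\hat{L}|t-t_k|.$$
Minimizing over $t_k$ yields $\|X^{(m+1)}_t-\hat{f}_t\|\le \hat{D}^{(m+1)}_t$. Dividing by $\hat\sigma_t$ and bounding by the supremum gives $\|X^{(m+1)}_t-\hat f_t\|/\hat\sigma_t\le \hat S_{m+1}$. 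This is the same chain as in Theorem \ref{envelope_theorem}. On $E_S$ this is at most $\hat q$, so $X^{(m+1)}_t\in\mathcal{C}_t$ for all $t$.

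\emph{Coverage of $E_S$.} This is the main step, because $\hat{L}$ is data dependent. I would condition on $\mathcal{D}_L$. Since $\mathcal{D}_L$ is independent of the samples with indices $m-k+1,\dots,m+1$, and these samples are i.i.d., $\hat{L}$ is a fixed constant conditionally. Each $\hat S_i$ is then the same deterministic function, indexed by $\hat{L}$, applied to the $i$-th triple, so $\hat S_{m-k+1},\dots,\hat S_{m+1}$ are conditionally i.i.d., hence exchangeable. Split CP (Theorem 3.2 of \cite{angelopoulos2025theoreticalfoundationsconformalprediction}) with $k$ calibration scores and level $(1+1/k)(1-\delta)$ then gives $\mathbb{P}(\hat S_{m+1}\le\hat q\mid\mathcal{D}_L)\ge 1-\delta$. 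Taking expectation over $\mathcal{D}_L$ by the tower rule gives $\mathbb{P}(E_S)\ge1-\delta$.

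One technical point needs checking here: $\hat S_i$ must be a measurable, possibly infinite-valued, function. The maximum over $[0,T^*]$ is attained because $\hat{D}^{(i)}_t/\hat\sigma_t$ is continuous when $\hat f,\hat\sigma$ are continuous; otherwise I would use a supremum, which does not affect the CP argument.

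Finally, $\mathbb{P}(\mathcal{E}_{\mathcal{C}})\ge\mathbb{P}(E_{\mathrm{lip}}\cap E_S)\ge \mathbb{P}(E_{\mathrm{lip}})+\mathbb{P}(E_S)-1\ge 1-\alpha-\delta$. This step marginalizes the probability in (\ref{lipschitz_step1}), stated over $m-k+1$ samples, to the full product measure over $m+1$ samples, which is valid because the extra samples do not enter that event.
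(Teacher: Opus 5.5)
Your proposal is correct and follows the paper's proof essentially step for step. It uses the same two events, $\{L^{\mathrm{cont}}_{m+1}\le\hat{L}\}$ and $\{\hat{S}_{m+1}\le\hat{q}\}$, the same triangle-inequality inclusion into the coverage event, conditioning on $\mathcal{D}_L$ with split CP and the tower rule, and the final union bound with marginalization of (\ref{lipschitz_step1}); your remarks on measurability and max versus sup are harmless refinements the paper leaves implicit.
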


\begin{proof}
Let $E_{\text{split}}:=\{L_{m+1}^{\mathrm{cont}} \leq\hat{L}\}$. By triangle inequality, this event implies $\|X^{(m+1)}_t - \hat{f}_t\| \leq \hat{{D}}^{(m+1)}_t$ for all $t$. By assumption, $\mathbb{P}_{P_{X,\boldsymbol{T}, \boldsymbol{T}_{\mathrm{full}}}^{m-k+1}}(E_{\text{split}}) \geq 1-\alpha$. Let $E_{\text{score}}:=\{\hat{{S}}_{m+1} \leq \hat{q}\}$. 
Conditional on $\mathcal{D}_L$, the threshold $\hat{L}$ is a fixed constant, and the variables $\mathcal{D}_S \cup \{(X^{(m+1)}, \boldsymbol{T}^{(m+1)}, \boldsymbol{T}^{(m+1)}_{\mathrm{full}})\}$ remain exchangeable (Fact 3.7  from \cite{angelopoulos2025theoreticalfoundationsconformalprediction}). Consequently, the nonconformity scores $\{\hat{{S}}_i\}_{i=m-k+1}^m$ and the test score $\hat{{S}}_{m+1}$ are conditionally exchangeable. By split CP, it follows that $\mathbb{P}_{P^{k+1}_{(X,\boldsymbol{T}, \boldsymbol{T}_{\mathrm{full}})|\mathcal{D}_{L}}}(E_{\text{score}} \mid \mathcal{D}_L) \geq 1-\delta$. Taking the expectation over $\mathcal{D}_L$ and applying the tower property, we obtain the marginal validity: $\mathbb{P}_{P^{k+1}_{X,\boldsymbol{T}, \boldsymbol{T}_{\mathrm{full}}}}(E_{\text{score}}) = \mathbb{E}_{\mathcal{D}_L}[\mathbb{P}_{P^{k+1}_{(X,\boldsymbol{T}, \boldsymbol{T}_{\mathrm{full}})|\mathcal{D}_{L}}}(E_{\text{score}} \mid \mathcal{D}_L)] \geq \mathbb{E}_{\mathcal{D}_L}[1-\delta] = 1-\delta.$ 
Thus,  if the event $E_{\text{score}}\cap E_{\text{split}}$ holds, then we have for all $t \in [0, T^*]$:
\begin{equation*}
    \frac{\|X^{(m+1)}_t - \hat{f}_t\|}{\hat{\sigma}_t} \leq \frac{\hat{{D}}^{(m+1)}_t}{\hat{\sigma}_t} \leq \sup_{s \in [0,T^*]} \frac{\hat{{D}}^{(m+1)}_s}{\hat{\sigma}_s} = \hat{{S}}_{m+1} \leq \hat{q}.
\end{equation*}
This sequence of inequalities implies $X^{(m+1)}_t \in \mathcal{C}_t$ across the entire domain. Using the monotonicity of the probability measure along with the union bound, and marginalization, we conclude: 
\begin{equation*}
\begin{split}
    \mathbb{P}_{P_{X,\boldsymbol{T}, \boldsymbol{T}_{full}}^{m+1}}\Big(X^{(m+1)}_t \in \mathcal{C}_t, \forall t \in [0, T^*]\Big)& \geq \mathbb{P}_{P_{X,\boldsymbol{T}, \boldsymbol{T}_{full}}^{m+1}}(E_{\text{split}} \cap E_{\text{score}})\\
    &\geq \mathbb{P}_{P_{X,\boldsymbol{T}, \boldsymbol{T}_{full}}^{k+1}}(E_{\text{score}})+\mathbb{P}_{P_{X,\boldsymbol{T},\boldsymbol{T}_{full}}^{m-k+1}}(E_{\text{split}})-1\\
    &\geq 1 - (\alpha + \delta).
\end{split}
\end{equation*}
\end{proof}
The steps to obtain valid coverage under estimated Lipschitz constant for the calibrated imputations method follow the same procedure based on a split of the dataset to preserve exchangeability.

\section{Empirical Validation of the Stochastic Dominance Assumption}\label{appendix_empirical_validation}
The fundamental idea of Assumption \ref{stochastic_dominance_assumption} and Assumption \ref{sum_stochastic_dominance_assumption} is motivated by a downsampling induced error. The key bottleneck that prevents a rigorous validation of these style of  assumptions is that, by definition,  the hidden gap, $\Delta^{\mathrm{hidden}}$, is unobservable, which is precisely what motivates the approach of using two sampling regimes. 
However, we can use the available data to check if the principle behind the downsampling behavior of the trajectories holds at different observable downsampling scales. 

Here we present an empirical diagnostic procedure to check stochastic dominance at observable subsampling ratios without relying on ground truth continuous data. We use this approach to obtain Fig. \ref{comparison_frequencies_empirical_cdf}  and Fig. \ref{comparison_frequencies_cdf_sum} where we compare this approach against oracle ground truth continuous data.

We construct sampling times $\boldsymbol{T}^{(i)} , \boldsymbol{T}_{\mathrm{mid}}^{(i)},\boldsymbol{T}^{(i)}_{\mathrm{full}}$ such that $|\boldsymbol{T}^{(i)}|<|\boldsymbol{T}_{\mathrm{mid}}^{(i)}|<|\boldsymbol{T}^{(i)}_{\mathrm{full}}|$, where $\boldsymbol{T}^{(i)}$ 
is the target sparse sampling regime, and $\boldsymbol{T}_{\mathrm{mid}}^{(i)}$ is a chosen mid-frequency regime for the $i$-th trajectory. For example, 
$\boldsymbol{T}_{\mathrm{mid}}^{(i)} \subset \boldsymbol{T}^{(i)}_{\mathrm{full}}, \text{ and } \boldsymbol{T}^{(i)} \subset \boldsymbol{T}^{(i)}_{\mathrm{full}}$. To avoid  a trivial  comparison\footnote{If the grids are nested, then $L^{\mathrm{mid}} \geq L^{\mathrm{lf}}$ is true pointwise for every single trajectory. Thus, testing pointwise dominance would be a trivial identity. This follows since $\boldsymbol{T} \subset \boldsymbol{T}_{\mathrm{mid}}$ implies we can upper bound $ L^{\mathrm{lf}}$ by a weighted average of slopes of all the consecutive points in the interval where the maximum that defines $ L^{\mathrm{lf}}$ is attained.} between subsampling ratios we require that $\boldsymbol{T}^{(i)}\not\subset\boldsymbol{T}_{\mathrm{mid}}^{(i)}$. Another example in the fixed sampling times setting is to take evenly spaced samples with spacings of $0<h<h_{\mathrm{mid}}<H$, for $\boldsymbol{T}^{(i)}_{\mathrm{full}}, \boldsymbol{T}_{\mathrm{mid}}^{(i)},\text{ and } \boldsymbol{T}^{(i)}$, respectively.

We can then check if the dominance relation holds at these observable scales. Usually, if the dominance relation holds consistently across the observed scales, it provides empirical evidence that the chosen grid resolution is faithful to preserve the stochastic dominance assumption.

Concretely, we define a generic surrogate hidden gap to act as a proxy of the oracle hidden gap: $\hat{\Delta}^{\mathrm{hidden}} = L^{\mathrm{full}} - L^{\mathrm{mid}}$
and we compare it to 
$\Delta^{\mathrm{obs}}$. Here $L^{\mathrm{mid}}:=\max\limits_{s,t\in \boldsymbol{T}_{\mathrm{mid}}, s<t}\frac{||{X}_t- {X}_s||}{t-s}$ over consecutive $s,t\in \boldsymbol{T}_{\mathrm{mid}}$.  The intuition of this surrogate is to mimic a “fine-to-coarse” drop using only observable scales. Analogously, we define the corresponding quantities for the case of checking  stochastic dominance directly on the nonconformity scores as in the setting of Section \ref{stochastic_dominance_section}, namely, we take $\boldsymbol{\hat\Delta}^{\textbf{hidden}}= S^{\mathrm{full}} - S^{\mathrm{mid}}$, with $S^{\mathrm{mid}}:=\max\limits_{t\in \boldsymbol{T}_{\mathrm{mid}}}R(t)$.

We can then plot the empirical CDFs of these quantities as in Fig. \ref{comparison_frequencies_empirical_cdf} and Fig. \ref{comparison_frequencies_cdf_sum}, where we can see that the surrogate comparison serves as a proxy for the true oracle stochastic dominance at different sampling  frequencies. This empirical pipeline recovers an intuitive phenomenon that holds in the  oracle plots: when the subsampling ratio $H/h$ is larger, the stochastic dominance tends to hold more strongly.\cmmnt{  More generally, for a larger drops in the sampling times $|\boldsymbol{T}^{(i)}_{\mathrm{full}}|-|\boldsymbol{T}^{(i)}|$ the gap between the CDF of the hidden and the observable gap gets larger.}

\section{Additional Experiments}\label{additional_experiments}
As a reminder, this section is structured as follows. In Appendix \ref{dmps_experiments} we reproduce the methodology of Section \ref{experiments_section} for comparing our conformal methods under random sampling times. Then, we narrow down the experiments to study our methods under stochastic dominance assumptions: in Appendix \ref{attractor_experiment} we study stochastic dominance directly on the nonconformity scores, whereas in Appendix \ref{lorenz_experiment} we present experiments of Lipschitz constant estimation in higher dimensions.
\subsection{Simulation of Dynamic Movement Primitives}\label{dmps_experiments}
Dynamic Movement Primitives (DMPs) is a versatile robot learning paradigm to learn from demonstrations \cite{pastor2009learning,ijspeert2013dynamical, hielscher2025interactive}. We use the  Python library  movement-primitives \cite{fabisch2024movement_primitives} for this experiment. Following \cite{pastor2009learning}, we consider the ODE formulation of a one-dimensional DMP:

\begin{equation}\label{eq1_dmps}
\begin{split}
\tau \dot{v}&=K(g-x)-Dv+(g-x_0)f\\
\tau \dot{x}&=v,
\end{split}
\end{equation}
where $x$ and $v$ are position and velocity of the system, $x_0$ and $g$ are start and goal position respectively; $\tau$ is a temporal scaling factor, $K$ is a spring constant, $D$ is the damping term (constant), and $f$ is a nonlinear function, called the learned forcing term, learned to allow for generating complex movements. Explicitly, it is given by $f(s)=\frac{\sum^{M}_{i=1}w_i\phi_i(s)s}{\sum_{i=1}^{M}\phi_{i}(s)},$ where $\phi_{i}(s)=\text{exp}(-h_i (s-c_i)^2)$ are Gaussian basis functions, $M$ is the number of basis functions, with center $c_i$ and width $h_i$. The variable $s$ corresponds to a phase variable which monotonically decreases from $1$ to $0$, and such that $\tau\dot{s}=-\alpha_ss$, where $\alpha_s$ is a predefined constant. 
The basis functions divide the motion into phases, the weights $w_i$ are learned to encode how to match the movement to the demonstrated trajectory, as we will explain next for our particular demonstration.

We take a total of $M=20$ weights and set $\tau=T^*=1$. For the demonstration, we take $x_{\mathrm{demo}}(t)=10 t^3-15t^4+6t^5$, which corresponds to the minimum jerk trajectory \cite{flash1985coordination}, which is standard in robotics. Following the DMP approach, we collect samples of the expert demonstration at times $t_{1},\dots, t_N$, along with its first and second derivative, obtaining $x_{\mathrm{demo}},v_{\mathrm{demo}}, \dot{v}_{\mathrm{demo}}$. Using the first equation from (\ref{eq1_dmps}), which is called the transformation system, we obtain values for the target function:
\begin{equation}\label{target_f}
    f_{\mathrm{demo}}(s(t_k))=\frac{-K(g-x_{\mathrm{demo}}(t_k))+Dv_{\mathrm{demo}}(t_k)+\tau \dot{v}_{\mathrm{demo}}(t_k)}{g-x_0},\text{ for all }k=1,\dots,N.
\end{equation}
Here we set $x_0$ and $g$ the start and goal to be given by $x_{\mathrm{demo}}(0)$ and $x_{\mathrm{demo}}(t_N)$, respectively.

After this step, we learn the optimal weights by minimizing the following function with respect to $w\in \mathbb{R}^M$:

\begin{equation}\label{regression_DMPs}
J(w):=\sum_{k=1}^N\Big(f_{\mathrm{demo}}(s(t_k))-f(s(t_k))\Big)^2.
\end{equation}

Note that this corresponds to solving a least squares problem in linear regression, since the function $f$ is linear on the weights. For generating trajectories, we draw inspiration from \cite{hielscher2025interactive} by introducing noise on the learned weights from the expert demonstration, denoted by $w^*\in \mathbb{R}^M$. From here, we generate i.i.d. trajectories by solving a random ODE from (\ref{eq1_dmps}) using the weights $w^{(i)}:=w^*+w'_i$, with i.i.d. Gaussian weights $w'_i\sim \mathcal{N}(0,0.15\mathcal{I}_{M})$, where $\mathcal{I}_{M}$ denotes the identity matrix in $\mathbb{R}^{M\times M}$. A sample of these trajectories are in gray in Fig. \ref{dmps_data}, representing i.i.d. draws over DMPs that share a common start and goal but exhibit variance in the path shape.
\begin{figure}[H]
\centering
\includegraphics[width=0.5\linewidth]{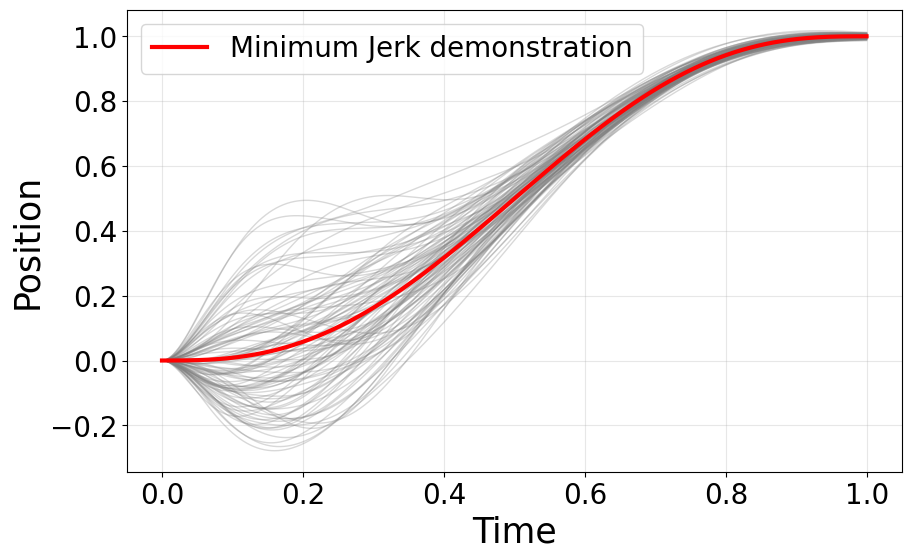}
\caption{Simulated trajectories and demonstration.}
\label{dmps_data}
\end{figure}
To obtain our point predictors, $\hat{x}, \hat{\sigma}$ we fit splines to training data at fixed sampling times.  In Fig. \ref{compare2} we include instances at different sampling sparsity levels, using the same setup of random sampling times as in Section \ref{experiments_section}. 

For evaluation of coverage taking $\delta=0.10$ we generate 100 calibration datasets of size $m=500$ and evaluate each on $200$ test trajectories. We highlight that in this setup we do not know a Lipschitz constant for the trajectories. Hence, for the methods that require it, we follow Appendix \ref{exchangeability_correction_section} by partitioning the calibration dataset of $m=500$ and using $250$ trajectories for Lipschitz estimation and the remaining $250$ for computing the nonconformity scores. For the calibrated imputations method, we use a static empirical distribution of sampling times $P_{\tilde{\boldsymbol{T}}}$ for each sampling sparsity level, as in the experiments from Section \ref{experiments_section}. Although the oracle is obtained using an estimated Lipschitz constant, it assumes access to samples from $P_{\boldsymbol{T}}$ which corresponds to uniform random sampling times from $[0,1]$.

\begin{figure}[H]
    \centering
\includegraphics[width=0.9\linewidth]{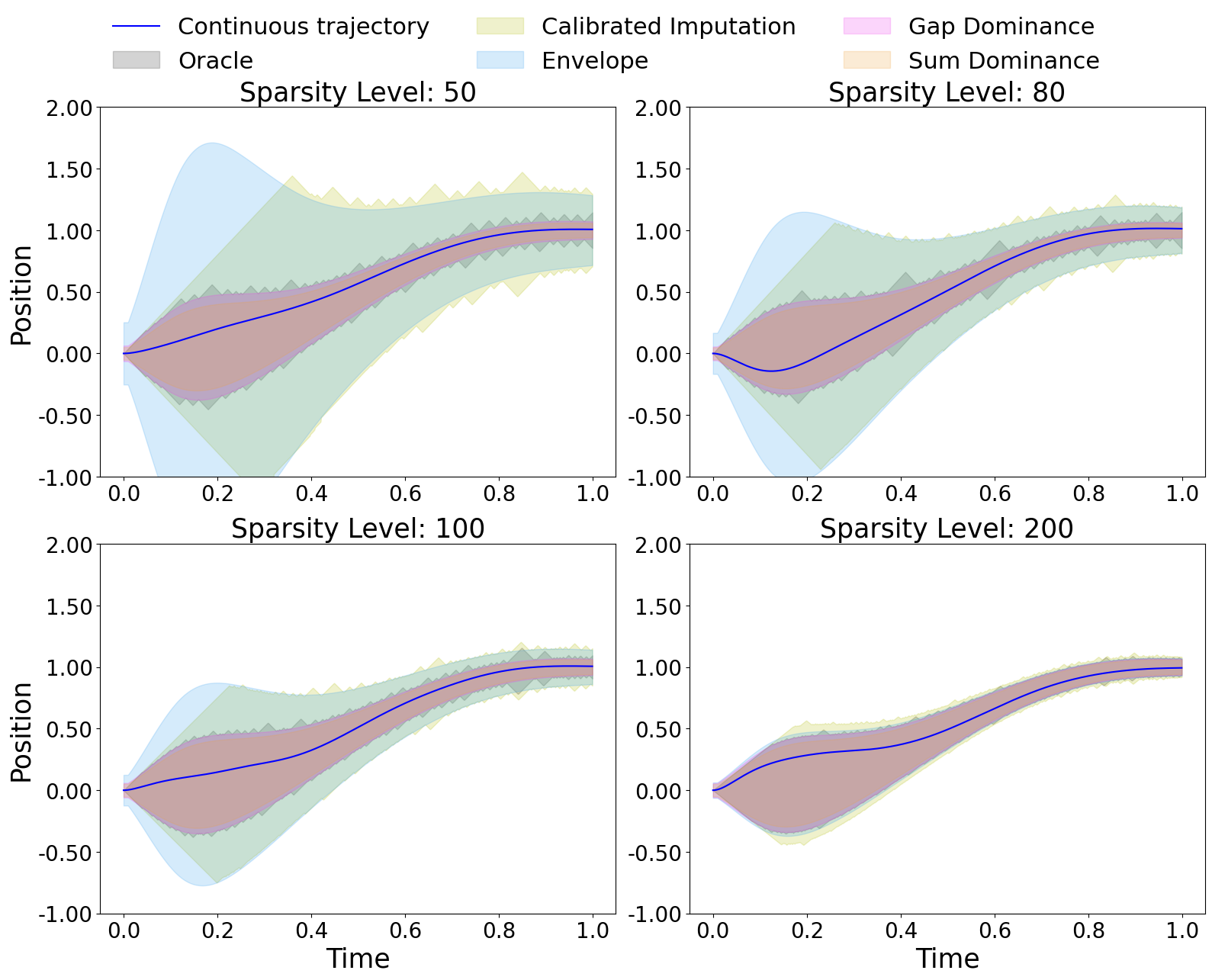}
    \caption{Prediction regions by number of uniform random sampling times.}
\label{compare2}
\end{figure}
Figure \ref{compare2} is consistent with our results from Section \ref{experiments_section}: the oracle region is non-conservative compared to the rest of the approaches, and the prediction regions get tighter as the sampling becomes more dense. In turn, we corroborate how this affects coverage in Figure \ref{dmps_compare_methods}.

\begin{figure}[H]
\centering\includegraphics[width=0.9\linewidth]{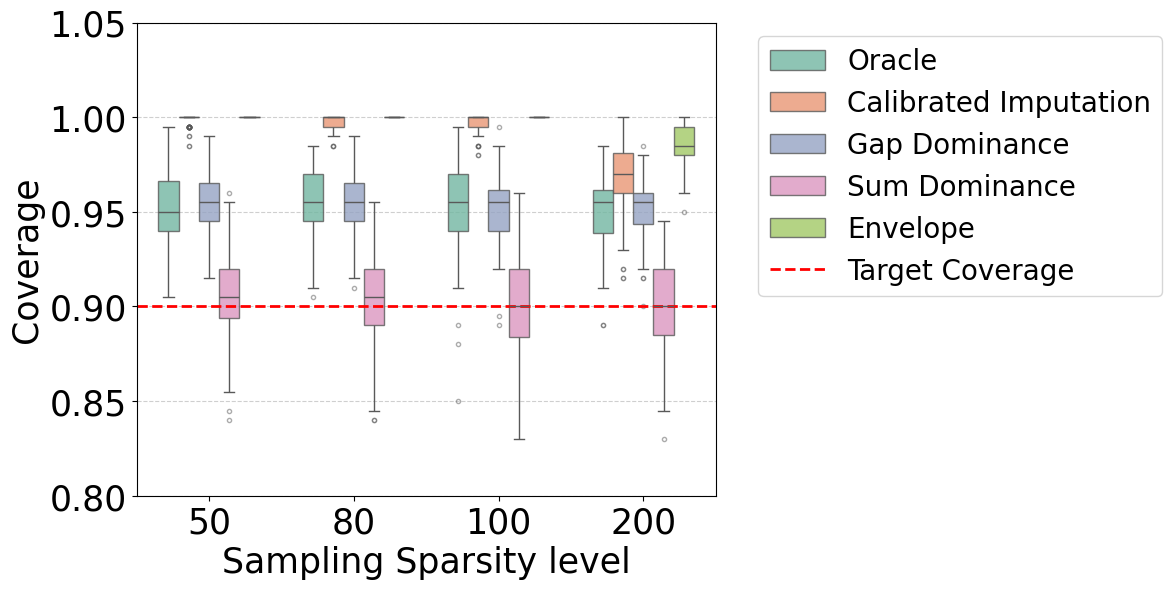}
    \caption{Coverage of methods under varying sampling sparsity levels.}
\label{dmps_compare_methods}
\end{figure}


\subsection{Rössler Attractor}\label{attractor_experiment}
Focusing on adaptive ODE solvers in a RDE setup, we consider trajectories generated by the solution to Rössler's attractor:
\begin{equation}\label{attractor_equations}
\begin{split}
    \dot{x}&=-y - z\\
    \dot{y}&=x + ay\\
    \dot{z}&=b + z(x - c),\\
\end{split}
\end{equation}
which is a chaotic system useful for modeling equilibrium in chemical reactions \cite{auer2012vericomp}. We take $a\sim U(0.2-\epsilon,0.2+\epsilon)$, $b\sim U(0.2-\epsilon, 0.2+\epsilon)$, $c\sim U(5.7-\epsilon, 5.7+\epsilon)$, with $\epsilon=0.05$. 
In this setting we have random sampling times due to the randomness in the parameters which induce changes in spike timing, curvature of the trajectory, and stiffness of the system. Hence, the solver internally adjusts step sizes differently for each trajectory. We use $\text{solve}\_\text{ivp}$ from SciPy with the BDF (backward differentiation formula) method in the time interval $[0,40]$. To obtain a point predictor, we use a weighted kernel regression using $M=500$ trajectories as training data. Concretely, let $w^{(i)}_j(t)$ denote the normalized kernel weights for a query time $t$: $w^{(i)}_j(t) = \frac{K_h\!\big(t - t^{(i)}_j\big)}{\sum_{k=1}^{M}\sum_{l=1}^{n_k} K_h\!\big(t - t^{(k)}_l\big)},$ where $K_h(u)=\exp(-\frac{u^2}{2h^2})$ with bandwidth $h=0.3$. 
We take  as our point predictor
$$\hat{x}_t = \sum_{i=1}^{M}\sum_{j=1}^{n_i} w^{(i)}_j(t)\, x^{(i)}_{t^{(i)}_j},$$
where, $x^{(i)}_{t^{(i)}_j}$ denotes the $i$-th trajectory at sampling time $t^{(i)}_j$, and $n_i$ denotes the number of sampling times for the $i$-th trajectory in the training data. We use a similar approach to estimate the heuristic uncertainty $\hat{\sigma}_t$ by computing the weighted root mean squared error, bounded below by a small constant for numerical stability:$$\hat{\sigma}_t = \max\Big\{ \Big( \sum_{i=1}^{M}\sum_{j=1}^{n_i} w^{(i)}_j(t) \big(x^{(i)}_{t^{(i)}_j} - \hat{x}_t\big)^2 \Big)^{\!1/2}, 10^{-3} \Big\}.$$

In Fig. \ref{regions_vericomp} we present prediction regions via gap and via sum dominance on the nonconformity scores for a test trajectory at different downsampling scales, corresponding to the $y$ solution from Rössler's attractor (\ref{attractor_equations}). For a trajectory with full grid times $\boldsymbol{T}_{\mathrm{full}} = \{t_1, t_2, \dots, t_K\}$, a downsampled grid at scale $k \in \mathbb{N}$ is formed by retaining every $k$-th observation: $\boldsymbol{T}(k):= \left\{ t_{1 + j \cdot k} \;\middle\vert{}\; j = 0, 1, \dots, \left\lfloor \frac{K-1}{k} \right\rfloor \right\}.$  We consider four distinct downsampling regimes characterized by the downsampling step $k \in \{15, 10, 5, 4\}$:
\begin{itemize}
    \item \texttt{Sparse+} ($k = 15$): Highly degraded observations retaining approximately $6.7\%$ of the original solver grid points.
    \item \texttt{Sparse} ($k = 10$): Coarse observation grids retaining $10\%$ of the full sampling density.
    \item \texttt{Dense} ($k = 5$): Moderately fine observation grids retaining $20\%$ of the solver steps.
    \item \texttt{Dense+} ($k = 4$): High-density subsampling retaining $25\%$ of the full adaptive solver grid.
\end{itemize}
We can see that the most sparse, \texttt{Sparse+},  produces more conservative prediction regions than the most dense downsampling, $\texttt{Dense+}$. Consequently, this is reflected in the coverage as shown in  Fig. \ref{coverages_vericomp_dominance}. Intuitively, this occurs because $\texttt{Dense+}$ subsampling keeps more sampling times than $\texttt{Sparse+}$, making the underlying observable gap closer to the hidden gap. This is consistent with the empirical CDFs in Fig. \ref{gaps_cont}.

\begin{figure}[H]
    \centering
    \includegraphics[width=0.9\linewidth]{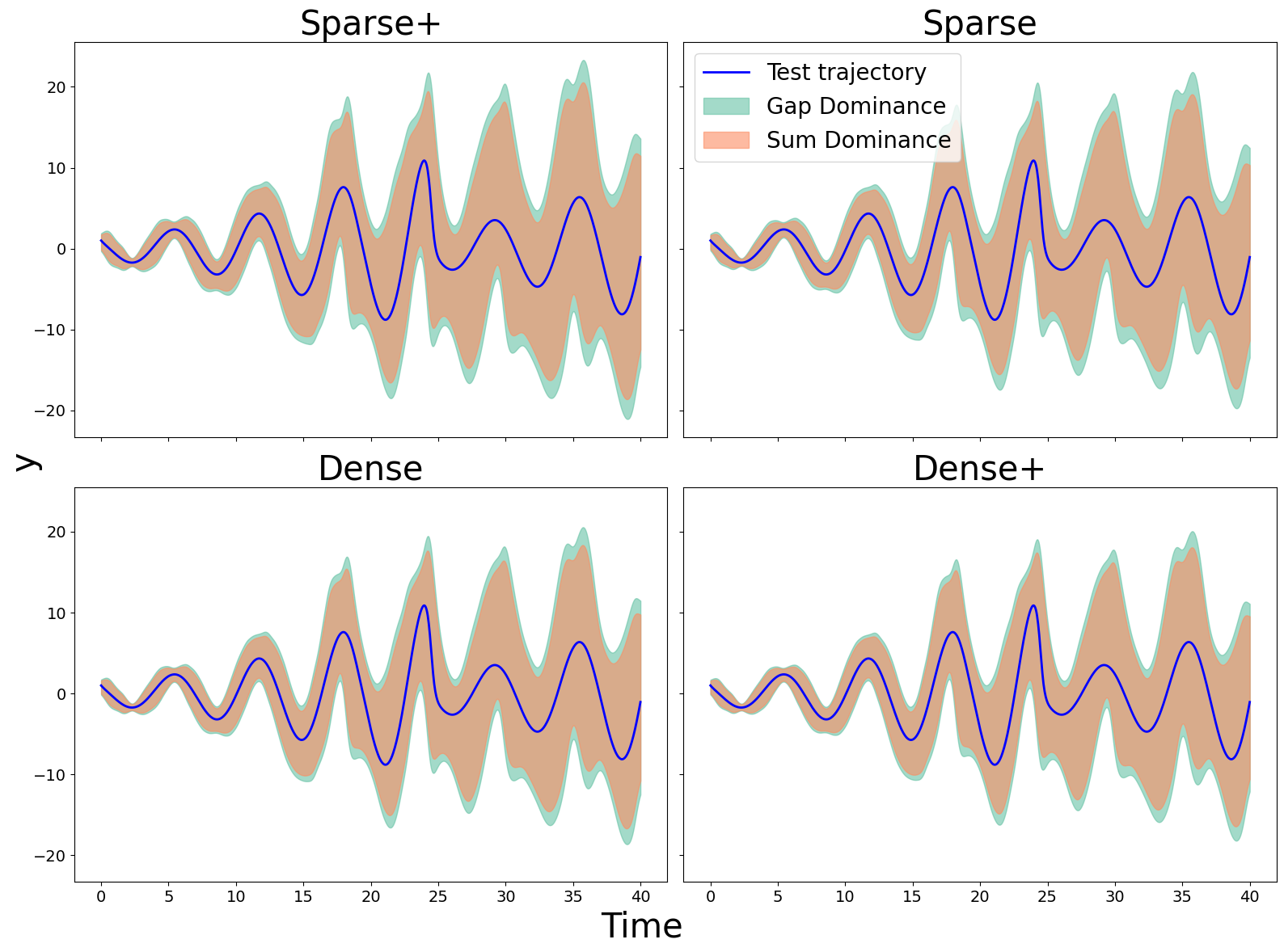}
    \caption{Prediction regions for a test trajectory at different downsampling scales.}
    \label{regions_vericomp}
\end{figure}

We generate 100 observations of coverage based on calibration datasets consisting of $200$ trajectories, each coverage is  obtained using 200 test trajectories. With this setup we obtain Fig. \ref{coverages_vericomp_dominance}.

\begin{figure}[H]
\centering \includegraphics[width=0.6\linewidth]{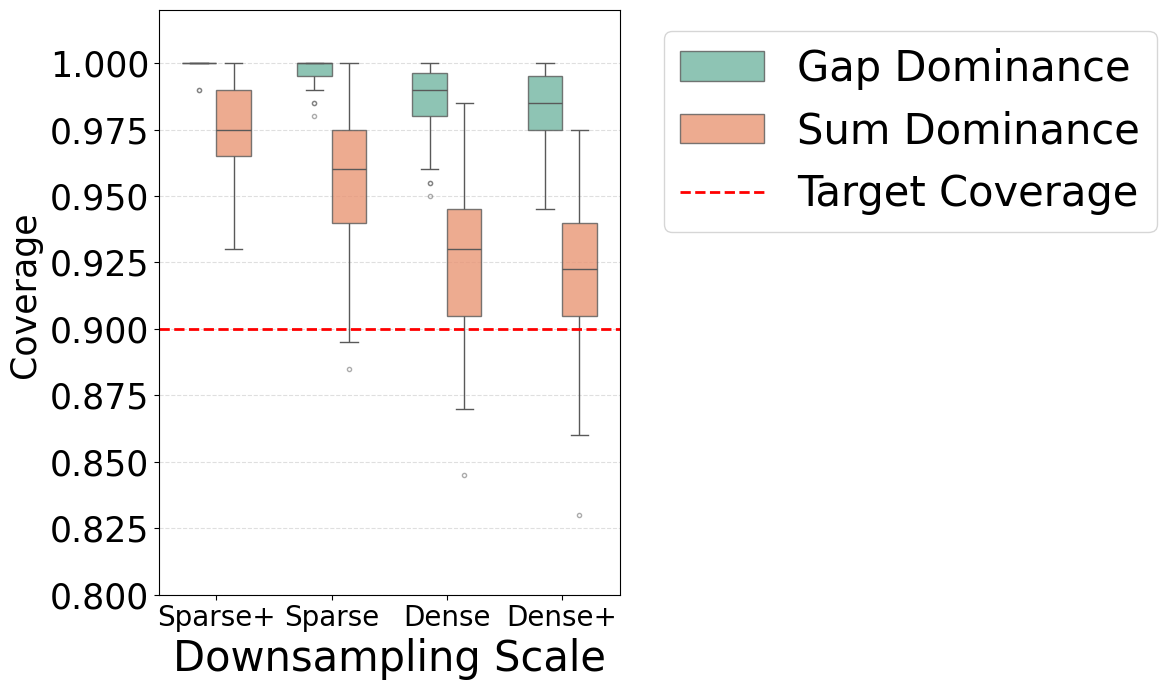}
\caption{Coverage of dominance frameworks at different downsampling regimes.}\label{coverages_vericomp_dominance}
\end{figure}
We conclude with an analysis of the validation of the gap and sum dominance assumptions on the nonconformity scores. We generate 10 independent datasets of 200 trajectories to build the CDFs following the procedure of Appendix \ref{appendix_empirical_validation}; shaded regions represent 2 standard deviations from the mean empirical CDF.

Overall our empirical verification method based on surrogates provides faithful guidance of the oracle setting with access to samples of $S^{\mathrm{cont}}$, when we compare Fig. \ref{gaps_cont} to Fig. \ref{gaps_surrogate}, and Fig. \ref{sum_cont} relative to Fig. \ref{sum_surrogate}. Furthermore, here we also validate the tradeoff where we get less conservative prediction regions (reflected in the coverage and in the prediction regions), compromising the satisfaction of the dominance assumption.

\begin{figure}[H]
\centering \includegraphics[width=0.6\linewidth]{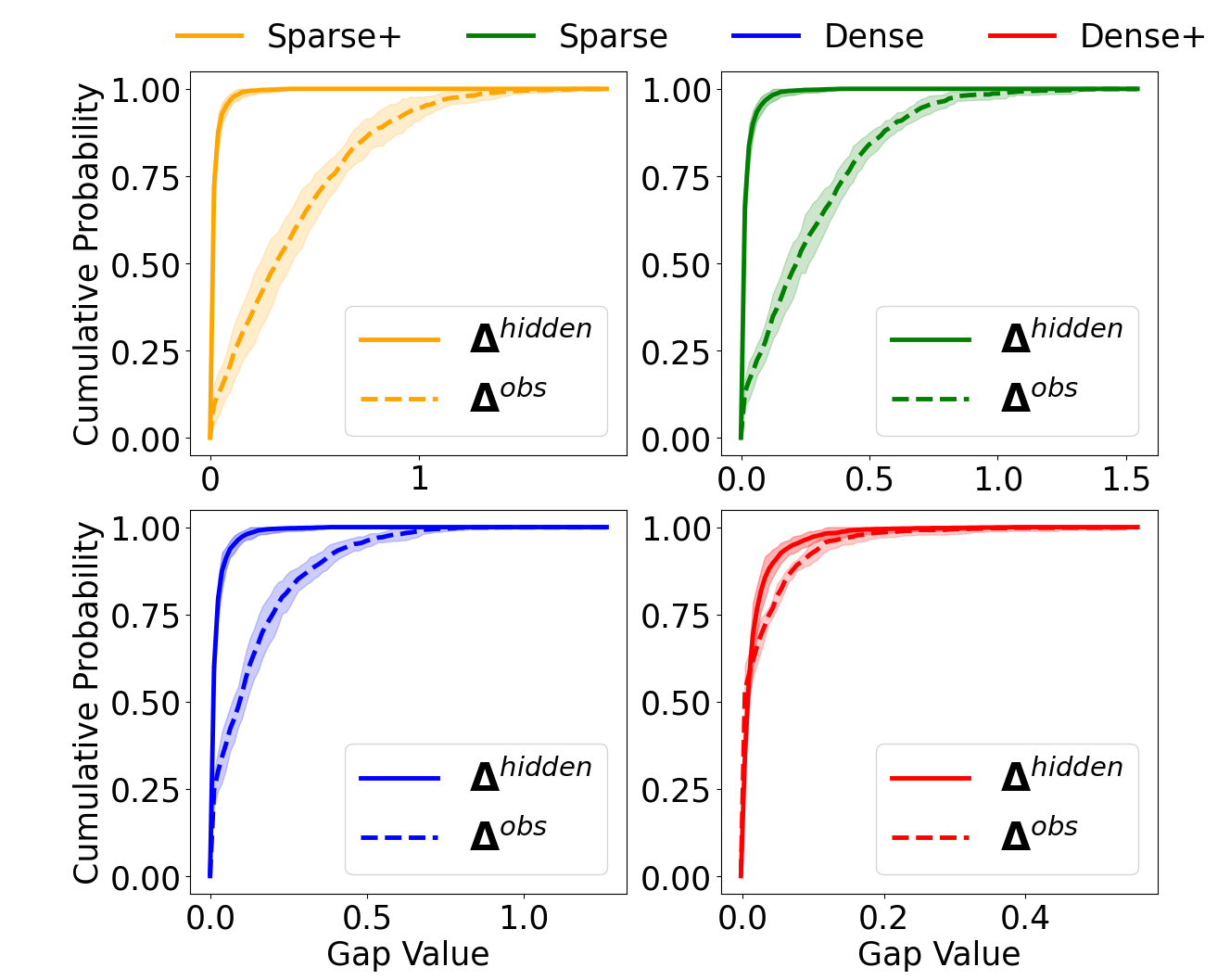}
\caption{Gap dominance with  continuous trajectories.}\label{gaps_cont}
\end{figure}

\begin{figure}[H]
\centering \includegraphics[width=0.6\linewidth]{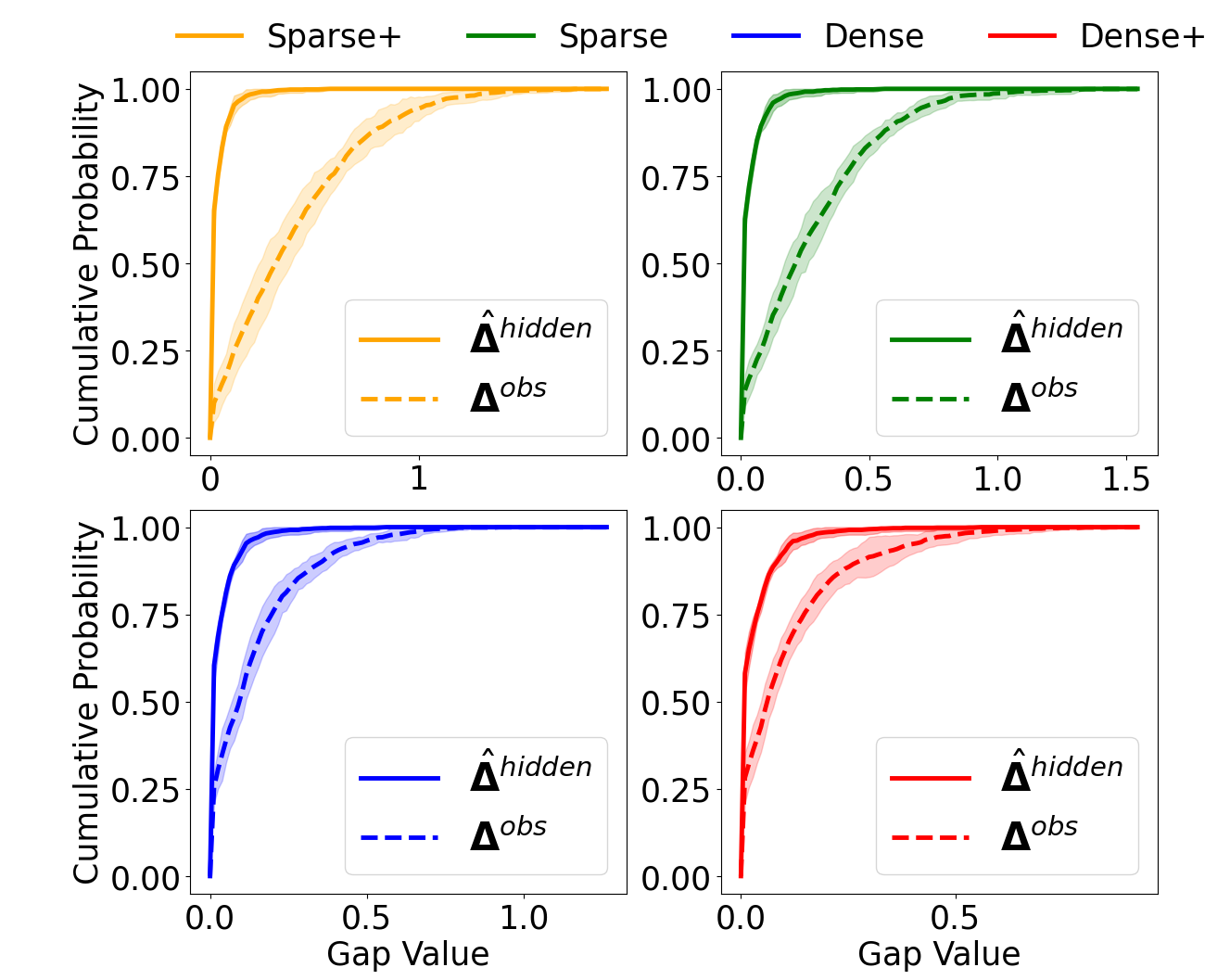}
\caption{Gap dominance using $\boldsymbol{\hat{\Delta}^{\mathrm{hidden}}}$  surrogates.}\label{gaps_surrogate}
\end{figure}

\begin{figure}[H]
\centering \includegraphics[width=0.6\linewidth]{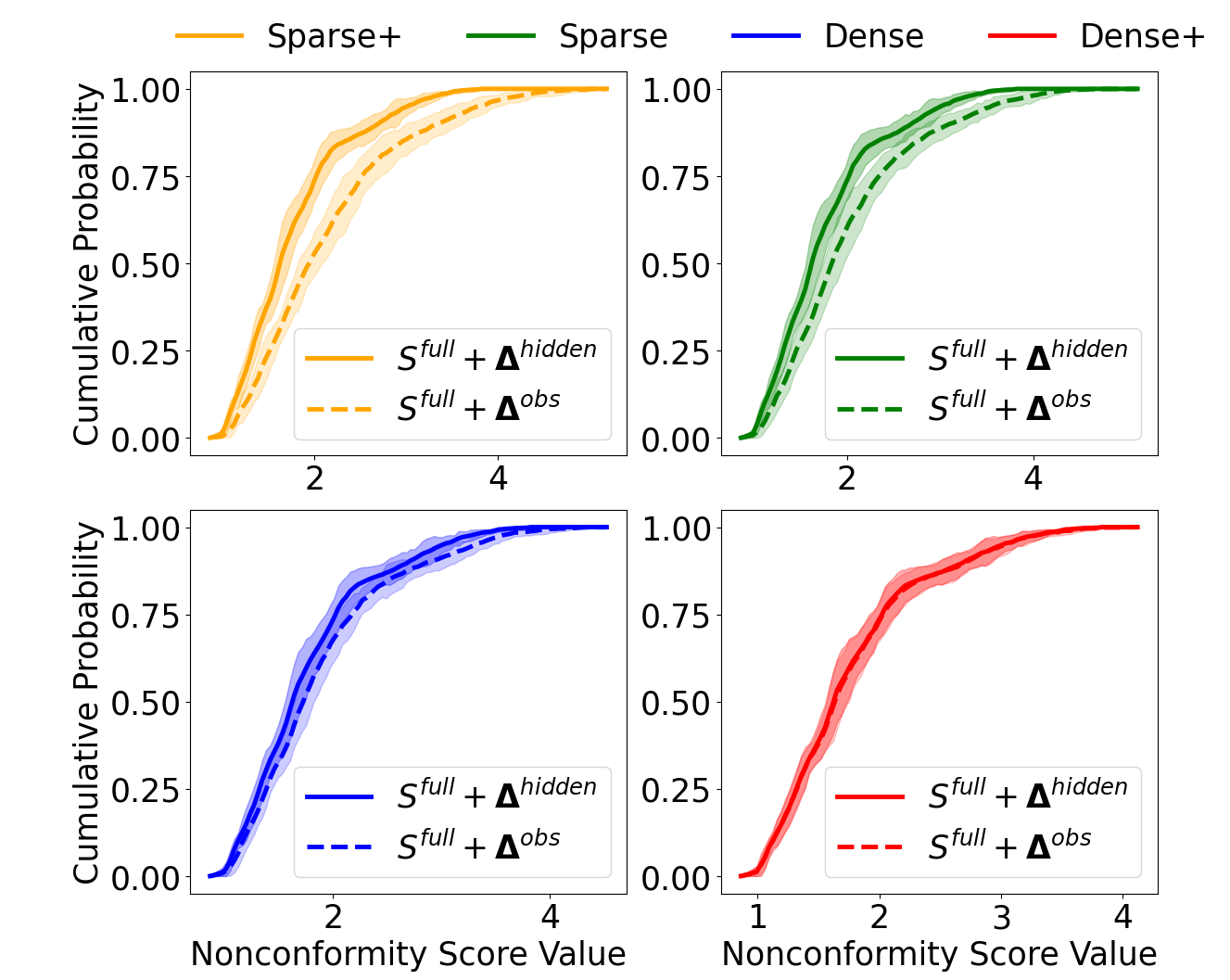}
\caption{Sum dominance with  continuous trajectories}\label{sum_cont}
\end{figure}

\begin{figure}[H]
\centering \includegraphics[width=0.6\linewidth]{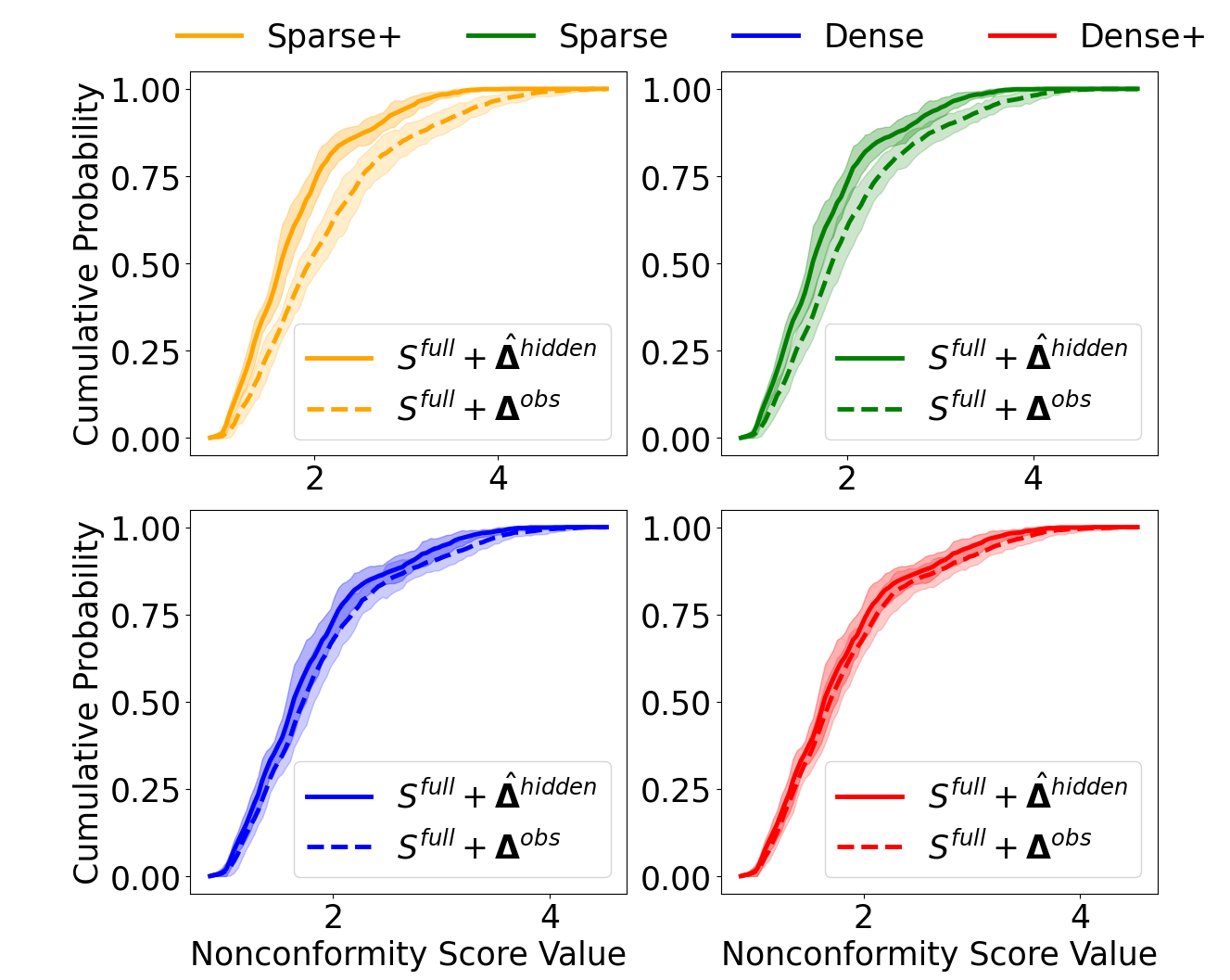}
\caption{Sum dominance using $\boldsymbol{\hat{\Delta}^{\mathrm{hidden}}}$ surrogates.}\label{sum_surrogate}
\end{figure}

\subsection{Lorenz 96}\label{lorenz_experiment}

We employ the 5-dimensional Lorenz-96 chaotic system, which serves as a canonical experiment case study in the literature \cite{jiang2022embed, sun2022bayesian}. The governing state equations for $x = [x_1, \dots, x_d]^T \in \mathbb{R}^d$ (we take $d=5$) are defined as:

$$\begin{aligned} \frac{\mathrm{d}x_1}{\mathrm{d}t} &= (x_2 - x_4)x_5 - x_1 + F \\ \frac{\mathrm{d}x_2}{\mathrm{d}t} &= (x_3 - x_5)x_1 - x_2 + F \\ \frac{\mathrm{d}x_3}{\mathrm{d}t} &= (x_4 - x_1)x_2 - x_3 + F \\ \frac{\mathrm{d}x_4}{\mathrm{d}t} &= (x_5 - x_2)x_3 - x_4 + F \\ \frac{\mathrm{d}x_5}{\mathrm{d}t} &= (x_1 - x_3)x_4 - x_5 + F \end{aligned}$$

To induce randomness across realizations, we make the initial condition and the forcing term random: $F= F_{\text{base}} + \xi$, where $F_{\text{base}} = 8.0$ and $\xi \sim U(-0.5, 0.5)$. For the initial conditions we independently generate $\boldsymbol{x_0} = F_{\text{base}} \mathbf{1}_d + \boldsymbol{\xi_0} + 0.1 \mathbf{e_1}$, where $\boldsymbol{\xi}_0 \sim U([-0.5, 0.5]^d)$. We use $\texttt{solve}\_\texttt{ivp}$ for $t \in [0, T^*] = [0, 20]$, and we use the $\texttt{RK45}$ method.
In this experiment, we study the effect of different subsampling regimes on the stochastic dominance assumptions in higher dimensions under 2 different sampling approaches.

\subsubsection{Fixed High-Frequency Sampling Times and Random Sparse Sampling Times}
In our first ablations we take $\boldsymbol{T}$ to be given by adaptive sampling and  $\boldsymbol{T}_{\mathrm{full}}$ by fixed prespecified sampling times. We take 
$\boldsymbol{T}_{\mathrm{full}}$ to be a \emph{linspace} of $[0,20]$ under varying number of sampling times $N\in \{500,600,800\}$, whereas we let the sparse sampling times $\boldsymbol{T}$ be given by the adaptive sampling times determined by the ODE solver. In Fig. \ref{adaptive_sampling_times1} we show a histogram of the number of sampling times obtained with the adaptive solver, showing that the number of sampling times is random and has less sampling times than our convention of a high-frequency sampling $\boldsymbol{T}_{\mathrm{full}}$.

\begin{figure}[H]
    \centering
\includegraphics[width=0.4\linewidth]{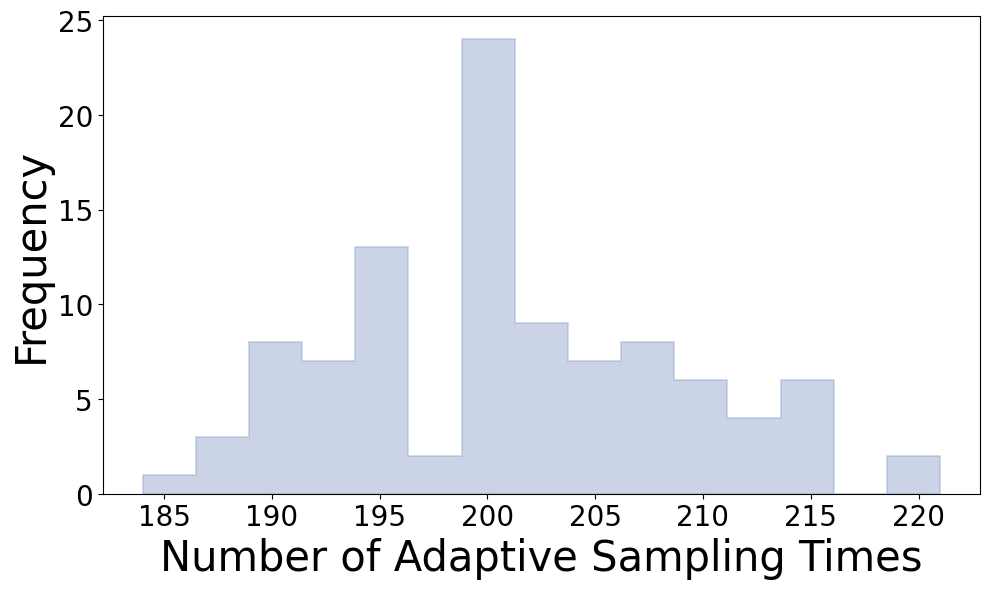}
    \caption{Histogram of 100 independent sampling times obtained with $\texttt{solve}\_\texttt{ivp}$  using the $\texttt{RK45}$ method.}
    \label{adaptive_sampling_times1}
\end{figure}

In Fig. \ref{cdfs_validation1} we present the validation of both Assumption \ref{stochastic_dominance_assumption} and Assumption \ref{sum_stochastic_dominance_assumption} for different full sampling regimes, where we generated 10 independent realizations of datasets with $200$ i.i.d. trajectories, plotting the average and 2 standard deviation as uncertainty band. Here we verify that when the number of sampling times is not enough our assumptions are violated, but this is remediated with higher sampling frequencies. 
\begin{figure}[H]
    \centering
    \begin{subfigure}[b]{0.7\textwidth}
\includegraphics[width=\textwidth]{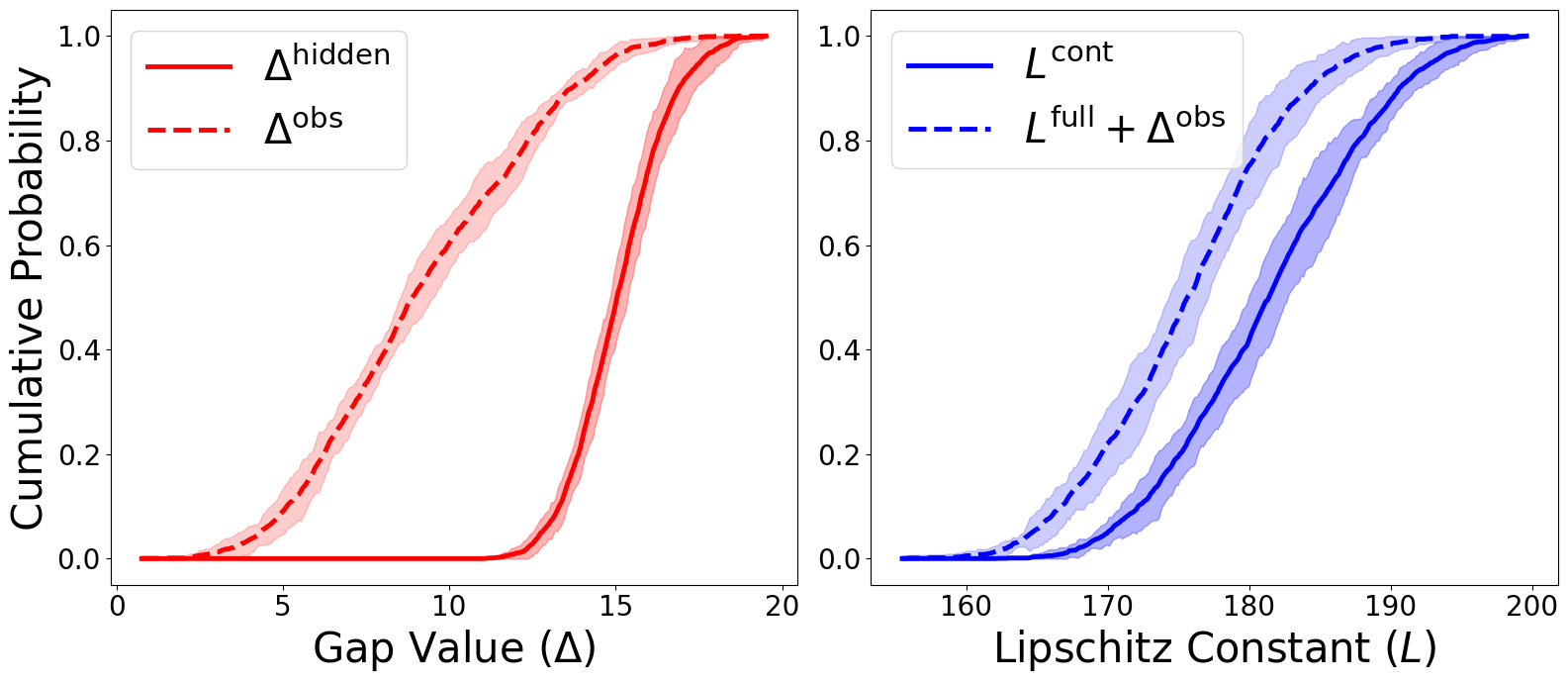}
        \caption{$N=500$.}
    \end{subfigure}
    \vfill
    \begin{subfigure}[b]{0.7\textwidth}
\includegraphics[width=\textwidth]{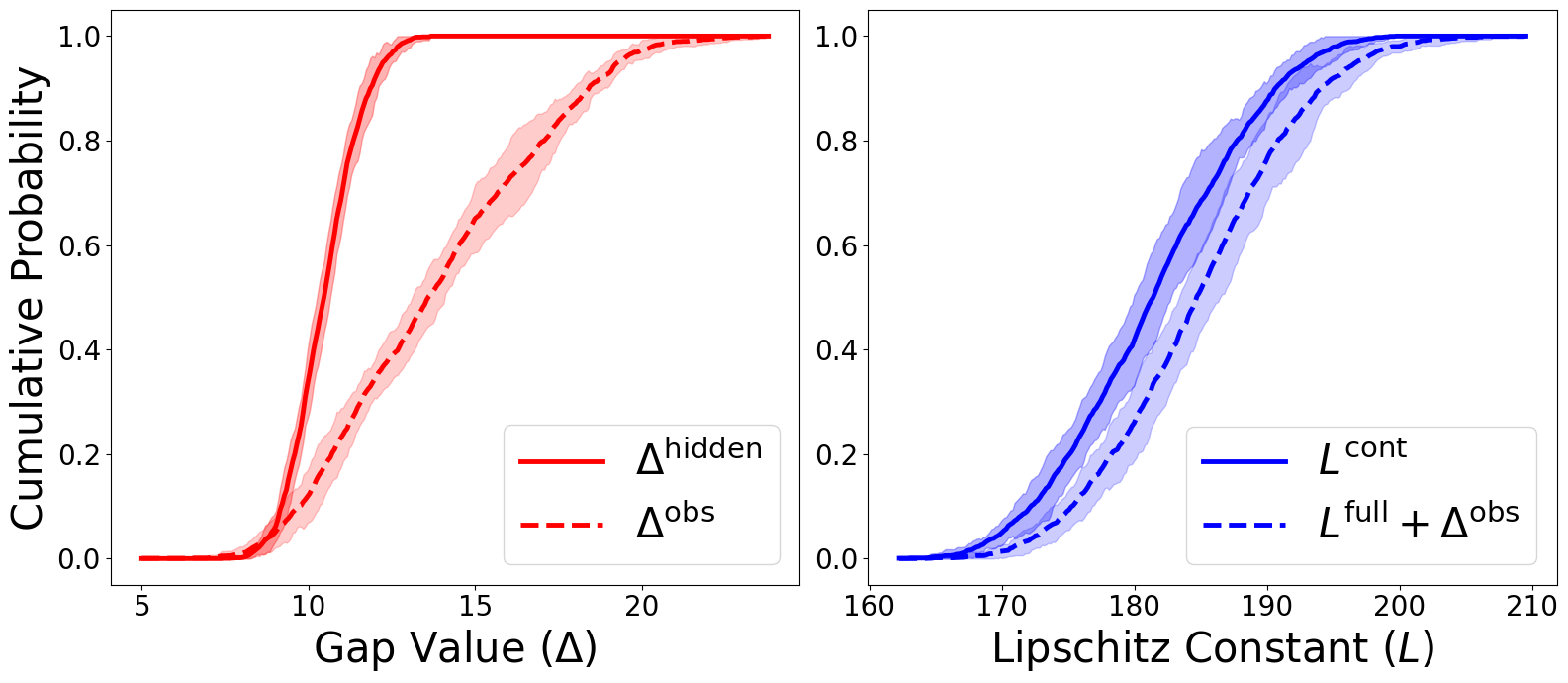}
        \caption{$N=600$.}
    \end{subfigure}
    \vfill
    \begin{subfigure}[b]{0.7\textwidth}
\includegraphics[width=\textwidth]{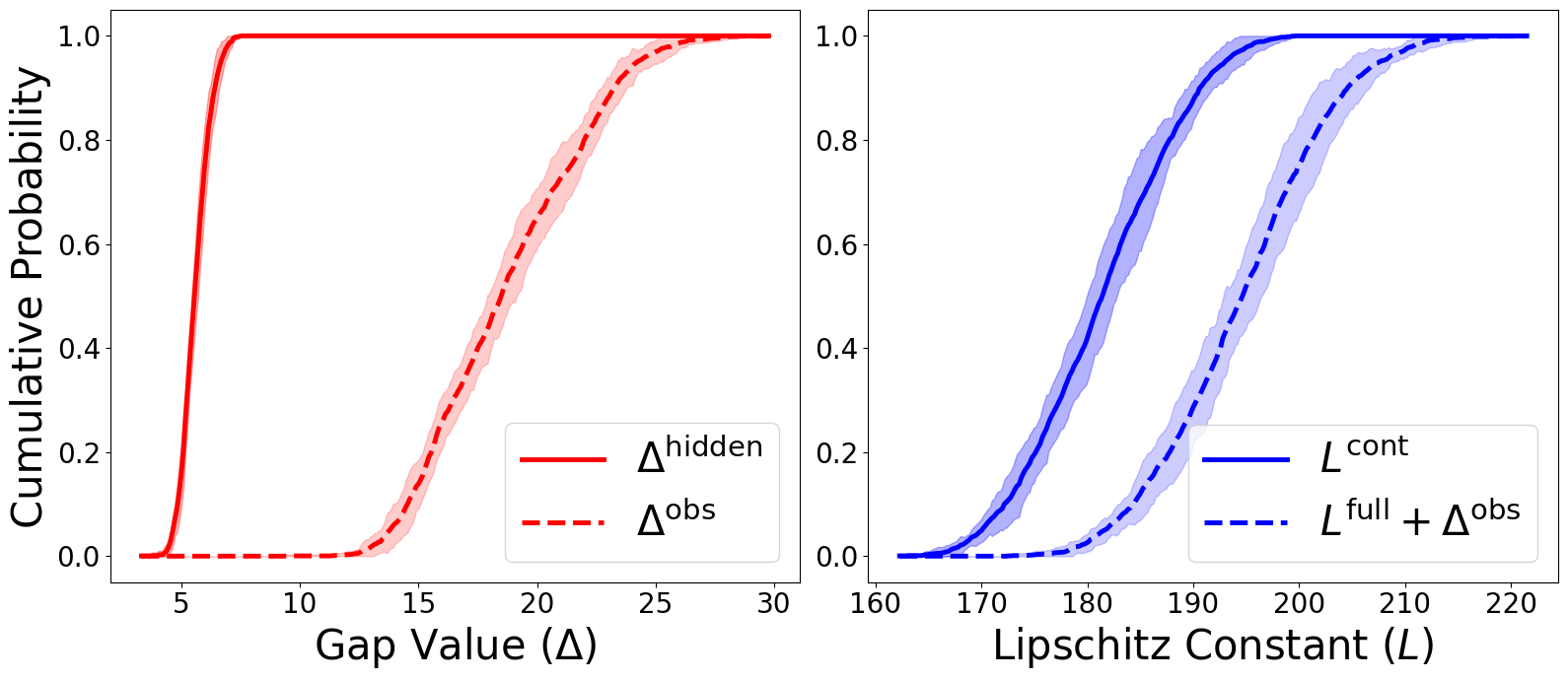}
        \caption{$N=800$.}
    \end{subfigure}
    \caption{Empirical CDFs to validate Assumption \ref{stochastic_dominance_assumption} (red), and Assumption \ref{sum_stochastic_dominance_assumption} (blue).}
    \label{cdfs_validation1}
\end{figure}
In Fig. \ref{coverages_quantiles_regime1} we evaluate the impact of each regime on the calibration quantiles and the resulting coverage. Although the violation of the sum stochastic dominance assumption directly translates into below average target coverage, this is not necessarily true for the gap stochastic dominance assumption. When using $N=500$ samples, despite the fact that Assumption \ref{stochastic_dominance_assumption} is not satisfied, we can see that our conformal approach achieves the desired target coverage. This is due to the fact that the gap approach is more conservative: its coverage guarantee is achieved by a union bound  and a split in the error budget. Hence, even if the hidden gap is stochastically larger than the observable gap, the difference can be compensated with the margin by which $q_{\mathrm{full}}$ exceeds $L^{\mathrm{full}}_{m+1}$, which is typically large due to the fact that $q_{\mathrm{full}}$  is a large quantile by the split in the error budget, for which we took $\alpha=0.1$ and computed $q_{\mathrm{full}}$ taking a $1-\alpha/2$ conformal quantile. Intuitively, we can think of two margins that cancel each other out, in such a way that we approximately recover our target coverage despite violating Assumption \ref{stochastic_dominance_assumption}.

\begin{figure}[H]
    \centering
    \begin{subfigure}[b]{0.49\textwidth}
\includegraphics[width=\textwidth]{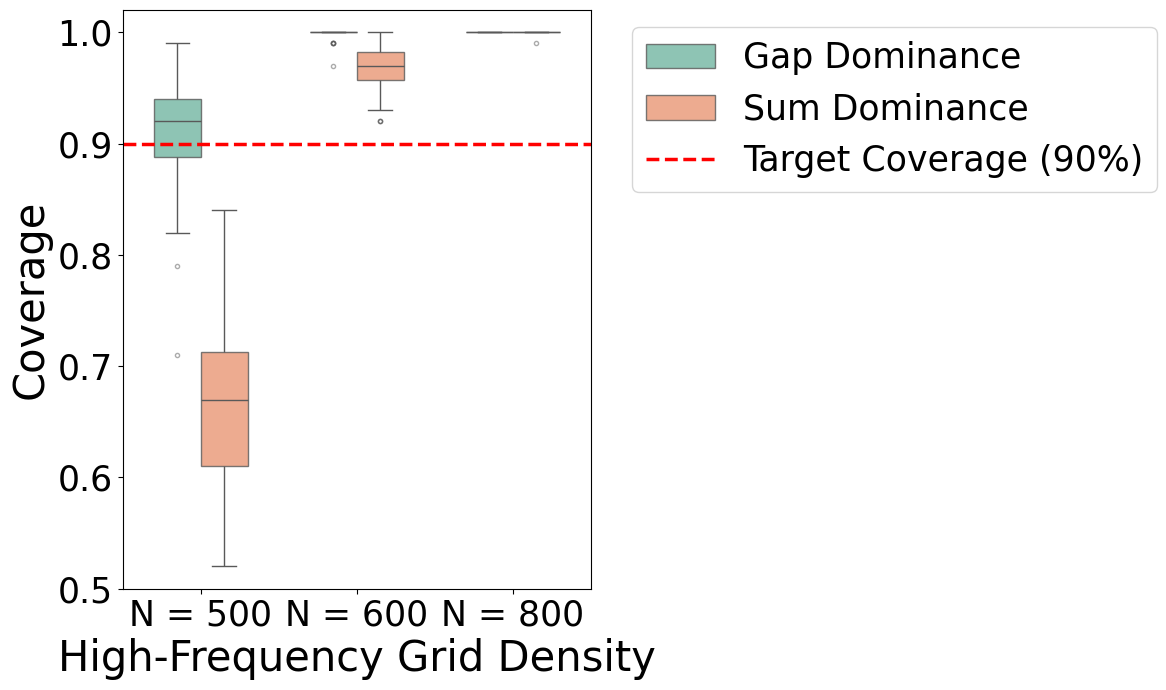}\caption{Coverage at different sampling frequencies.}
    \end{subfigure}
    \hfill
    \begin{subfigure}[b]{0.49\textwidth}
\includegraphics[width=\textwidth]{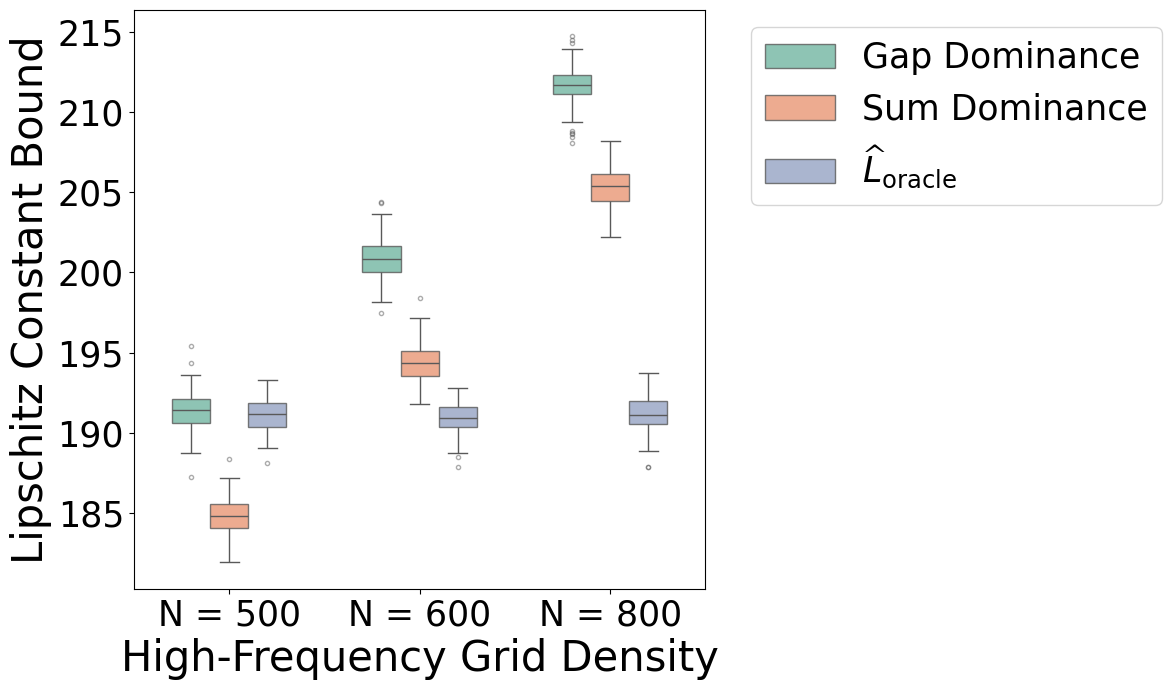}
        \caption{Predicted Lipschitz upper bounds.}
    \end{subfigure}
    \caption{Experiment results using 100 calibration datasets of 200 trajectories each.}\label{coverages_quantiles_regime1}
\end{figure}

\subsubsection{Downsampling Adaptive Sampling Times}

In this setting, we take $\boldsymbol{T}_{\mathrm{full}}$ to be given by the adaptive sampling times provided by the solver. We create four distinct observable subsampled regimes $\boldsymbol{T} \subset \boldsymbol{T}_{\mathrm{full}}$.

Our subsampling consists of uniformly discarding some percentage of the sampling times under which we evaluate our bounds, thus inducing varying levels of sampling sparsity. By specifying exact retention fractions $\eta \in (0, 1)$, for a given set of sampling times $\boldsymbol{T}_{\mathrm{full}}$, we take $ N_{\text{keep}}:=\lceil\eta N_{\text{full}}\rceil, N_{\text{full}}:=|\boldsymbol{T}_{\mathrm{full}}|$ and obtain subsampled times:

$$\boldsymbol{T}_{\eta}= \left\{ t_{\lfloor i \cdot \rho \rfloor} \in \boldsymbol{T}_{\mathrm{full}} \;\middle\vert{}\; i = 0, \dots, N_{\text{keep}}-1 \right\}, \text{ where } \rho := \frac{N_{\text{full}} - 1}{N_{\text{keep}} - 1}.$$
We take the following labeling for each downsampling regime:
$\texttt{Dense}+$ ($\eta = 0.90$), which  drops $10\%$ of points uniformly across time, $\texttt{Dense}$ ($\eta = 0.80$), $\texttt{Sparse}$ ($\eta = 0.70$), and  $\texttt{Sparse}+$ ($\eta = 0.60$), dropping $40\%$ of the sampling times from $P_{\boldsymbol{T}_{\mathrm{full}}}$. In Fig. \ref{adaptive_sampling_times2} we include histograms of the number of collected sampling times for  $2,000$ trajectories under these subsamplig regimes.

\begin{figure}[H]
    \centering
\includegraphics[width=0.5\linewidth]{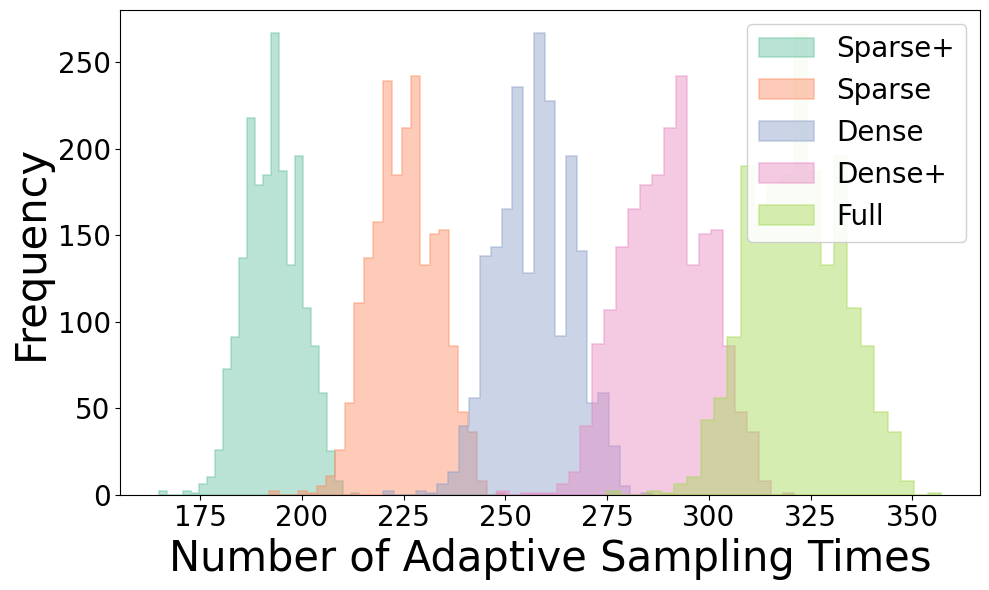}
    \caption{Number of collected sampling times at different induced sparsity levels.}
\label{adaptive_sampling_times2}
\end{figure}

In Fig. \ref{gaps_oracle2} we include validation of the gap stochastic dominance assumption using samples from oracle continuous trajectories. In Fig. \ref{gaps_surrogate2} we use our surrogate approach following the empirical approach from Appendix \ref{appendix_empirical_validation}. Here we observe deviations between our empirical approach and the oracle empirical CDFs, yet we see that the empirical validation remains faithful for capturing stochastic dominance assumption violations for the $\texttt{Dense}+$ regime.

\begin{figure}[H]
\centering \includegraphics[width=0.6\linewidth]{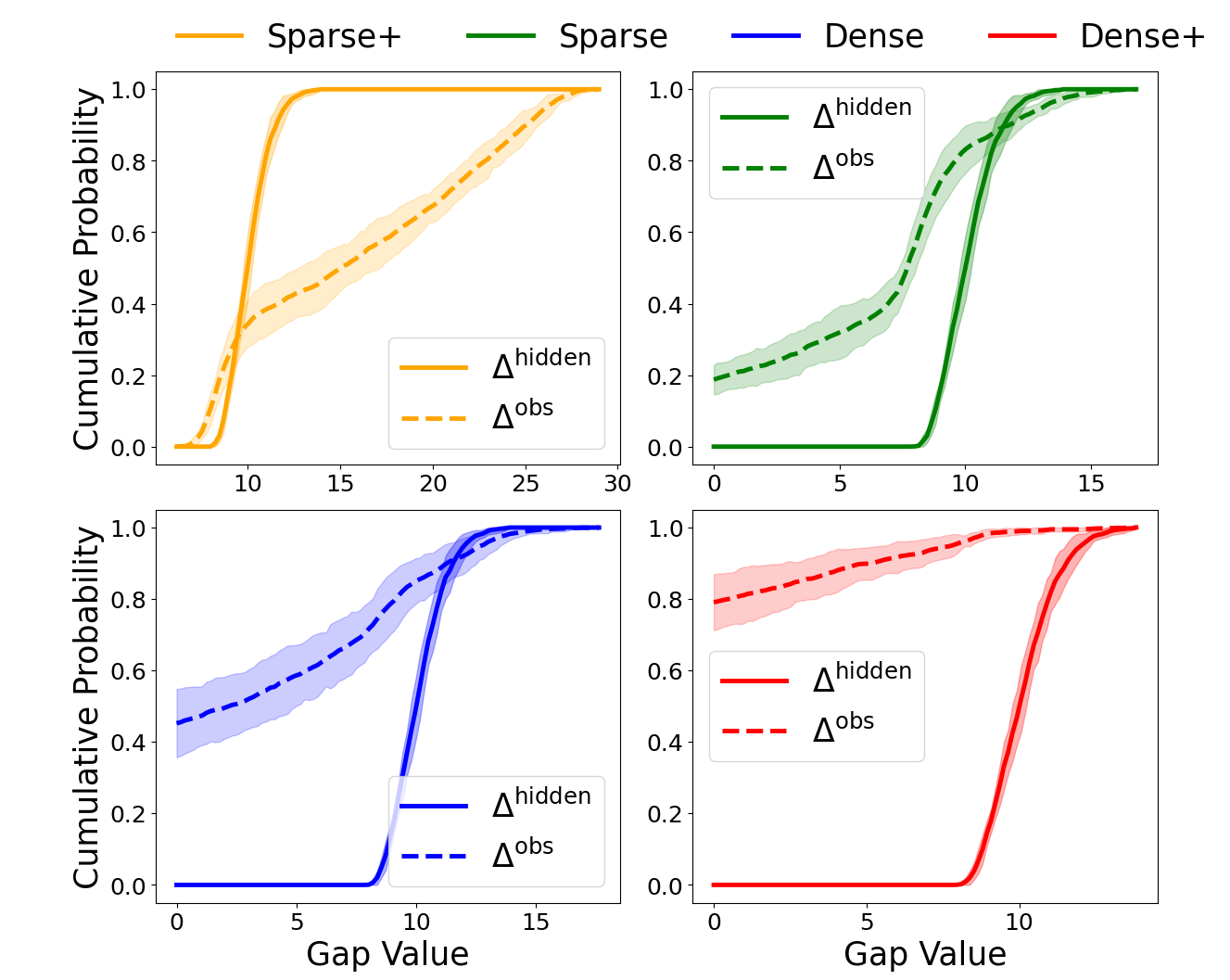}
\caption{Gap dominance verification using  continuous trajectories.}\label{gaps_oracle2}
\end{figure}

\begin{figure}[H]
\centering \includegraphics[width=0.6\linewidth]{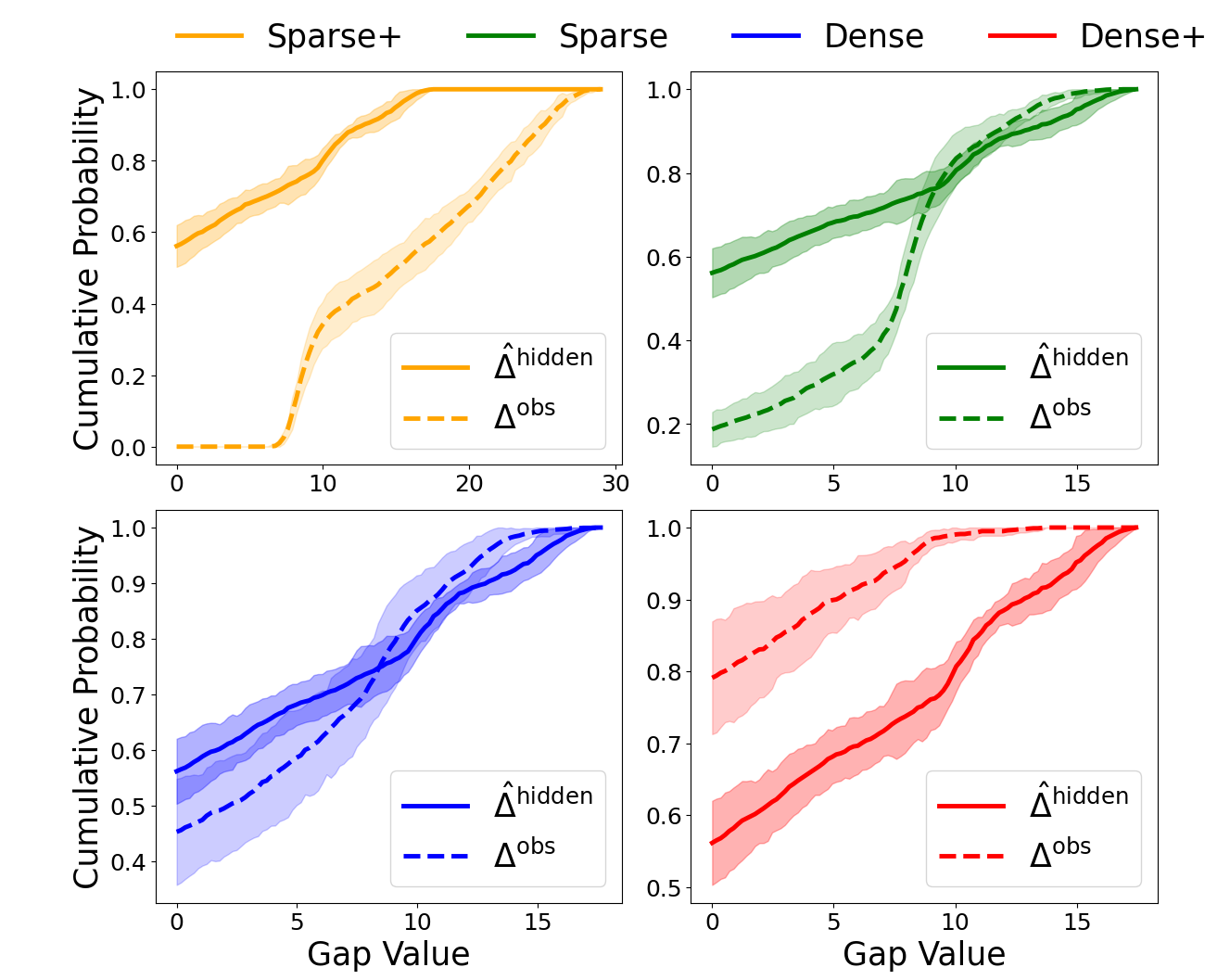}
\caption{Gap dominance verification using surrogate approach  ${\hat{\Delta}^{\mathrm{hidden}}}$.}\label{gaps_surrogate2}
\end{figure}

We replicate this procedure to verify Assumption \ref{sum_stochastic_dominance_assumption} in Fig. \ref{sum_oracle2} and Fig. \ref{sum_surrogate2}.  Although the empirical validation approach captures the effect of subsampling, we observe deviations between the curves, providing slightly optimistic bounds that favor the satisfaction of stochastic dominance where the corresponding oracle plots show that it is not satisfied.

\begin{figure}[H]
\centering \includegraphics[width=0.6\linewidth]{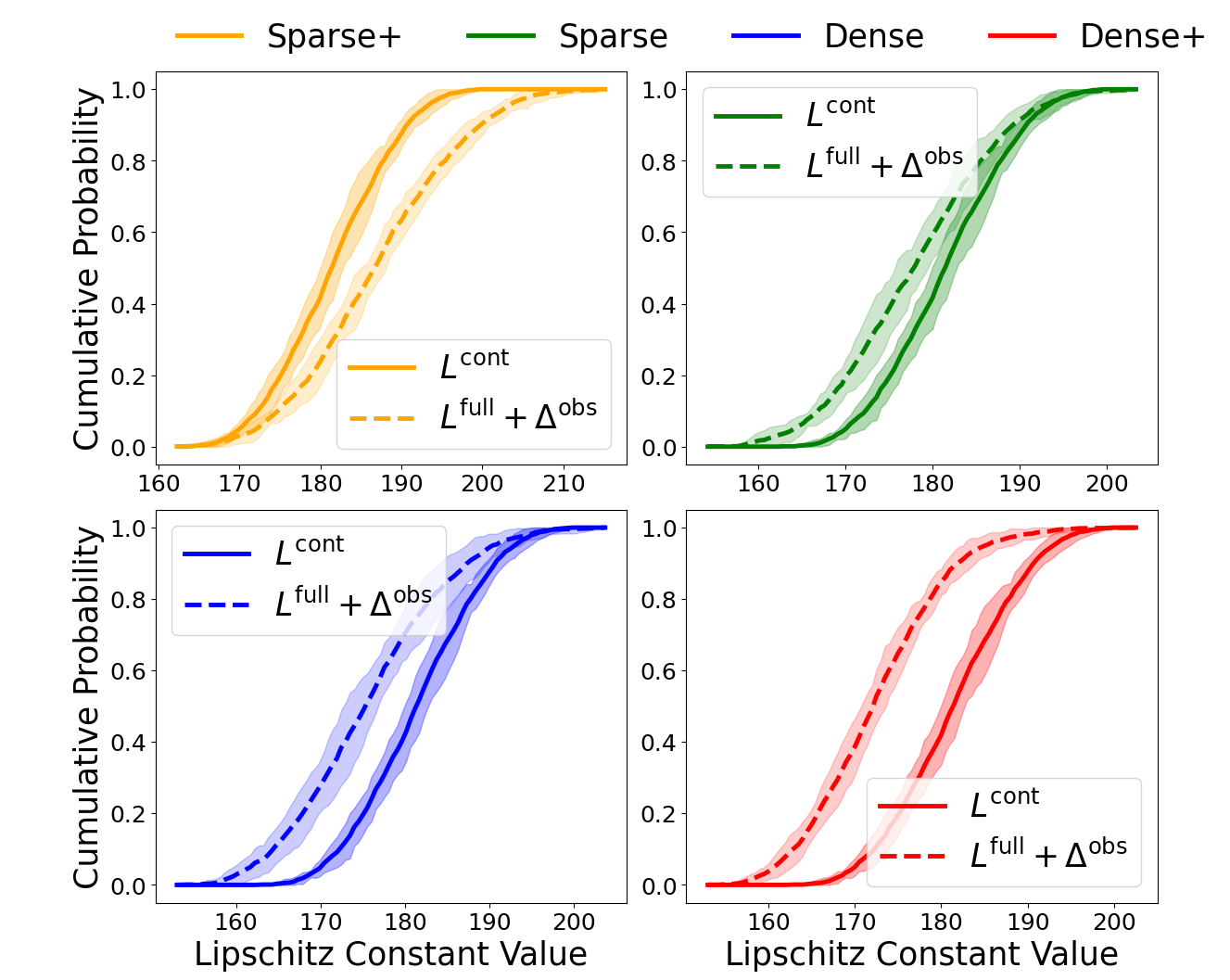}
\caption{Sum dominance verification with access to continuous trajectories.}\label{sum_oracle2}
\end{figure}

\begin{figure}[H]
\centering \includegraphics[width=0.6\linewidth]{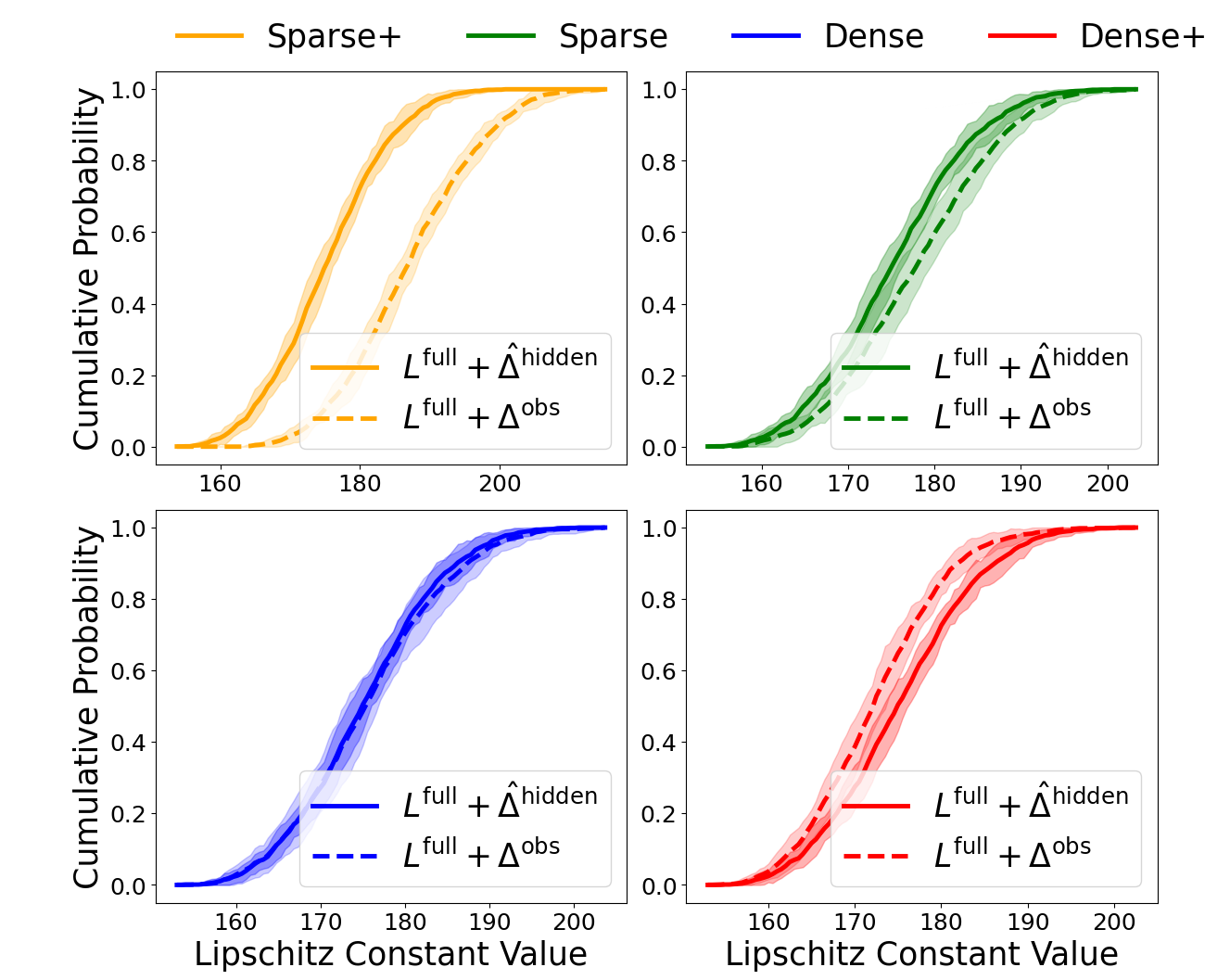}
\caption{Sum dominance using hidden gap surrogates ${\hat{\Delta}^{\mathrm{hidden}}}$.}\label{sum_surrogate2}
\end{figure}

In Fig. \ref{coverages_quantiles_regime2} we observe the same situation as in the previous sampling regime: despite the fact that  Assumption \ref{stochastic_dominance_assumption}
is violated, we still get reliable coverage for the estimation of a valid Lipschtitz constant, even in the extreme case of $\texttt{Dense}+$.

\begin{figure}[H]
    \centering
    \begin{subfigure}[b]{0.49\textwidth}
\includegraphics[width=\textwidth]{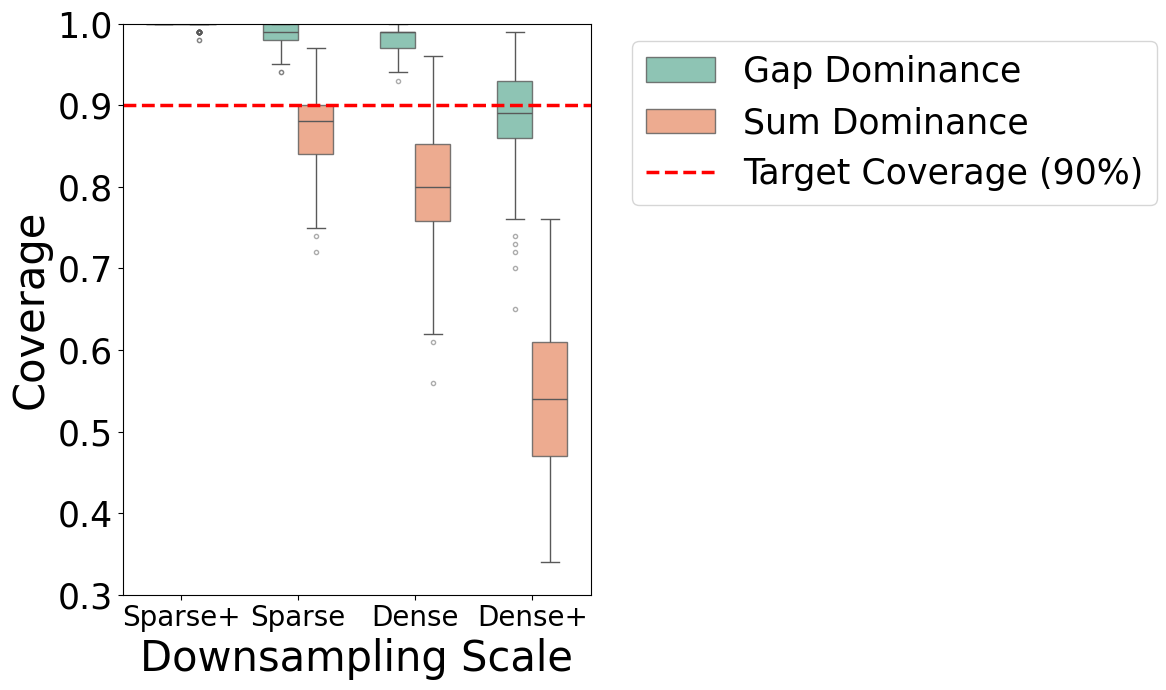}
        \caption{Coverage at different subsampling regimes.}
    \end{subfigure}
    \hfill
    \begin{subfigure}[b]{0.49\textwidth}
\includegraphics[width=\textwidth]{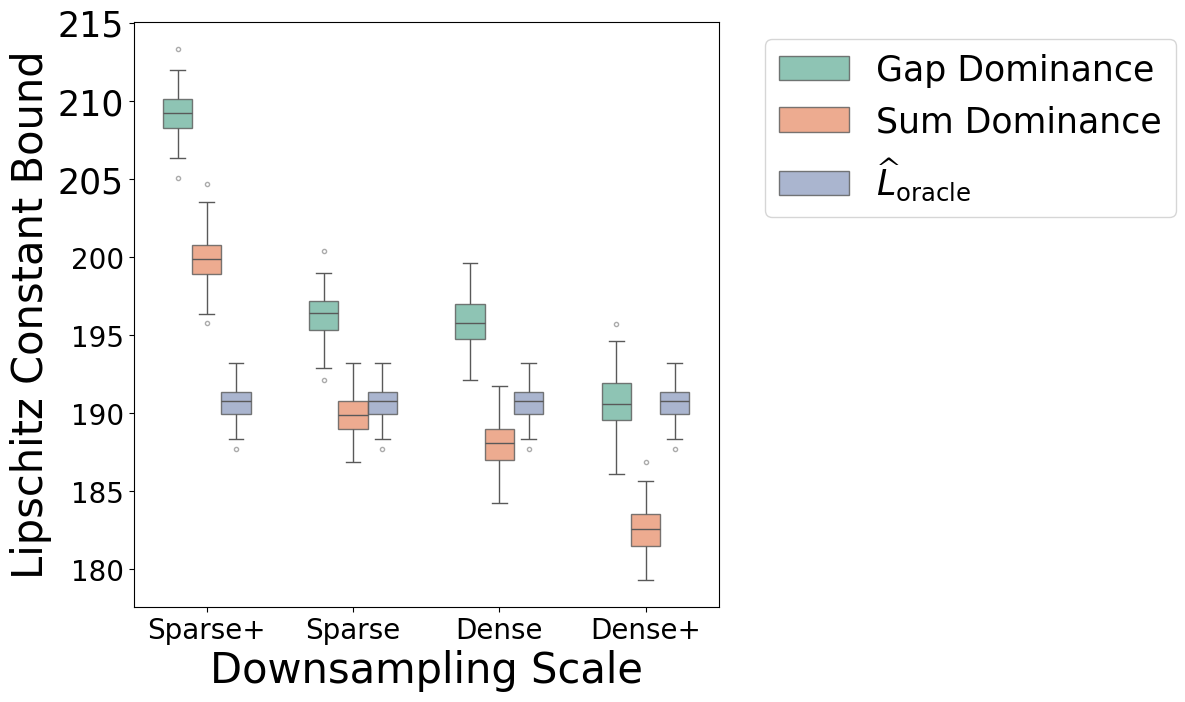}
        \caption{Predicted  Lipschitz upper bounds.}
    \end{subfigure}
    \caption{Experiment results using 100 calibration datasets of 200 trajectories each.}\label{coverages_quantiles_regime2}
\end{figure}


\end{document}